\newtheorem{definition}{Definition}
\newtheorem{proposition}{Proposition}
\newtheorem{claim}{Claim}
\def\Fig{{Figure}}
\def\H{\mathcal{H}}
\def\P{\mathcal{P}}
\def\L{\mathcal{L}}
\DeclareMathSymbol{\dv}{\mathord}{operators}{"3A}
\def\S{\mathbf{S}}
\journal{Nuclear Physics B}
\begin{document}

\begin{frontmatter}



\title{Non-Parametric Bayesian Inference for Partial Orders with Ties from Rank Data observed with Mallows Noise} 


\author{Chuxuan (Jessie) Jiang and Geoff K. Nicholls} 

\affiliation{organization={Department of Statistics, University of Oxford},
            addressline={Wellington Square}, 
            city={Oxford},
            postcode={OX1 2JD}, 
            country={United Kingdom}}

\begin{abstract}
Partial orders may be used for modeling and summarising ranking data when the underlying order relations are less strict than a total order. They are a natural choice when the data are lists recording individuals' positions in queues in which queue order is constrained by a social hierarchy, as it may be appropriate to model the social hierarchy as a partial order and the lists as random linear extensions respecting the partial order. In this paper, we set up a new prior model for partial orders incorporating ties by clustering tied actors using a Poisson Dirichlet process. The family of models is projective. We perform Bayesian inference with different choices of noisy observation model. In particular, we propose a Mallow's observation model for our partial orders and give a recursive likelihood evaluation algorithm. We demonstrate our model on the `Royal Acta' (Bishop) list data where we find the model is favored over well-known alternatives which fit only total orders. 
\end{abstract}



\begin{keyword}
Partial Order \sep DAGs \sep Ranking \sep Bayesian Non-Parametric Inference \sep Mallow's Noise Model \sep Social Hierarchy


\end{keyword}

\end{frontmatter}



\section{Introduction}\label{sec:intro}

Ranking methods are widely used in research and industry to provide priority ranks and otherwise summarise data. They play a role in decision support, medical research, chemistry, and many other areas. Order ranking methods can be roughly classified into two categories - total order ranking and partial order ranking. In our application we are interested in recovering hierarchical relations between individuals or \textit{actors}.

A \textit{total order} coincides with one's common understanding of `ranking' where strict order relations exist between each pair of actors. Correspondingly, most probabilistic ranking models reconstruct an underlying total order. In the first Thurstonian model \citep{thurstone1927law}, actors are ranked based on random Gaussian attributes. This was followed by paired comparison models, including the Babington-Smith model \citep{smith1950discussion} and the Bradley-Terry model \citep{bradley1952rank}. These models assign a probability distribution to total orders given the probabilities of pairwise preference $p_{ij}, i,j\in [n]$ where $[n]=\{1,2,...,n\}$. \cite{bradley1952rank} introduced a special parameterisation of this preference such that $p_{ij} = {\nu_i}/({\nu_i+\nu_j}),\ \nu_i, \nu_j \in \mathbb{R}^+$. The Bradley-Terry model is widely studied and there are several extensions and generalisations. \cite{mallows1957non} proposed a tractable variant - the class of Mallows models. Mallows models are built on common ranking distance measures, for example, Spearman’s rho (Mallows $\theta$ model) and Kendall’s tau (Mallows $\phi$ model). \cite{plackett1975analysis} and \cite{luce1959possible} extended the Bradley-Terry model to compare multiple actors and introduced the Plackett-Luce model. The Mallows and Plackett-Luce models are the most commonly used statistical models for total order ranking in homogeneous populations. \cite{pearce2024bayesianrankclustering} developed a Rank-Clustered Bradley-Terry-Luce (BTL) model that allows for ties among actors in a BTL model. They used a partition-based spike-and-slab fusion prior. This is relevant to our ties idea as we will explain in section \ref{sec:motivation}. However, their model considers ties within total orders while we consider ties in rankings which are only required to be partially ordered. 

Mixtures of ranking models have been proposed and may be appropriate for heterogeneous populations. They allow us to fit data with more than one underlying total order in the observation model. Partial order models can be thought of as very general mixture models with one mixture component for every total order that respects the partial order (which can be a huge number, and varies as we change the partial order). Two popular mixture models are the Mallows mixture model and the Plackett-Luce mixture model. \cite{murphy2003mixtures} applies the Mallows mixture model to ranked list data and includes many useful references to earlier work in the area. \cite{meila2012dirichlet} proposes a Dirichlet process mixture of generalised Mallow Models over discrete incomplete rankings. \cite{lu2014effective} learns the Mallow model and its mixture variant from pairwise-preference data. For the Plackett-Luce mixture model, see \cite{mollica2017bayesian}, where they introduce a Bayesian finite mixture of Plackett-Luce models given partially ranked data. \cite{liu2019learning} learns the Plackett-Luce model and its mixtures from partial order data. \cite{caron2014bayesian} developed a Dirichlet process mixture in their non-parametric Plackett-Luce model given incomplete (top-$k$) ranking data. 

However, total order based models are not always naturally descriptive for many real-world problems, as one may have pairs of incomparable actors between whom no meaningful underlying order relation exists. There may be a temptation to think of the order on such pairs as simply being uncertain, but the physical reality may be that no order exists. If we want the elements of the model to correspond to elements of reality, relations between actors need to be represented by a partial order. 

A \textit{partial order} $h=\{[n],\prec_h\}$ is an order relation that assigns a binary order\footnote{The binary relation $\prec_h$ is both irreflexive (the relation $i \prec_h i$ does not exist) and transitive (if $i\prec_h j$ and $j\prec_h k$, then $i \prec_h k$), where $i,j,k\in [n]$ and $i\neq j\neq k$.} $\prec_h$ over a set of actor-labels $[n]$. Partial orders on $[n]$ are in one-to-one correspondence with transitively-closed directed acyclic graphs (DAG's) with nodes $V=[n]$ and edges $E$, such that $(i \prec_h j) \in E $ and $(j \prec_h k) \in E$ gives $(i \prec_h k) \in E$, $i,j,k \in [n], \, i\neq j\neq k$. Denote by $\H_{[n]}$ the set of all partial orders on actor labels $[n]=\{1,2,...,n\}$. An example partial order\footnote{In this article, we visualise partial orders via their transitive reduction - this omits all edges implied by transitivity and is unique.} is shown in \Fig~\ref{POex}. A \textit{linear extension} $l_h$ of a partial order $h$ is a permutation of $[n]$ which respects $h$, so that lesser actors in $h$ never come before greater. See \Fig~\ref{LEex} for examples. We denote the set of all linear extensions of partial order $h$ by $\mathcal{L}[h]$. 

\begin{multicols}{2}
    \begin{minipage}{0.9\linewidth}
    \centering
    \begin{tikzpicture}[thick,scale=1.07, every node/.style={scale=0.8}]
        \node[draw, circle, minimum width=.1cm] (1) at (0, 1) {$1$};
        \node[draw, circle, minimum width=.1cm] (2) at (-1, -0.5) {$2$};
        \node[draw, circle, minimum width=.1cm] (3) at (1, 0) {$3$};
        \node[draw, circle, minimum width=.1cm] (4) at (1, -1) {$4$};
        \node[draw, circle, minimum width=.1cm] (5) at (0, -2) {$5$};
        \draw[-latex] (1) -- (2);
        \draw[-latex] (2) -- (5);
        \draw[-latex] (1) -- (3);
        \draw[-latex] (3) -- (4);
        \draw[-latex] (4) -- (5);
    \end{tikzpicture}
    
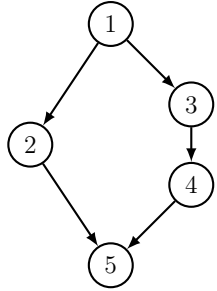
\captionof{figure}{A random partial order $h_0$ with 5 actors.}\label{POex}
    \end{minipage}%

    \begin{minipage}{0.9\linewidth}
    \centering
    \begin{tikzpicture}[thick,scale=0.8, every node/.style={scale=0.8}]
        \node[draw, circle, minimum width=.1cm] (1) at (0, 2) {$1$};
        \node[draw, circle, minimum width=.1cm] (2) at (0, 1) {$2$};
        \node[draw, circle, minimum width=.1cm] (3) at (0, 0) {$3$};
        \node[draw, circle, minimum width=.1cm] (4) at (0, -1) {$4$};
        \node[draw, circle, minimum width=.1cm] (5) at (0, -2) {$5$};
        \draw[-latex] (1) -- (2);
        \draw[-latex] (2) -- (3);
        \draw[-latex] (3) -- (4);
        \draw[-latex] (4) -- (5);
    \end{tikzpicture}
    \begin{tikzpicture}[thick,scale=0.8, every node/.style={scale=0.8}]
        \node[draw, circle, minimum width=.1cm] (1) at (0, 2) {$1$};
        \node[draw, circle, minimum width=.1cm] (2) at (0, 1) {$3$};
        \node[draw, circle, minimum width=.1cm] (3) at (0, 0) {$2$};
        \node[draw, circle, minimum width=.1cm] (4) at (0, -1) {$4$};
        \node[draw, circle, minimum width=.1cm] (5) at (0, -2) {$5$};
        \draw[-latex] (1) -- (2);
        \draw[-latex] (2) -- (3);
        \draw[-latex] (3) -- (4);
        \draw[-latex] (4) -- (5);
    \end{tikzpicture}
    \begin{tikzpicture}[thick,scale=0.8, every node/.style={scale=0.8}]
        \node[draw, circle, minimum width=.1cm] (1) at (0, 2) {$1$};
        \node[draw, circle, minimum width=.1cm] (2) at (0, 1) {$3$};
        \node[draw, circle, minimum width=.1cm] (3) at (0, 0) {$4$};
        \node[draw, circle, minimum width=.1cm] (4) at (0, -1) {$2$};
        \node[draw, circle, minimum width=.1cm] (5) at (0, -2) {$5$};
        \draw[-latex] (1) -- (2);
        \draw[-latex] (2) -- (3);
        \draw[-latex] (3) -- (4);
        \draw[-latex] (4) -- (5);
    \end{tikzpicture}
    
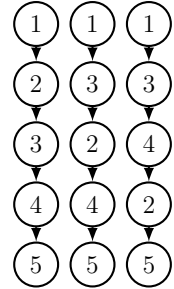
\captionof{figure}{All three linear extensions of partial order $h_0$ in \Fig~\ref{POex}.}\label{LEex}
    \end{minipage}
\end{multicols}

Partial orders express fundamental incomparability between actors, rather than a lack of knowledge of the underlying order relation or randomness in the realised order. Imagine the partial order $h_0$ in a hypothetical university setting, where actor 1 is the Head of Department, 2 is the Academic Administrator, 3 and 4 are the Deputy Head of Department and a lecturer respectively, and 5 is a student. Partial order $h_0$  represents a power hierarchy in which the relation $i\succ_{h_0} j$ expresses the reality that $i$ can direct $j$ and not {\it vis versa}. The Academic Administrator is not comparable with the Deputy Head of Department as neither manages the other. The lecturer answers to the Deputy Head, and therefore also to the Head, but not the Administrator, and so on. 

A total order is just a partial order with no incomparable pairs of actors, so assuming the underlying order is a partial order is weaker than assuming a total order. However, despite being a natural generalisation of rank, the use of partial orders in rank analysis is largely restricted to its use as a tool for summarising order relations obtained by fitting total order models, or as data in which partial orders replace lists, rather than as a component or ``parameter'' of the model itself. Here we list some exceptions. 

In settings in which the partial order is estimated as an explicit ``parameter'' of the model, both Frequentist and Bayesian methods have been applied. Work on Frequentist estimation includes \cite{beerenwinkel2007conjunctive} and \cite{gionis2006algorithms}. \cite{beerenwinkel2007conjunctive} uses partial orders to map the sequencing relations of different genome mutations. The paper develops a maximum likelihood partial order estimator for conjunctive Bayesian network (CBN) models given sets of binary mutation data occurring in orders constrained by a partial order. \cite{gionis2006algorithms} uses `bucket' orders\footnote{In a `bucket' order, the actors are arranged in layers - every actor is ordered with respect to actors in other layers, and any pair of actors in the same layer are either unordered or tied.}, a subclass of partial orders, to represent the temporal order of the fossil discovery sites in seriological data analysis. \cite{gionis2006algorithms} and \cite{feng2008discovering} introduce a pivot optimisation algorithm and a Bucket Gap algorithm respectively to approximate the `bucket' orders given pairwise/full rankings. Bayesian literature is limited. \cite{nicholls122011partial} estimates a partial order $h$ from list data assuming the data-lists are linear extensions drawn uniformly at random from the set of all linear extensions of $h$. \cite{nicholls2023bayesian} extends such model to the temporal setting. Unlike the above examples, \cite{nicholls122011partial} and \cite{nicholls2023bayesian} treats the partial order $h$ as a random variable. 

This paper extends \cite{nicholls122011partial} by proposing new prior and observation models for partial orders. In our setting the unknown partial order $h$ is observed indirectly through possibly incomplete lists of actor labels. The lists respect the order relations imposed by $h$, so observing the lists tells us something about $h$. For $i,j\in[n]$, if actor $i$ appears before actor $j$ in the list then we must have either $i\prec_h j$ or $i\|_h j$, i.e. either $j$ beats $i$ or $i$ and $j$ are incomparable in the partial order. Such a list is a linear extension of the partial order $h$ if it includes all $n$ actors and otherwise it is a linear extension of the suborder of actors in that list.

In our Bayesian analysis, we will have a prior over random partial orders and a posterior informed by a set of rank-order lists, each one a permutation of actor labels. We refer to this as \textit{list data}.
We will see that one natural observation model relating the lists to the partial order is to treat the lists as uniform random linear extensions of the partial order. This is motivated by considering list data recording the positions of actors in a queue in which higher status individuals come before those of lower status. A status hierarchy is a partial order, so if all queue orders respecting the hierarchy are equally likely, the queue will be a uniform random draw from the linear extensions of the partial order. This model is also justified by writing down a simple stochastic process model for the queue formation process. We return to this point in Section~\ref{NF}.

The distinguishing features of our work are of threefold. First, it is Bayesian: we work with probability distributions over partial orders. This makes it straightforward to quantify uncertainty, at least in principle. Second, our models are defined over the set of all partial orders $\H_{[n]}$ of $n$ actors rather than some subset of $\H_{[n]}$ such as bucket orders.  Third, our data are linear extensions of suborders, subject to possible recording errors as we discuss later. We further extend the basic latent variable model for partial orders given by \cite{nicholls122011partial} in three ways - we incorporate ties between actors, we allow the dimension of the latent space to be unknown a priori, and we fit observation models in which much more general noise processes are acting. These are needed for model realism. 

The paper is structured as follows: in the rest of this section, we motivate our tie and observation error models, and introduce some basic concepts. In Section \ref{sec:po-prior}, we write down our partial order model with ties, the $(PDP,\rho)$-partial order model. In Section \ref{sec:obs}, we discuss possible choices for the observation model. We consider a Mallows ranking model that allows error in the observed lists, so the observed lists need not be linear extensions but must be ``close'' to being linear extensions. In Section \ref{sec:bayesian-inference}, we set out Bayesian inference and identify some computational challenges, notably the problem of counting linear extensions in likelihood evaluations. Finally in Section \ref{sec:application}, we apply our method to `Royal Acta', a dataset containing the order of witnesses in legal documents from the twelfth century. In particular, we study the social hierarchy between bishops in two specific time periods. We compare the prior model with ties with other possible ranking models and show our model effectiveness. 

\subsection{Motivation}\label{sec:motivation}

We begin by considering ties. Ties between actors are a common phenomenon in reality. For example, \Fig~\ref{Tieex} shows a hierarchy involving a field marshal (FM), two generals (G1 \& G2), an earl (E) and a bishop (B). We suppose their ranking is determined by their titles, and there is no order between military and religious or noble titles, so the generals are equal. This can be interpreted in two ways. First, they must have the same order relations with other actors in the partial order. Secondly, we might additionally require that tied actors appear as an unsplit group in any list, that is, in a random order but sequentially with no other actors between them. More details in \ref{appx:tie2}. In this paper we interpret ties in the first way. The second use of tie models is of interest to us but is future work.

\begin{minipage}{0.9\linewidth}
    \centering
    \begin{tikzpicture}[thick,scale=1.07, every node/.style={scale=0.8}]
        \node[draw, circle, minimum width=.7cm] (1) at (-0.5, 1) {$FM$};
        \node[draw, circle, minimum width=.7cm] (2) at (-1, 0) {$G1$};
        \node[draw, circle, minimum width=.7cm] (3) at (0, 0) {$G2$};
        \node[draw, circle, minimum width=.7cm] (4) at (1, 0) {$E$};
        \node[draw, circle, minimum width=.7cm] (5) at (1, 1) {$B$};
        \node[draw=blue!60, rectangle, minimum width=2.5cm,minimum height=1cm] (6) at (-0.5, 0) {};
        \draw[-latex] (1) -- (6);
        \draw[-latex] (5) -- (4);
    \end{tikzpicture}
    
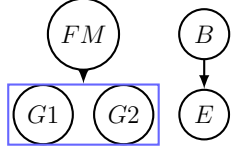
\captionof{figure}{The partial order between a field marshal ($FM$), two generals ($G1$ \& $G2$), a bishop ($B$) and an earl ($E$) if their order relation is determined by their titles.}\label{Tieex}
\end{minipage}

In the example the actor titles can be thought of as attributes or ``covariates'' which inform the partial order in a simple way. However, discrete attributes of this sort are often unknown to us (as in the examples below) and so we have to learn the tie structure. In our partial order model with ties - the $(PDP,\rho)$-partial order model - we use a Poisson Dirichlet process to model the clustering process. The partial order describes the order relation between the clusters. When there are unobserved attributes in the setting, this model respects the generative model. From a modelling perspective, it reduces the dimension of the latent variables needed to represent the partial order, and in turn increases the marginal likelihood. More importantly, the $(PDP,\rho)$-partial order model assigns higher prior probability on the `bucket' orders, or partial orders which are ``close'' to being `bucket' orders (see \ref{app:vspbo}). This is desirable as bucket orders are common in social hierarchies while allowing adequate prior probability on general partial orders. One nice feature of the families of prior models we write down for partial orders is that they are all \textit{projective} (every marginal of every distribution in the family is also in the family). 

We next motivate our models for ``noise'' in lists. Measurement or recording errors are common in data. For example, the lists we observe may be generated from linear extensions which respected the unknown underlying partial order but were then subject to random exchange of a small number of elements due to recording errors. There is little work fitting partial order models which include a noise component. \cite{beerenwinkel2007conjunctive} includes an error component in their mixture model. In \cite{nicholls122011partial} the error model treats errors as a random process of queue-jumping.  While this model is physically motivated, it only allows uni-directional queue-jumping in any given list realisation: errors occur when randomly chosen actors are promoted up (or down) the queue. \cite{jiang2023bayesian} extends \cite{nicholls122011partial} to propose a bi-directional queue-jumping model. However, this model is computationally costly and may only be suitable for partial orders with fast linear extension counting, e.g. the vertex-series-parallel partial orders\footnote{The vertex-series-partial orders are a class of partial orders that can be obtained by \textit{series} and \textit{parallel} operations. See \cite{jiang2023bayesian} for details.} in \cite{jiang2023bayesian}.

By contrast there is a large literature reconstructing an underlying total order from noisy list data. For example, in the Mallows model, the observed list will be ``close to'' but need not be equal to the central ranking under some distance measure. This suggests taking the Mallows model as a noise model, allowing errors in both directions away from a linear extension in a single list realisation. 
We constrain the central ranking of the Mallows model to be a linear extension of a partial order. The dispersion parameter $\phi$ governs how ``close'' the rank is to being a linear extension of the partial order. Depending on the value of $\phi$, the generative model can form ranks that are pure noise, or ranks that are exactly linear extensions of partial order $h$, or in between. The dispersion parameter controls the level of noise in our model. 

\subsection{Partial orders}\label{PO}

We now provide formal notations on partial orders. So far, we have given two representations of a partial order, as a partially ordered set, and as a transitively closed DAG\footnote{Note in this paper we present the transitive reduction of a partial order for easier visualisation. }. For computation it is convenient to code a partial order $h\in\H_{[n]}\footnote{We use the same symbols $h$ and $\H_{[n]}$ for a partial order and its space in all three representations;
we use $H$ to represent a random variable taking values in $\H_{[n]}$.} ,\ n\ge 1$ as a binary adjacency matrix $h \in \{0, 1\}^{n \times n}$. The rows and columns of $h$ correspond to the actors $[n]$. We take $h_{i,i}=0,\ i\in [n]$ and set $h_{i,j} = 1$ if and only if $i \succ_h j$. Two actors $i,j \in [n]$ are \textit{incomparable} $i\|_hj$, if neither $i\prec_h j$ nor $i \succ_h j$, and in this case $h_{i,j}=h_{j,i}=0$. A partial order is called a \textit{total order} if no pair of actors is incomparable. An \textit{empty order} is a partial order where no order relation is observed between any actors. An actor $m\in [n]$ is a maximal element of partial order $h$ if there exists no $i \in [n]$ that has $i \succ_h m$ in $h$. We denote the set of maximal elements of $h$ as $\max(h)=\{m\in [n]: h_{i,m}=0,\ \forall i\in[n]\}$.

A \textit{linear extension} $l = (l_1, l_2, \dots, l_n)$ of partial order $h$ is a permutation over $[n]$ given that its order does not violate $h$, so for $1 \leq a < b \leq n$ we must have either $l_a\succ_h l_b$ or $l_a\|_h l_b$ (``higher status come first''). Let $\P_n$ be the set of all possible permutations of $1,...,n$. The set of linear extensions of $h$ is then $$\mathcal{L}[h] = \{l\in \P_n|h_{l_b,l_a} \neq 1, \, \forall 1\le a<b\le n\}.$$ \Fig~\ref{LEex} shows all three possible linear extensions of the partial order $h_0$ in \Fig~\ref{POex}. 

A \textit{sub-order} $h[o]$ of a partial order $h\in \H_{[n]}$ restricts $h$ to a subset $o \subseteq [n],\ o=\{o_1,...,o_m\}$ say. All the order relation in $h$ are inherited in $h[o]$ so the matrix representation is $h_{o,o}$ - we simply retain rows and columns $o_1,...,o_m$ of $h$. As an example, the suborder of the partial order in \Fig~\ref{POex} restricted to $o=\{2,3,5\}$ is shown in \Fig~\ref{subPO} with its linear extensions in \Fig~\ref{subLE}. A \textit{chain} of $h$ is a sub-order that is also a total order. It extracts a strictly ordered sequence in a partial order. The \textit{length} of a chain is the number of nodes in this suborder. The length of the longest chain(s) is called the \textit{depth}, $D(h)$ say, of partial order $h\in \H_{[n]}$, with $1\le D(h)\le n$. For example the depth of $h_0$ in \Fig~\ref{POex} is four.

\begin{multicols}{2}
    \begin{minipage}{0.9\linewidth}
    \centering
    \begin{tikzpicture}[thick,scale=1.07, every node/.style={scale=0.8}]
        \node[draw, circle, minimum width=.1cm] (2) at (-1, 0.4) {$2$};
        \node[draw, circle, minimum width=.1cm] (4) at (1, 0.4) {$3$};
        \node[draw, circle, minimum width=.1cm] (5) at (0, -1) {$5$};
        \draw[-latex] (2) -- (5);
        \draw[-latex] (4) -- (5);
    \end{tikzpicture}
    
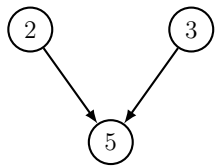
\captionof{figure}{A suborder $h_0[o]$ with actors $o=\{2,3,5\}$ of the partial order $h_0$ in \Fig~\ref{POex}.}\label{subPO}
    \end{minipage}%
    
    \begin{minipage}{0.9\linewidth}
    \centering
    \begin{tikzpicture}[thick,scale=0.8, every node/.style={scale=0.8}]
        \node[draw, circle, minimum width=.1cm] (2) at (0, 1) {$2$};
        \node[draw, circle, minimum width=.1cm] (4) at (0, 0) {$3$};
        \node[draw, circle, minimum width=.1cm] (5) at (0, -1) {$5$};
        \draw[-latex] (2) -- (4);
        \draw[-latex] (4) -- (5);
    \end{tikzpicture}
    \begin{tikzpicture}[thick,scale=0.8, every node/.style={scale=0.8}]
        \node[draw, circle, minimum width=.1cm] (2) at (0, 1) {$3$};
        \node[draw, circle, minimum width=.1cm] (4) at (0, 0) {$2$};
        \node[draw, circle, minimum width=.1cm] (5) at (0, -1) {$5$};
        \draw[-latex] (2) -- (4);
        \draw[-latex] (4) -- (5);
    \end{tikzpicture}
    
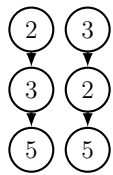
\captionof{figure}{All two linear extensions for the suborder $h_0[o]$ in \Fig~\ref{subPO}.}\label{subLE}
    \end{minipage}
\end{multicols}

The \textit{intersection order} $h^{int}$ of $K \in \mathbb{Z}^+$ permutations $l^{(k)}\in \P_n,\ k=1,...,K$ is a partial order $h^{int}\in \H_{[n]}$ such that for $i,j\in [n],\ i\ne j$ we have $i \prec_{h^{int}} j$ if and only if $i$ appears before $j$ in all $K$ permutations. If we think of the permutations as total orders then the intersection order shows all the order relations which are attested in every permutation. If a pair of actors appear in different orders in different permutations then there will be no relation between the actors in the intersection order. If $d^{(k)}_i=\{j\in [n]: l_j^{(k)}=i\}$ is the index of $i$ in list $k$, then the matrix representing the partial order defined by the intersection of lists satisfies $h^{int}_{i,j}=1$ if and only if $d^{(k)}_i<d^{(k)}_j,\ k=1,...,K$.

We define the \textit{dimension} of partial order $h$ to be the minimum number of linear extensions of $h$ whose intersection is $h$. If the observation model for the data is that the data are uniform random linear extensions of $h$, and the data is noise-free then the intersection-order is the maximum likelihood estimator of the true partial order  \citep{beerenwinkel2007conjunctive}. This works because the intersection-order has every order relation allowed by the data, and none which are not, so it is the partial order with the most order relations that admits every list in the data as a linear extension. It has the smallest number of linear extensions, and hence, maximises the likelihood for this observation model.\footnote{See \ref{appx-MLE} for proof.} In this setting the intersection order converges in probability to the true partial order as the number of sampled linear extensions goes to infinity.

\section{Prior Models for Random Partial Orders}\label{sec:po-prior}

Several probability distributions have been proposed for random partial orders in the combinatorics literature. See \cite{brightwell1993models} for an overview. Of these the most promising for development as a prior for Bayesian statistical work is \cite{winkler1985random}'s latent variable model. Following \cite{nicholls122011partial} we refer to this model as the $(K,n)$-partial order model. Before defining this model we introduce two prior properties which are desirable in our setting.

First of all in our application, and we expect this to be common, the depth $D(H)$ of the unknown true partial order $H$ is of particular interest. In some cases prior knowledge is given in terms of depth and similarly hypotheses are often statements about partial order depth. We seek a prior $H\sim \pi_{[n]}(\cdot),\ H\in \H_{[n]}$ for $H$ with the property that $H$ has approximately uniform depth distribution, $D(H)\sim \mathcal{U}\{1,...,n\}$. This is done so that the prior is non-informative with respect to hypotheses concerning depth. However, the analyst may use this degree of freedom (the prior for $\rho$, latent dimension and ties) to impose some non-uniform prior weight on the depth.

Secondly, a family of priors $\pi_{[n]}(h)$ is \textit{marginally consistent} or \textit{projective} if every marginal of every distribution in the family is also in the family. For every $n\ge 2$ and every $o,o'\subseteq [n]$ with $o \subset o'$, marginal consistency holds if 
\begin{equation}
    \pi_o(h)=\sum_{\substack{h'\in\H_{o'}\\h'[o]=h}} \pi_{o'}(h')\quad \mbox{for all $h\in \H_{o}$}.
\end{equation}\label{MarC}
Physical considerations suggest that our prior beliefs about relations between a subset of actors should be marginally consistent. We are assuming that the relation between any pair of actors in a social hierarchy is not affected by the presence or absence of any other actor.

\subsection{Properties of the uniform distribution on Partial Orders}
We illustrate a ``natural'' family of priors which is {\it not} projective and is additionally strongly informative on depth. 
It may seem appealing to take as prior the uniform distribution over partial orders.

\begin{definition}[The Uniform Distribution over Partial Orders]

For partial order $h\in \H_{[n]}$, the uniform distribution over partial orders $\pi_{u,[n]}(h)$ say, assigning equal prior probability to each partial order in $\H_{[n]}$, is
\begin{equation}
    \pi_{u,[n]}(h) = \frac{1}{|\H_{[n]}|}.
\end{equation}
    
\end{definition}

First of all the uniform priors $\pi_{u,[n]},\  n\ge 1$ have low sampling depth. As is shown in \cite{kleitman1975asymptotic}, if $h\sim \pi_{u,[n]}$ then $Pr(D(h)=3)\to 1$ as $n\rightarrow\infty$. This surprising result (why 3?) rules out the uniform distribution as a prior for most applied work since, even at small $n$, it favours partial orders of low depth, and has no parameter we can use to control the distribution over depth. 

Secondly, the family of uniform priors $\pi_{u,[n]},\  n\ge 1$ is not projective \citep{winkler1985random}. For a counter-example, let $h_2=\{V_2,E_2\}\in\H_{[2]}$ be a partial order on $2$ actors with $V_2=\{1,2\}$ and edge set $E_2$. There are $3$ choices for $E_2$: $E_2 = \emptyset$ (the empty partial order), $E_2 = \{(1,2)\}$ and $E_2 = \{(2,1)\}$. If there are three actors $h=(V_3,E_3)\in\H_{[3]}$ with $V_3=\{1,2,3\}$ then there are 19 possible edge sets $E_3$, depicted in the \ref{appx-edge}. In equation \ref{MarC}, $o=[2]$, $o'=[n]=[3]$ and 
$$\text{LHS} = \pi_{u,[2]}(h)=\frac{1}{3}$$ 
while 
\begin{align*}
\text{RHS} &= \sum_{\substack{h'\in\H_{[3]}\\h'[o]=h}} \pi_{u,[3]}(h') \\
&=\frac{|\{h'\in\H_{[3]}:  h'[o]=h\}|}{19}\neq \text{LHS}, 
\end{align*}
where $|A|$ denotes the cardinality of set $A$, and the last step holds because we cannot divide 19 into 3 groups of equal size. This counter-example holds between $n=2,3$ and extends straightforwardly to many larger $n$. The uniform distribution over partial orders is therefore not projective.

\subsection{The $(K,n)$-Partial Order Model}
 
The $(K,n)$-partial order model is defined in \cite{winkler1985random} and extended in \cite{nicholls122011partial}. This generative model adapts a latent matrix approach to represent a random partial order. The $(K,n)$-partial order model proposes a parameter $K\in \mathbbm{Z}^+$ - the fixed column dimension parameter in a random latent matrix $Z \in {[0,1]}^{n\times K}$ - to control the depth distribution. 
\begin{definition}[The $(K,n)$-Partial Order Model \citep{winkler1985random}]
    Let $Z \in \mathbb{R}^{n\times K}$ be a random latent matrix with fixed column dimension $K\in\mathbbm{Z}^+$. The $(K,n)$-partial order model defines the entries of $Z$ as $$Z_{i,k} \overset{i.i.d.}{\sim} \mathcal{U}[0,1], \forall (i,k)\in [n]\times [K]. $$
    Write $Z_{i,{1\dv K}}\succ Z_{j,{1\dv K}}$ if $Z_{i,k} > Z_{j,k} \, \forall k \in \{1, \dots, K\}, i,j \in [n]$. We define mapping $h:{[0,1]}^{n\times K}\to \H_{[n]}$ such that 
    \begin{equation}\label{eq:hz}
        h(Z)=\{h\in \H_{[n]}:i \succ_h j \text{ if and only if } Z_{i,{1\dv K}}\succ Z_{j,{1\dv K}} \}.
    \end{equation}
    (The original paper by \cite{winkler1985random} took $K=2$).
\end{definition}
The prior for partial orders under the $(K,n)$-partial order model is 
\begin{equation}
    \pi_{[n]}(h)=\int_{[0,1]^{n\times K}}\mathbbm{1}_{\{h(Z)=h\}}\pi(Z)dZ.
\end{equation}
The integral in $\pi(h)$ would be awkward. However, \cite{nicholls122011partial} carry out inference using the latent variable $Z$. This is in general non-identifiable, but this is not a concern as $h$ is the parameter of interest. \cite{nicholls122011partial} call $Z_{i,{1\dv K}}\in [0,1]^K$ the ``latent path'' (over the column index) for actor $i$. The condition $Z_{i,{1\dv K}}\succ Z_{j,{1\dv K}}$ means the paths are non-crossing and $Z_{i,{1\dv K}}$ lies entirely above $Z_{j,{1\dv K}}$. \Fig~\ref{fig:latentp1} shows some possible latent paths that generate the partial order $h_0$ in \Fig~\ref{POex}. 
The mapping $h:{[0,1]}^{n\times K}\to \H_{[n]}$ is surjective: one partial order can be represented by many latent matrices; however, each latent matrix uniquely determines a partial order. 

\begin{figure}[h!]
\centering
\includegraphics[width=0.7\linewidth]{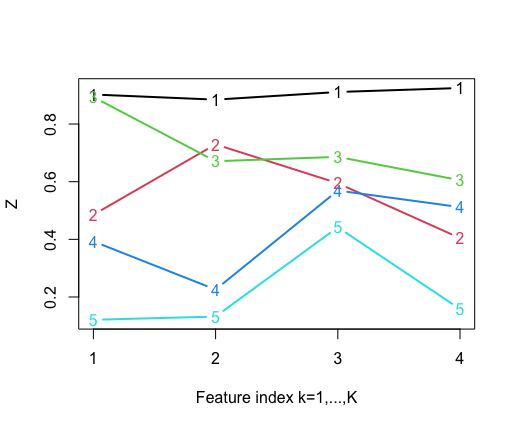}
\caption{Latent paths of a random $Z$ with $K = 4, n=5$ defined using the $(K,n)$-partial order model and yielding the partial order $h_0$ in Figure~\ref{POex}.}
\label{fig:latentp1}
\end{figure}

Any partial order can be realised under this prior if we take $K$ at least $\lfloor n/2\rfloor$ \citep{watt2015inference,nicholls2023bayesian}. This can be seen in an equivalent construction of the prior. For $j\in [K]$, let $l^{(j)}(Z)=(R_{1,j},...,R_{n,j})$ give the rank-vector of the $Z$-entries in column $j$, so $R_{a,j}=\sum_{i'=1}^n\mathbbm{1}_{\{Z_{i',j}\ge Z_{a,j}\}}$. Since these are exchangeable, $l^{(j)}\sim \mathcal{U}(\P_{[n]})$ is a uniform random permutation of $[n]$. As discussed in \cite{watt2015inference}, the rule defining $h=h(Z)$ is identical to taking $h$ to be the intersection order of $l^{(j)}$ over $j=1,...,K$ (thinking of these as total orders). It follows that any partial order can be realised if we take $K$ at least $\lfloor n/2\rfloor$: a partial order $h$ on $n$ nodes can always be expressed as the intersection of $\lfloor n/2\rfloor$ total orders \citep{Hiraguchi51},  simply repeating total orders if fewer are needed, and any set of rank vectors $l^{(j)}$ over $j=1,...,K$ can be generated with non-zero probability. 

\cite{winkler1985random} shows that the $(K,n)$-partial order model is projective.
This is intuitively obvious:
removing an actor is equivalent to removing one of the paths in Figure~\ref{fig:latentp1}. Since the paths are independent, and removing one doesn't alter the relations between the paths which remain, the probability to get any particular partial order on $n-1$ labeled actors is the marginal of the probability to get that partial order on any set of $n$ labeled actors containing the original $n-1$ actor labels. This kind of marginal consistency is typical for latent variable models for networks and graphs.

We now state and prove this result, partly to show how the notation works in this simple case. The proof for the more interesting case of the prior with ties below follows a similar path.
\begin{proposition}\label{prop:projective-kn}
The $(K,n)$-partial order model distributions are projective.
\end{proposition}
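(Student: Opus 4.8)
The plan is to work directly with the latent-variable representation $\pi_{o'}(h')=\int_{[0,1]^{|o'|\times K}}\mathbbm{1}_{\{h(Z)=h'\}}\,\pi(Z)\,dZ$ and to exploit the two structural facts that make this model tractable: the order relation between any pair of actors $i,j$ depends only on rows $i$ and $j$ of $Z$ (their latent paths), and the rows of $Z$ are independent, since their entries are i.i.d.\ uniform. Fix $o\subset o'$ and $h\in\H_o$. Rather than reduce to single-actor removal by an induction, I would handle the whole removed set $o'\setminus o$ at once, as the argument is identical.

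First I would rewrite the right-hand side of the marginal-consistency identity. Pulling the finite sum inside the integral and collapsing the indicators gives
\begin{equation*}
\sum_{\substack{h'\in\H_{o'}\\h'[o]=h}}\pi_{o'}(h')=\int_{[0,1]^{|o'|\times K}}\Bigl(\sum_{\substack{h'\in\H_{o'}\\h'[o]=h}}\mathbbm{1}_{\{h(Z)=h'\}}\Bigr)\pi(Z)\,dZ=\int_{[0,1]^{|o'|\times K}}\mathbbm{1}_{\{h(Z)[o]=h\}}\,\pi(Z)\,dZ,
\end{equation*}
because for each fixed $Z$ exactly one $h'$ in the index set can equal $h(Z)$, namely $h'=h(Z)$, and that $h'$ appears in the sum precisely when $h(Z)[o]=h$.

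Second, and this is the crux, I would establish the locality identity $h(Z)[o]=h(Z_{o,{1\dv K}})$, where $Z_{o,{1\dv K}}$ denotes the submatrix of $Z$ retaining only the rows indexed by $o$. This holds because, for $i,j\in o$, the suborder keeps the relation $i\succ_{h(Z)}j$, which by \eqref{eq:hz} is equivalent to $Z_{i,{1\dv K}}\succ Z_{j,{1\dv K}}$, and that is exactly the relation defining $h(Z_{o,{1\dv K}})$; so the two partial orders on $o$ have identical relations and are equal. Consequently the integrand above depends on $Z$ only through $Z_{o,{1\dv K}}$. Writing $\pi(Z)=\pi(Z_{o,{1\dv K}})\,\pi(Z_{o'\setminus o,{1\dv K}})$ by row-independence and applying Fubini, I integrate out the rows $o'\setminus o$ first; this inner integral is the integral of a probability density over its full support and equals $1$. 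What remains is $\int \mathbbm{1}_{\{h(Z_{o,{1\dv K}})=h\}}\,\pi(Z_{o,{1\dv K}})\,dZ_{o,{1\dv K}}$, which is exactly $\pi_o(h)$, the left-hand side.

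The only real obstacle is making the locality identity $h(Z)[o]=h(Z_{o,{1\dv K}})$ watertight: it is visually obvious from the ``remove a path'' picture, but it must be argued from the definitions of both the suborder operation and the map $h(\cdot)$, since everything downstream — the factorisation of the density by independence and the vanishing of the integral over the deleted rows — is routine once the integrand is known to ignore those rows. The interchange of the finite sum with the integral, and the use of Fubini on a bounded indicator integrand over the compact cube, need no further justification.
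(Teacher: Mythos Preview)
Your proof is correct and follows essentially the same approach as the paper: both hinge on the locality identity $h(Z)[o]=h(Z_{o,{1\dv K}})$ and the row-independence of $Z$. The only difference is presentational — you handle an arbitrary $o\subset o'$ in one pass via explicit integral manipulation (collapsing the indicator sum, then Fubini), whereas the paper treats the special case $o=[n-1]$ in distributional language and leaves the extension to general $o$ implicit.
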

Proof: See \ref{appx-kr-projectivity}.

Although projective, the $(K,n)$-partial order model offers limited control over depth. In this model, $K$ controls the typical depth $D(h(Z))$ of the partial order. When $K=1$, $\Pr(D(h(Z))=n)=1$ since there are no paths to cross. At small $K\ge 1$ the probability that $Z_{i,{1\dv K}}\succ Z_{j,{1\dv K}}$ is relatively high, as paths are short, so an order between $i$ and $j$ is more likely and partial orders have greater depth. As $K$ increases, random $(K,n)$-partial orders tend to have lower depth, and $\lim_{K\to \infty}\Pr(D(h(Z))>1)= 0$. However, $K$ needs to be at least $\lfloor n/2\rfloor$ if we want $h(Z)$ to be able to realise any partial order in $\H_{[n]}$. At these $K$-values, $D(h(Z))\ll n$, typically. 

\subsection{The $(K,n,\rho)$-Partial Order Model}



The issue of a depth distribution concentrated on small values is addressed in \cite{nicholls122011partial} and \cite{watt2015inference} where $K$ is fixed and a depth parameter $\rho\in [0,1)$ is introduced to control depth. This $(K, n, \rho)$-model for random partial orders is an extension to the $(K, n)$-model. It replaces the uniform random entries in $Z$ with a distribution which correlates entries within rows (but retains independence between rows). The $i$-th row of $Z$, $Z_{i,{1\dv K}}=(Z_{i,1},...,Z_{i,K})$, gives $K$ attribute-values for actor $i \in [n]$. If the entries in $Z_{i,{1\dv K}}$ are more strongly correlated, then the paths $Z_{i,{1\dv K}}$ are relatively ``flatter'', and the probability two paths do not cross is higher, so finally $h(Z)$ tends to have relatively higher depth. 

\begin{definition}[The $(K,n,\rho)$-Partial Order Model \citep{nicholls122011partial}]\label{def:knrho}
    Let $\Sigma_{\rho}$ be a $K\times K$ correlation matrix with $\Sigma_{\rho,i,i} = 1$ and $\Sigma_{\rho,i,j} = \rho,\ i \neq j$, $i,j\in [K]$. In the $(K,n,\rho)$-model
\begin{equation}
    Z_{i,{1\dv K}}|\rho \stackrel{i.i.d.}{\sim} \mathcal{MVN}(\mathbf{0}, \Sigma_\rho), \, i \in \{1, \dots, n\}. 
\end{equation}
Let 
\[\pi_Z(Z|\rho)=\prod_{i=1}^n \mathcal{MVN}(Z_{i,{1\dv K}};\mathbf{0}, \Sigma_\rho)\] 
denote the joint prior for $Z$ given $\rho$. The partial order $h=h(Z)$ is constructed according to map (\ref{eq:hz}). 
\end{definition}

There is considerable freedom in choosing the $Z$-distribution, but correlating entries within rows while maintaining independence across rows seems key to controlling depth whilst ensuring distributions are projective. In \Fig~\ref{fig:latentp2}, we show another set of latent paths for the partial order $h_0$ in Figure~\ref{POex} simulated under the $(K,n,\rho)$-partial order model with depth parameter $\rho=0.9$. 

%
We would like the marginal prior for $H$ determined by $\rho\sim\pi_\rho(\cdot)$, $Z\sim \pi_Z(\cdot|\rho)$ and $H=h(Z)$ 
\begin{equation}\label{prior-h}
    \pi_{[n]}(h)=\int_0^1 \int_{Z:h(Z)=h}\pi_Z(Z|\rho)\pi_\rho(\rho)dZd\rho,
\end{equation}
to have approximately uniform marginal depth distribution. Simulations reported in \cite{nicholls122011partial} suggest that $\rho\sim \mbox{Beta}(1,1/6)$ gives a reasonably
flat distribution for $D(H)$ for $n$ in the range of interest to us.

\begin{figure}[ht]
\centering
\includegraphics[width=0.7\linewidth]{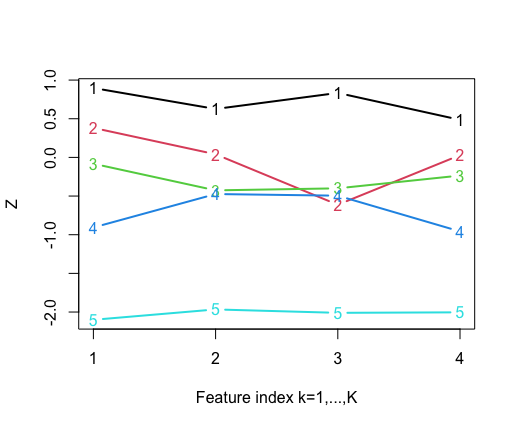}
\caption{Latent paths of a random $Z$ defined using the $(K,n,\rho)$-partial order model for the partial order $h_0$ where $K = 4, n = 5$ and $\rho = 0.9$.}
\label{fig:latentp2}
\end{figure}

Like the $(K,n)$-model, the $(K, n,\rho)$-model can be expressed as an intersection of $K$ linear extensions, which are the rank-vectors of the $Z$-columns. Any set of rank vectors can be realised, but they are now correlated. The choice of $K$ remains an important model hyperparameter. We discuss below how $K$ can be treated as a parameter and estimated alongside the other parameters. 
Because the rows of $Z$ are conditionally independent given $\rho$ in the $(K, n, \rho)$-model, we can show that the family of priors $\pi_{[n]}(h|\rho)$ is projective for every fixed $\rho$, by the same reasoning given in Proposition~\ref{prop:projective-kn} for the $(K, n)$-model. 
\begin{proposition}
    The $(K,n,\rho)$-partial order distributions are projective. 
\end{proposition}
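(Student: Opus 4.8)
The plan is to reduce everything to the conditional model at fixed $\rho$ and then integrate, exactly as flagged in the text. First I would fix $\rho\in[0,1)$ and prove projectivity for the conditional family $\pi_{[n]}(\cdot\,|\rho)$. The crucial structural fact is a restriction identity for the map $h(\cdot)$: for any $o\subseteq[n]$ and any $Z\in\mathbb{R}^{n\times K}$,
\begin{equation*}
    h(Z)[o]=h(Z_{o,{1\dv K}}),
\end{equation*}
where $Z_{o,{1\dv K}}$ denotes the $|o|\times K$ submatrix obtained by retaining the rows indexed by $o$. This is immediate from (\ref{eq:hz}): for $i,j\in o$ the relation $i\succ_{h(Z)}j$ is determined solely by the comparison $Z_{i,{1\dv K}}\succ Z_{j,{1\dv K}}$, which involves only rows $i$ and $j$, and the suborder $h(Z)[o]$ simply keeps precisely these relations.

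Second, I would identify the marginal law of the submatrix $Z_{o,{1\dv K}}$. Under Definition~\ref{def:knrho} the rows of $Z$ are i.i.d.\ $\mathcal{MVN}(\mathbf{0},\Sigma_\rho)$ conditional on $\rho$, and $\Sigma_\rho$ is a fixed $K\times K$ matrix that does not depend on $n$. Hence integrating out the rows indexed by $[n]\setminus o$ leaves
\begin{equation*}
    \pi_Z(Z_{o,{1\dv K}}\,|\,\rho)=\prod_{i\in o}\mathcal{MVN}(Z_{i,{1\dv K}};\mathbf{0},\Sigma_\rho),
\end{equation*}
which is exactly the latent distribution of the $(K,|o|,\rho)$-model on actor set $o$. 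Combining this with the restriction identity, the pushforward of $\pi_Z(\cdot\,|\rho)$ through the map $Z\mapsto h(Z)[o]$ coincides with the pushforward of the $(K,|o|,\rho)$ latent distribution through $h(\cdot)$; that is, for $o\subset o'$,
\begin{equation*}
    \sum_{\substack{h'\in\H_{o'}\\ h'[o]=h}}\pi_{o'}(h'\,|\rho)=\pi_o(h\,|\rho)\qquad\text{for all }h\in\H_o,
\end{equation*}
which is marginal consistency (\ref{MarC}) at fixed $\rho$.

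Finally I would remove the conditioning on $\rho$. Because $\rho$ is a single global parameter shared by all actors rather than an actor-indexed quantity, the same $\rho$ enters both the $(K,n,\rho)$- and $(K,|o|,\rho)$-models, so I can integrate the conditional identity against $\pi_\rho(\rho)\,d\rho$ and interchange the finite sum over $h'$ with the integral by linearity. Using (\ref{prior-h}) this yields $\pi_o(h)=\sum_{h':h'[o]=h}\pi_{o'}(h')$, the unconditional projectivity claim.

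The steps are individually routine; the only point that genuinely needs care is the restriction identity $h(Z)[o]=h(Z_{o,{1\dv K}})$, since projectivity would fail if the generated relation between two actors could depend on the latent values of a third. The coordinatewise-dominance definition of $h(\cdot)$ rules this out, and it is precisely the conditional row-independence of $Z$ (together with $\Sigma_\rho$ being independent of $n$) that makes the marginalization collapse cleanly onto the lower-dimensional model. This mirrors the argument for Proposition~\ref{prop:projective-kn}, the only change being the within-row correlation structure, which is irrelevant to the cross-row marginalization.
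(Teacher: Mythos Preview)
Your proof is correct and follows exactly the same route as the paper: establish conditional projectivity at fixed $\rho$ by the argument of Proposition~\ref{prop:projective-kn} (row independence plus the restriction identity $h(Z)[o]=h(Z_{o,{1\dv K}})$), then integrate out $\rho$. You have simply written out in full the steps the paper compresses into a one-line reference back to Proposition~\ref{prop:projective-kn}.
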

\begin{proof} 
In the notation of Equation~\ref{MarC}, if $o\subset o'\subseteq [n]$, $h'\sim \pi_{o'}(\cdot|\rho)$ and $h\sim \pi_o(\cdot|\rho)$ then the conditionals $H'[o]|\rho\sim H|\rho$ are same for every $\rho$ (the proof is essentially unchanged from Proposition~\ref{prop:projective-kn}), so the marginals $H'[o]\sim H$ are also equal in distribution.
\end{proof}
%

\subsection{The $(PDP,\rho)$-Partial Order Model}\label{Tie}

Depth control in the $(K, n, \rho)$-model can be inadequate in certain scenarios. In addition, its continuous latent matrix representation makes it impossible to model equality between actors. This paper extends the $(K, n, \rho)$-model to consider ties and to provide further control over depth distribution. We call this extended model \textit{the $(PDP,\rho)$-partial order model}. 

\subsubsection{Partial order with Ties}

Latent variable models are powerful and expressive tools for modelling partial orders. However, in the model we have defined the rows of $Z$ to be almost surely distinct. In some settings we may have prior information that some actors are equivalent in the sense that some groups of actors have the same order relations with other groups of actors. We don't know the groups, or how they are related, but we expect there are groups. 
As illustrated in the field marshal and generals example in Section \ref{sec:intro}, this scenario leads to a prior distribution over partial orders where we allow ties between some actors. We define a tied relation as follows.
\begin{definition}[Tie Relations in partial orders]
In a partial order with ties $h^*$, if two actors $i, j \in [n]$ are tied ($i \sim_{h^*} j$), then for any $v\in [n], v\neq i \neq j$,
\begin{enumerate}
    \item $i \succ_{h^*} v$ if and only if $j \succ_{h^*} v$;
    \item $i \prec_{h^*} v$ if and only if $j \prec_{h^*} v$;
    \item $i\|_{h^*} v$ if and only if $j\|_{h^*} v$.
\end{enumerate}
\end{definition}
In Section \ref{PO}, we represent a partial order $h$ using a binary adjacency matrix $h \in \{0,1\}^{n\times n}$, such that $h_{i,j}=1$ if and only if actor $i\succ_h j$. We extend this notation so that if there is a tie between actors $i$ and $j$ in $h^*$, then $h^*_{i,j}=h^*_{j,i}=1$. We use $\H_{[n]}^*$ to represent the class of \textit{partial orders with ties}. The original partial order class $\H_{[n]} \subseteq \H_{[n]}^*$.
\begin{definition}[Binary Adjacency Representation for Partial Orders with Ties]
A partial order with ties $h^*\in\H_{[n]}^*$ can be represented by a binary adjacency matrix $h^* \in \{0,1\}^{n\times n}$, where for $i, j \in [n], i\neq j$,
    \begin{enumerate}
        \item $i \succ_{h^*} j \text{ if and only if } H_{i,j}=1 \text{ and } H_{j,i}=0$;
        \item $i \sim_{h^*} j \text{ if and only if } H_{i,j}=H_{j,i}=1$; and 
        \item $i \:\|_{h^*}\: j \text{ if and only if } H_{i,j}=H_{j,i}=0$. 
    \end{enumerate}
\end{definition}
As an example, if we take the partial order $h_0$ in \Fig~\ref{POex} and tie actors $3 \sim_{h_0^*} 4$ the resulting tied partial order is shown in \Fig~\ref{POTex}. The new partial order with ties $h_0^*$ is shallower (the depth is 3) and has more linear extensions as is shown in \Fig~\ref{LETex}. 
\begin{multicols}{2}
    \begin{minipage}{0.9\linewidth}
    \centering
    \begin{tikzpicture}[thick,scale=1.07, every node/.style={scale=0.8}]
        \node[draw, circle, minimum width=.1cm] (1) at (0, 1) {$1$};
        \node[draw, circle, minimum width=.1cm] (2) at (-1, -0.5) {$2$};
        \node[draw=blue!60, rectangle, minimum width=2cm,minimum height=1cm] (6) at (1, -0.5) {};
        \node[draw, circle, minimum width=.1cm] (3) at (0.6, -0.5) {$3$};
        \node[draw, circle, minimum width=.1cm] (4) at (1.4, -0.5) {$4$};
        \node[draw, circle, minimum width=.1cm] (5) at (0, -2) {$5$};
        \draw[-latex] (1) -- (2);
        \draw[-latex] (2) -- (5);
        \draw[-latex] (1) -- (6);
        \draw[-latex] (6) -- (5);
    \end{tikzpicture}
    
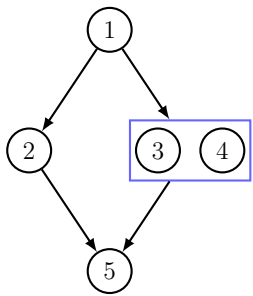
\captionof{figure}{A tied partial order of partial order $h_0$ with cluster $S = (1,2,(3,4),5)$.}\label{POTex}
    \end{minipage}%
    
    \begin{minipage}{0.9\linewidth}
    \centering
    \begin{tikzpicture}[thick,scale=0.8, every node/.style={scale=0.8}]
        \node[draw, circle, minimum width=.1cm] (1) at (0, 2) {$1$};
        \node[draw, circle, minimum width=.1cm] (2) at (0, 1) {$2$};
        \node[draw, circle, minimum width=.1cm] (3) at (0, 0) {$3$};
        \node[draw, circle, minimum width=.1cm] (4) at (0, -1) {$4$};
        \node[draw, circle, minimum width=.1cm] (5) at (0, -2) {$5$};
        \draw[-latex] (1) -- (2);
        \draw[-latex] (2) -- (3);
        \draw[-latex] (3) -- (4);
        \draw[-latex] (4) -- (5);
    \end{tikzpicture}
    \begin{tikzpicture}[thick,scale=0.8, every node/.style={scale=0.8}]
        \node[draw, circle, minimum width=.1cm] (1) at (0, 2) {$1$};
        \node[draw, circle, minimum width=.1cm] (2) at (0, 1) {$3$};
        \node[draw, circle, minimum width=.1cm] (3) at (0, 0) {$2$};
        \node[draw, circle, minimum width=.1cm] (4) at (0, -1) {$4$};
        \node[draw, circle, minimum width=.1cm] (5) at (0, -2) {$5$};
        \draw[-latex] (1) -- (2);
        \draw[-latex] (2) -- (3);
        \draw[-latex] (3) -- (4);
        \draw[-latex] (4) -- (5);
    \end{tikzpicture}
    \begin{tikzpicture}[thick,scale=0.8, every node/.style={scale=0.8}]
        \node[draw, circle, minimum width=.1cm] (1) at (0, 2) {$1$};
        \node[draw, circle, minimum width=.1cm] (2) at (0, 1) {$3$};
        \node[draw, circle, minimum width=.1cm] (3) at (0, 0) {$4$};
        \node[draw, circle, minimum width=.1cm] (4) at (0, -1) {$2$};
        \node[draw, circle, minimum width=.1cm] (5) at (0, -2) {$5$};
        \draw[-latex] (1) -- (2);
        \draw[-latex] (2) -- (3);
        \draw[-latex] (3) -- (4);
        \draw[-latex] (4) -- (5);
    \end{tikzpicture}
    \begin{tikzpicture}[thick,scale=0.8, every node/.style={scale=0.8}]
        \node[draw, circle, minimum width=.1cm] (1) at (0, 2) {$1$};
        \node[draw, circle, minimum width=.1cm] (2) at (0, 1) {$2$};
        \node[draw, circle, minimum width=.1cm] (3) at (0, 0) {$4$};
        \node[draw, circle, minimum width=.1cm] (4) at (0, -1) {$3$};
        \node[draw, circle, minimum width=.1cm] (5) at (0, -2) {$5$};
        \draw[-latex] (1) -- (2);
        \draw[-latex] (2) -- (3);
        \draw[-latex] (3) -- (4);
        \draw[-latex] (4) -- (5);
    \end{tikzpicture}
    \begin{tikzpicture}[thick,scale=0.8, every node/.style={scale=0.8}]
        \node[draw, circle, minimum width=.1cm] (1) at (0, 2) {$1$};
        \node[draw, circle, minimum width=.1cm] (2) at (0, 1) {$4$};
        \node[draw, circle, minimum width=.1cm] (3) at (0, 0) {$2$};
        \node[draw, circle, minimum width=.1cm] (4) at (0, -1) {$3$};
        \node[draw, circle, minimum width=.1cm] (5) at (0, -2) {$5$};
        \draw[-latex] (1) -- (2);
        \draw[-latex] (2) -- (3);
        \draw[-latex] (3) -- (4);
        \draw[-latex] (4) -- (5);
    \end{tikzpicture}
    \begin{tikzpicture}[thick,scale=0.8, every node/.style={scale=0.8}]
        \node[draw, circle, minimum width=.1cm] (1) at (0, 2) {$1$};
        \node[draw, circle, minimum width=.1cm] (2) at (0, 1) {$4$};
        \node[draw, circle, minimum width=.1cm] (3) at (0, 0) {$3$};
        \node[draw, circle, minimum width=.1cm] (4) at (0, -1) {$2$};
        \node[draw, circle, minimum width=.1cm] (5) at (0, -2) {$5$};
        \draw[-latex] (1) -- (2);
        \draw[-latex] (2) -- (3);
        \draw[-latex] (3) -- (4);
        \draw[-latex] (4) -- (5);
    \end{tikzpicture}
    
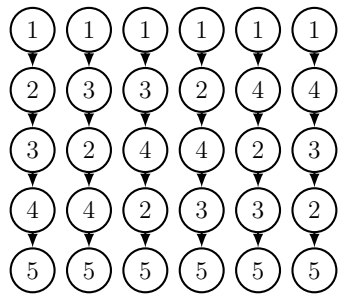
\captionof{figure}{All six linear extensions for the tied partial order in \Fig~\ref{POTex}.} \label{LETex}                                                           
    \end{minipage}
\end{multicols}

Notice the difference between ``equal'' 
$\sim_h$ and ``unordered'' $\|_h$. 
If we remove the tie in \Fig~\ref{POTex} to get a new partial order in which $3$ and $4$ are unordered (but all other relations are unchanged) then the linear extensions would be unchanged from those in \Fig~\ref{LETex}. 
Since the linear extensions are the data, the new and old partial orders would not be identifiable. So here there is no difference. However, we cant simply take all unordered pairs in a partial order and replace them with ties, as tied actors must have identical relations to other actors while unordered actors need not. On the other hand, as we saw, if we replace tied relations with unordered relations (maintaining relations to other actors) then the resulting partial order wont be identifiable from the tied order. From the perspective of the observation model, ties dont really change much. However, introducing ties changes the prior: we should think of the ties model as a way to put more weight in our prior on partial orders in which relations are between groups rather than between individuals. 

We now give a generative model for random partial orders with ties, which we call the $(PDP,K,\rho)$-partial order model with fixed latent column dimension $K$. This model involves partitioning the actors into tied clusters and assigning probability to a random partial order with the clusters as nodes. In other respects the same latent variable setup is used. This ensures marginal consistency and effective control over depth. Actor-partitioning is achieved using a Poisson-Dirichlet process mixture model over realisations of the rows of $Z$, a generalisation of the Dirichlet process to two parameters.  The Poisson-Dirichlet process groups the tied actors into clusters with equal $Z$-paths. Define partition $\S=(S_1, S_2, \dots, S_C)\in \Xi_{[n]}$ where $C=|\S|$ is the number of clusters and $\Xi_{[n]}$ is the set of all partitions of $[n]$.  Let $\eta = (\eta_a, \eta_b)$ be the discount and strength parameters respectively. The process assigns a prior distribution over $S$ according to
\begin{equation}
    P_{\eta,[n]}(\S)=\frac{\Gamma(\eta_b)}{\Gamma(\eta_b+n)}\frac{\eta_a^C\Gamma(\eta_b/\eta_a+C)}{\Gamma(\eta_b/\eta_a)}\prod_{c=1}^C \frac{\Gamma(n_c-\eta_a)}{\Gamma(1-\eta_a)},
\end{equation}
where $n_c=|S_c|$ denotes the number of actors in cluster $S_c$. Both $\eta_a \in [0,1)$ and $\eta_b>-\eta_a$ are constants. 

\begin{definition}[The $(PDP,K,\rho)$-Partial Order Model]
    Let $\mathbf{S}\sim PDP(\eta_a,\eta_b)\in\Xi_{[n]}$ and $|\mathbf{S}|=C$. Define latent matrix $Z^*\in \mathbb{R}^{C \times K}$, with distribution
    \begin{equation}
        Z^*_{c,1:K}|\rho\overset{i.i.d.}{\sim} \mathcal{MVN}(\mathbf{0},\Sigma_\rho), c\in [C], 
    \end{equation}
    and $\Sigma_\rho$ as defined in Definition \ref{def:knrho}. For $i\in [n]$, let $c(i; S)=\{c\in[C]: i\in S_c\}$ be the cluster containing $i$. Define map $Z:\mathbb{R}^{C\times K}\to\mathbb{R}^{n\times K}$,
\begin{equation}\label{eq:unpackZtoZstar}
Z(S,Z^*)=(Z^*_{c(i; S),{1\dv K}})_{i=1}^{n}.    
\end{equation}
We extend the map $h$ to handle ties as follows.
Let 
\[
G(S,Z^*)=\{i\succ_{h^*} j \text{ iff } Z(S,Z^*)_{i,k}> Z(S,Z^*)_{j,k},\ \forall\ k\in[K]\}_{i,j\in [n]}
\]
and
\[
E(S,Z^*)=\{i\sim_{h^*} j \text{ iff } Z(S,Z^*)_{i,k}= Z(S,Z^*)_{j,k},\ \forall\ k\in[K]\}_{i,j\in [n]}
\]
Then
\begin{align}
    h^*(S,Z^*) & =([n], G(S,Z^*)\cup E(S,Z^*)). 
\end{align}
This reduces to the previous definition in the case of a partial order without ties.
\end{definition}

Each row of $Z^*$ represents a cluster. Actors from the same cluster have the same attributes in the latent matrix $Z$. The $(PDP,K,\rho)$-partial order model therefore gives a probability distribution over the class of partial order with ties $\H^*_{[n]}$. 

\subsubsection{Estimating the Latent Matrix Dimension $K$}

We now introduce a second generalisation of the model. This was motivated by our desire to get good control of the depth distribution in the context of a model with ties. We have two parameters influencing depth: $K$ and $\rho$.
The $(K,n,\rho)$ model took $K$ fixed and took a prior distribution for $\rho$ which gave an approximately uniform distribution for $D(H)$.
We found, in our $(PDP,K,\rho)$-partial order model, that we could not get a good control of depth by varying $\rho$, $\eta_a$ and $\eta_b$ alone. We now allow $K$ to vary. We take a prior $K\sim \pi_K(\cdot)$. Small $K$ promotes deeper partial orders.
\begin{definition}[The $(PDP,\rho)$-Partial Order Model]\label{def:gen-proc-ties-model}
The full generative model for a random partial order with ties is as follows:
$K\sim \pi_K(\cdot)$, $\rho\sim \pi_\rho(\cdot)$, $S\sim P_{\eta,[n]}(\cdot)$ (with hyperparameter $\eta=(\eta_a, \eta_b)$), 
\begin{equation}\label{eq:Zstar-prior-given-rks}
    Z^*|\rho,K,S \sim \prod_{c=1}^C\mathcal{MVN}(Z^*_{c,1\dv K}; \mathbf{0}, \mathbf{\Sigma}_\rho), 
\end{equation}
with $\mathbf{0}\in \mathbb{R}^K$ and $\mathbf{\Sigma}_\rho$ as defined above (ie, variance one, covariance $\rho$) and $H^*=h^*(S,Z^*)$.
\end{definition}

We now write down the marginal distribution for $H^*$ generated by the process in Definition~\ref{def:gen-proc-ties-model}. For $h^*\in \H^*_{[n]}$ and $S\in \Xi_{[n]}$ let
\begin{equation}\label{eq:Zstar-space-given-hstar}
Z^*[h^*;S]=\{Z^*\in \mathbb{R}^{C\times K}: h^*(S,Z^*)=h^*\}
\end{equation}
be the set of $Z^*$ matrices which give $h^*$ for a given actor partition $S$.
The tie-model for random partial orders determines a prior distribution $\pi_{[n]}(h^*)$ over $h^*\in \H_{[n]}^*$. The conditional is
\begin{align}\label{candprior}
    \pi_{[n]}(h^*|\rho,K) = \sum_{S\in \Xi_n}  \left[ \int_{Z^*[h^*;S]} \pi(Z^*|\rho,K,S) dZ^*\right] P_{\eta,[n]}(S),
\end{align}
so that marginally,
\begin{equation}\label{eq:marginal-prior-ties}
    \begin{split}
        \pi_{[n]}(h^*) =  & \sum_{k=1}^\infty  \left[\int_0^1 \pi_H(h^*|\rho,K=k) \pi_\rho(\rho)  d\rho. \right] \pi_K(k).
    \end{split}
\end{equation}
The prior $\pi_{[n]}(h^*)$ is not tractable but can be explored by simulating the generative model. When we come to fit the model we work in the latent representation $(Z^*,S)$. 

We choose the priors $\pi_K(k)=\mbox{Geo}(k,1,\eta_K)$ (with $k\ge 1$) and $\pi(\rho)=\mbox{Beta}(\rho;1,\eta_\rho)$. The hyperparameters $\eta_K$ abd $\eta_\rho$ are chosen by experiment so that the prior predictive distribution for the depth, $D(h^*(S,Z^*))$ determined by taking $\rho\sim \pi_\rho(\cdot)$, $K\sim \pi_K(\cdot)$, $S\sim P_{\eta,[n]}(\cdot)$ and $Z^*\sim \pi(\cdot|\rho,K,S)$ is reasonably flat. The prior distribution for partial order depths for the case $n=15$ ($\eta_a = 0.7, \eta_b=3, \eta_\rho=1/6, \eta_K=0.05$) is shown in \Fig~\ref{fig:depthp}. It is not perfectly uniform but this is not a concern. We simply need to avoid exponentially large weightings which can arise in this setting so that the prior at least ``allows'' all depths. Any further re-weighting can be made in the analysis using for example Bayes factors. In other settings the priors may need to be adjusted.

\begin{figure}[h!]
\centering
\includegraphics[height=7cm]{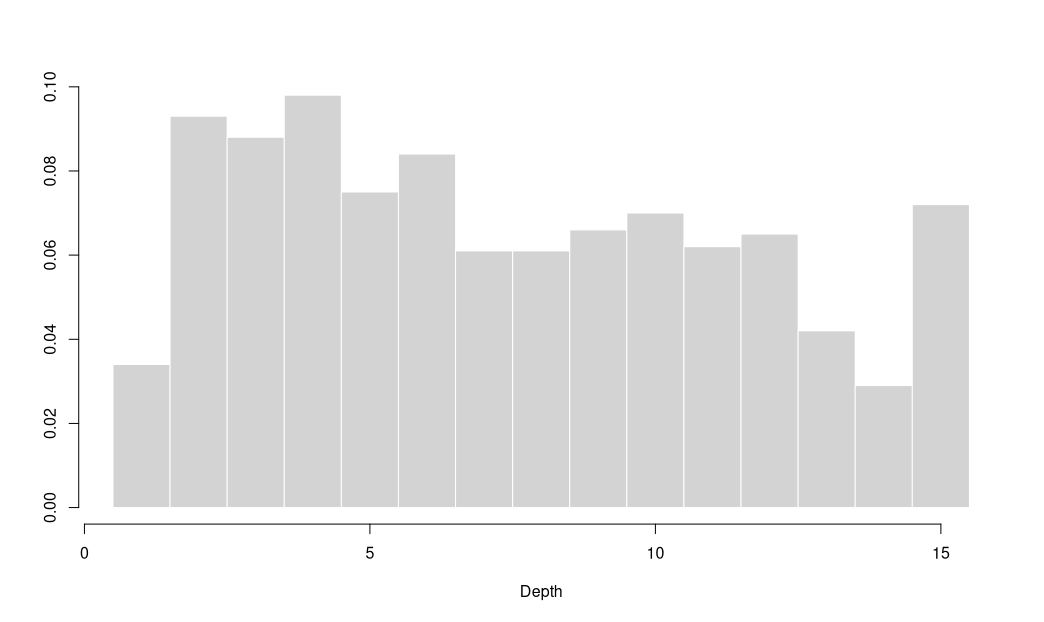}
\caption{The prior distribution for partial order depths with $n=15$ actors (with hyperparameters $\eta_a = 0.7, \eta_b=3, \eta_\rho=1/6, \eta_K=0.0625$). }
\label{fig:depthp}
\end{figure}

The $(PDP,\rho)$-partial order model is projective.
This follows because the Poisson-Dirichlet process generating $Z$ is projective. The proof is very similar to that for the $(K,n)$-model in Proposition~\ref{prop:projective-kn}. 

\begin{proposition}\label{prior-mc}
For $h^*\in \H^*_{[n]}$ let $\pi_{[n]}(h^*)=\Pr(H^*=h^*)$ be the distribution of the random partial order with ties generated by the process in Definition~\ref{def:gen-proc-ties-model} and given in Equation~\ref{eq:marginal-prior-ties}.
The family of distributions $\pi_{[n]}, n\ge 1$ is projective.
\end{proposition}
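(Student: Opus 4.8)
The plan is to follow the template of Proposition~\ref{prop:projective-kn}, first reducing the claim to a single fixed pair $(\rho,K)$ and to the removal of one actor, and then combining two ingredients: the (Kingman) consistency of the Poisson--Dirichlet partition law, and the conditional independence of the cluster-rows of $Z^*$ together with the locality of the map $h^*$.

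First I would reduce the scope. Since $K$ and $\rho$ are global parameters, shared by all actors and drawn from the same priors $\pi_K,\pi_\rho$ whatever the actor set, the marginalization over a removed actor commutes with the sum over $k$ and the integral over $\rho$ in Equation~\ref{eq:marginal-prior-ties}. Hence it suffices to prove that the conditional family $\pi_{[n]}(\cdot\mid\rho,K)$ of Equation~\ref{candprior} is projective for each fixed $\rho\in[0,1)$ and $K\in\mathbb{Z}^+$, exactly as in the passage from the $(K,n)$ to the $(K,n,\rho)$ model. Moreover the whole construction is exchangeable in the actor labels, and any $o\subset o'$ is reached by deleting one actor at a time, so it is enough to treat $o'=[n]$, $o=[n-1]$, i.e.\ to integrate out actor $n$.

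The two ingredients are as follows. \emph{(i) Partition consistency.} The two-parameter Poisson--Dirichlet (Pitman--Yor) law $P_{\eta,[n]}$ is Kingman-consistent: if $\S\sim P_{\eta,[n]}$ and $\S|_o$ is the partition of $o=[n-1]$ obtained by deleting actor $n$ from its block, then $\S|_o\sim P_{\eta,[n-1]}$ with the \emph{same} $(\eta_a,\eta_b)$. Equivalently, for each $S_o\in\Xi_{[n-1]}$,
\[
\sum_{\S\in\Xi_{[n]}:\ \S|_o=S_o} P_{\eta,[n]}(\S)=P_{\eta,[n-1]}(S_o),
\]
the sum ranging over the ways actor $n$ can be adjoined to $S_o$ (as a new singleton, or into an existing block); this is the add-one-customer marginalization of the EPPF. \emph{(ii) Independence and locality.} Conditional on $(\rho,K,\S)$ the rows $Z^*_{c,1\dv K}$ are i.i.d.\ $\mathcal{MVN}(\mathbf 0,\Sigma_\rho)$, and the relations in $h^*(\S,Z^*)$ among actors of $o$ depend only on the rows of clusters meeting $o$. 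Hence $h^*(\S,Z^*)[o]=h^*(\S|_o,Z^*_o)$, where $Z^*_o$ collects the rows of the surviving clusters, and these rows remain i.i.d.\ $\mathcal{MVN}(\mathbf 0,\Sigma_\rho)$.

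To assemble, I would sum Equation~\ref{candprior} over $h'\in\H^*_{[n]}$ with $h'[o]=h^*$ and reorganize the sum over $(\S,Z^*)$ by the induced partition $S_o=\S|_o$. Using (ii), the $Z^*$-integral splits over vanishing versus surviving clusters: if actor $n$ formed a singleton its row integrates to one (the cluster disappears and $C$ drops), while if it shared a block that block's row is retained for the remaining actors and the integrand is unchanged; in either case the integral collapses to $\int_{Z^*[h^*;S_o]}\pi(Z^*\mid\rho,K,S_o)\,dZ^*$. The outer partition sum then collapses by (i) to $P_{\eta,[n-1]}(S_o)$, leaving exactly $\pi_{[n-1]}(h^*\mid\rho,K)$; integrating over $\rho$ and summing over $K$ yields the marginal claim. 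The main obstacle is precisely this partition bookkeeping: one must check that the $Z^*$-integral factorizes correctly across the block actor $n$ joins, so that vanishing-cluster rows integrate to one while surviving rows keep their i.i.d.\ law, and that these cases line up with the two terms of the add-one-customer identity so they recombine into $P_{\eta,[n-1]}(S_o)$. The independence of the rows of $Z^*$ and the locality of $h^*$ are what make this factorization go through; once they are in place the remaining computation is routine.
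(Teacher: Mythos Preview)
Your proposal is correct and follows essentially the same route as the paper: reduce to fixed $(\rho,K)$ and the removal of a single actor, invoke the Kingman/exchangeability consistency of the two-parameter Poisson--Dirichlet partition, and use the row-independence of $Z^*$ together with the locality of $h^*$ to conclude. The paper phrases the argument at the level of random variables (showing $H^*_n[o]\mid\rho,k\sim H^*_{n-1}\mid\rho,k$ via the generative process and the identities $h(Z(Z^*,S))[o]=h(Z(Z^{*,-n},S^{-n}))$), whereas you phrase it at the level of the integral representation in Equation~\ref{candprior}, but the ingredients and the logic are the same.
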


Proof: see \ref{appx-projectivity}.

\section{Observation Models for list data}\label{sec:obs}

Suppose our data contains $N$ lists,  $\mathbf{y} = (y_1, \dots, y_N)$. Let $$o_i=(o_{i,1}, o_{i,2},...,o_{i,n_i}),\ i=1,...,N,$$ with $o_{i,j}<o_{i,j+1}, j=1,...,n_i-1$ be a subset $o_i\subseteq [n]$ of $n_i\ge 2$ actor labels recording the actors present in list $i$. Let $\P_{o_i}$ be the set of all permutations of $o_i$. The lists $y_i\in\P_{o_i}$ record the order in which the actors appear in the list. Note the setup: the list $y_i$ is not formed by sampling a distribution on permutations of all $n$ actors and then reducing it to a shorter list containing only those in $o_i$; rather the list is formed from the start using a distribution over permutations of the actors in $o_i$ only. We condition on the values of $o_i,\ i=1,...,N$ throughout. In the context of preference orders, assessors are presented with a choice set $o_i$ and return an ordered preference list $y_i$ which ranks all the elements in the choice set. We will be interested in the distribution of the random lists $y_i$ given the label content $o_i$. We think of the lists $y_i$ as total orders given by the sequence in which the labels appear from highest to lowest in the order. In the $i$-th data-list, $y_{i,1}$ came first and $y_{i,n_i}$ last and this will be evidence that actor $y_{i,1}$ is in some sense more important than actor $y_{i,n_i}$.

\subsection{Noise-Free Model}\label{NF}

In our observation model for noise-free lists we take $y_i$ to be a linear extension sampled uniformly at random from the set of all linear extensions of the suborder $H[o_i]$ of unknown true partial order $H$, independently for $i=1,...,N$. In this section we write down the likelihood and motivate this choice.

Suppose $H=h$ for some $h\in \H_{[n]}$. Let $o_j\subseteq [n]$ be some generic subset of actor-labels, $j=1,\dots,N$. The set of all linear extensions of the sub-order $h[o_j]$ is $\mathcal{L}[h[o_j]]$.  Let $y_j\in \mathcal{L}[h[o_j]]$ be some generic noise free `data', an ordered list of the actor-labels in $o_j$. 
The generative model in the noise-free case is 
\begin{equation}\label{noise-free}
    p^{(P)}(y_j|h[o_j])=\frac{\mathbbm{1}_{y_j\in \mathcal{L}[h[o_j]]}}{|\mathcal{L}[h[o_j]]|}.
\end{equation}

This model is motivated by thinking of the list as recording the place of actors $(o_{j,1},...,o_{j,n_j})$ in a queue. The queue forms, and then neighboring pairs of actors in the list swap places at random so long as the exchange does not violate any order relation in $h[o_j]$, until the moment the queue is recorded. This process defines an irreducible Markov chain of random lists converging in distribution to the uniform distribution over linear extensions of $h[o_j]$. 

The following result is well known, for example, it appears in \cite{Karzanov91} who show a closely related chain is rapidly mixing.
\begin{proposition}\label{Prop:MC-unif}
Consider a Markov chain $\{X_t\}_{t\ge 1}$, with state space $X_t\in \mathcal{L}[h[o_j]]$ for some fixed $j\in\{1,\dots,N\}$. Suppose that at step $t$ we have $X_t=x$. An entry $i\sim U\{1,...,n_j\}$ is chosen uniformly at random. If $i=n_j$ then we reject and set $X_{t+1}=x$ and otherwise $x'=(x_1,...,x_{i-1},x_{i+1},x_i,x_{i+2},...,x_{n_j})$.
If $x'\in \mathcal{L}[h[o_j]]$ then we set $X_{t+1}=x'$ and otherwise $X_{t+1}=x$.
The process converges in distribution to the uniform distribution over linear extensions of $h[o_j]$, that is, for $l\in \L[h[o_j]]$,
\begin{equation*}
    p^{(P)}(X_n=l)\overset{t\to\infty}{\to}\frac{1}{|\mathcal{L}[h[o_j]]|}\mathbbm{1}_{l\in \mathcal{L}[h[o_j]]}. 
\end{equation*}
\end{proposition}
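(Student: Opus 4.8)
The plan is to regard $\{X_t\}_{t\ge1}$ as a finite-state, time-homogeneous Markov chain on the state space $\mathcal{L}[h[o_j]]$ and to invoke the fundamental convergence theorem for finite Markov chains: an irreducible and aperiodic chain converges in distribution to its unique stationary distribution, regardless of the starting state. It therefore suffices to verify three things, namely that the uniform law on $\mathcal{L}[h[o_j]]$ is stationary, that the chain is aperiodic, and that it is irreducible.

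First I would establish stationarity of the uniform distribution by checking detailed balance. For distinct $x,x'\in\mathcal{L}[h[o_j]]$, the only way to move from $x$ to $x'$ in one step is for $x'$ to be the adjacent transposition of $x$ at some position $i\in\{1,\dots,n_j-1\}$; this occurs with probability $1/n_j$ (choosing that $i$ and accepting, which is automatic because $x'\in\mathcal{L}[h[o_j]]$ by hypothesis). The reverse move proposes the same transposition at position $i$ applied to $x'$ and is likewise accepted, so $P(x\to x')=P(x'\to x)=1/n_j$. Writing $u$ for the uniform law, the identity $u(x)P(x\to x')=u(x')P(x'\to x)$ holds trivially, so $u$ is reversible and hence stationary. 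Aperiodicity is immediate: whenever the chosen index equals $n_j$ the chain rejects, so every state carries a self-loop of probability at least $1/n_j>0$.

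The main obstacle is irreducibility, i.e.\ showing that any $l'\in\mathcal{L}[h[o_j]]$ is reachable from any $l\in\mathcal{L}[h[o_j]]$ through a sequence of order-preserving adjacent swaps. This is the classical fact that the linear-extension graph of a poset is connected, and I would prove it by an inversion-counting induction. Define $d(l,l')$ to be the number of unordered pairs $\{a,b\}$ that are incomparable in $h[o_j]$ and occur in opposite relative orders in $l$ and $l'$; since comparable pairs necessarily agree in every linear extension, $d(l,l')=0$ if and only if $l=l'$. The key lemma is that whenever $d(l,l')>0$, the list $l$ contains two \emph{adjacent} entries $l_i,l_{i+1}$ that are incomparable in $h[o_j]$ and appear in the opposite order in $l'$; swapping them produces a valid linear extension and lowers $d$ by exactly one, so iterating drives $d$ to zero and reaches $l'$.

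To prove the lemma I would take, among all incomparable pairs inverted between $l$ and $l'$, one that is closest together in $l$ — say $a$ before $b$ in $l$ but $b$ before $a$ in $l'$ — and argue it must be adjacent. If some $c$ lay strictly between them in $l$, then a short case analysis on the poset relations of $c$ with $a$ and $b$ (using transitivity, which forbids $a\succ_h c\succ_h b$ since $a\|_h b$, and the linear-extension constraint, which pins down where $c$ sits in $l'$) always produces a strictly closer inverted incomparable pair, either $\{a,c\}$ or $\{c,b\}$, contradicting minimality. Hence the minimal inverted incomparable pair is adjacent, establishing the lemma. Combining stationarity, aperiodicity and irreducibility, the finite-chain convergence theorem yields $p^{(P)}(X_t=l)\to \frac{1}{|\mathcal{L}[h[o_j]]|}\mathbbm{1}_{l\in\mathcal{L}[h[o_j]]}$ as $t\to\infty$, as claimed.
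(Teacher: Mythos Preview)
Your proof is correct and follows the same overall scaffolding as the paper's: verify that the uniform distribution is stationary (via symmetry/detailed balance), that the chain is aperiodic (via the self-loop when $i=n_j$), and that it is irreducible, then invoke the convergence theorem for finite ergodic chains. The only genuine difference is in the irreducibility argument. The paper uses a ``bubble-to-the-top'' strategy: it fixes attention on the first entry $a=l_1$ of the target list, locates $a$ in $l'$, and argues that $a$ can be swapped with its immediate predecessor (since $a$ is above everything in $l$, any element above $a$ in $l'$ must be incomparable to it), iterating until $a$ reaches the top and then recursing on the remainder. Your route is the classical inversion-distance argument: you introduce a potential $d(l,l')$ counting inverted incomparable pairs and show, by a minimality/case analysis, that some such pair must be \emph{adjacent} in $l$, so a single accepted swap strictly decreases $d$. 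Both arguments are standard; yours gives a cleaner termination measure and makes the connectivity of the linear-extension graph explicit, while the paper's is slightly more direct constructively and avoids the case analysis on the intermediate element $c$.
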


Proof: see \ref{appx-MC}. 

\subsection{Error Models}

Obtaining error-free data is obviously ideal. However measurement/recording errors are often present. In order to incorporate noise in the observation model, \cite{nicholls122011partial} propose a simple queue-jumping error model that allows an actor to jump up the queue with probability $p\in[0,1]$, ignoring any order constraints. 

\subsubsection{Queue-Jumping Error Model}

Let $L(h)=|\mathcal{L}[h]|$ be the number of linear extensions of $h$ and let $L_i(h)=|\{l\in\mathcal{L}[h]:l_1=i\}|$ give the number of linear extensions headed by actor $i$. The queue-jumping (upwards) model (\cite{nicholls122011partial}) defines
\begin{equation}\label{Queue-jumping}
   p^{(Q)}(y_j|h,p)=\prod_{i=1}^{n_j-1}\left(\frac{p}{n_j-i+1}+(1-p)\frac{L_{y_i}(h[y_{j,i:n_j}])}{L(h[y_{j,i:n_j}])}\right). 
\end{equation}
We can interpret this as modelling the process by which the list is formed. The process is as follows:
\begin{enumerate}
    \item[1.] Set $i=1$, $s=[n]$ and $h'=h$. 
    \item[2.] Let $q= (L_j(h')/L(h'))_{j\in s}$ be a probability vector weighted by the numbers of linear extensions headed by each actor. 
    \item[2] With probability $p$ sample $Y_i\sim \mathcal{U}(s)$ (choose one of the remaining actors at random) and otherwise sample $Y_i\sim\mbox{multinom}(q)$.
    \item[3.] Set $s\leftarrow s\setminus {Y_i}$.
    \item[4.] If $s$ is empty return $Y_1,...,Y_n$ and otherwise set $i\leftarrow i+1$, $h'\leftarrow h[s]$ and go to step 2.
\end{enumerate}
The output $(Y_1,...,Y_n)\sim p^{(Q)}(\cdot|h,p)$ is a random list of $n$ elements distributed according to the queue-jumping model. This follows because the probabilities to choose entries in $Y$ at each step are just the factors in $p^{(Q)}$ in Equation~\ref{Queue-jumping}.
If we set $p=0$, then we get a telescoping product and $p^{(Q)}(l|h,p=0)=1/|\L(h)|$ for $l\in \L[h]$.
so the model reduces to the error-free observation model in Equation~\ref{noise-free}. We can turn the model around and build the list from the bottom, allowing ``queue jumping" downwards. \cite{jiang2023bayesian} proposes a `bi-directional' queue-jumping model that makes displacements in both directions plausible in the similar queue-like setting. If we denote the computational complexity of counting the number of linear extension of a partial order with $n$ nodes as $O_n$ (already at least exponential growth with $n$), the computational complexity of evaluating the `bi-directional' queue-jumping likelihood is at least $O(O_n\times2^n)$. This makes it formidable to evaluate the `bi-directional' queue-jumping likelihood on general partial orders, as opposed to its subclass - the vertex-series-parallel orders that admit fast linear extension counting - as considered in \cite{jiang2023bayesian}. 

In order to build a more tractable `bi-directional' error model, we introduce Mallows and Plackett-Luce noise-model variants, focusing on the Mallows case which we favor. See \ref{app:PL} for some remarks on Plackett-Luce and some pros and cons.

\subsubsection{The Mallows Model}\label{sec:mallows}

The Mallows model \citep{mallows1957non} assigns probability to a ranking $y$ based on its distance from a `central ranking' or `reference ranking' $l$. Consider the ``unknown true'' data $l \in \P_{[n]}$ as a linear extension of a generic partial order $h$ where $h\in \H_{[n]}$. We observe $y\in \P_{[n]}$ with ``Mallows noise" but ``centred'' on $l$
(if $y\in\P_o$ then $l\in\L_{h[o]}$ and we are observing a suborder $h[o]$ for some $o\subseteq [n]$, but the setup is the same with $h\to h[o]$ $etc$). In the Mallows-$\phi$ model, the probability $p(y|l,\theta)$ is given in terms of a symmetric divergence $d(l,y)\ge 0$. Let $\sigma(l,j)=\{k\in [n]:l_k=j\}$ and let the Kendall-tau distance be $d(y,l)$ so that
\[
d(y,l) = \sum_{i=1}^{n-1}\sum_{j=i+1}^n \mathbb{I}_{\sigma(l[y_{i:n}],y_{i})>\sigma(l[y_{i:n}],y_{j})}.
\] 
Denote the \textit{dispersion parameter} as $\theta\in (0,\infty)$. The Mallows $\phi$-model gives
\begin{align}
    p^{(M)}(y|l,\theta)=\frac{\exp({-\theta d(l,y)})}{\Psi_n(\theta)}, \quad y\in \P_{[n]},
    \label{eq:mallows_lkd_for_order}
\end{align}
where $\Psi_n(\theta)$
is a normalising constant available in closed for as $\Psi_n(\theta)=\sum_{z\in \P_{[n]}} e^{-\theta d(l,z)}=\prod_{i=1}^{n} \psi_i$
with
$\psi_i=\sum_{j=1}^i e^{-(i-1)\theta}$ under the Kendall-tau distance.
The Mallows-$\phi$ is a sequential choice model as
\begin{align}
    p^{(M)}(y|l,\theta)&=\prod_{i=1}^{n-1}\frac{\exp({-\theta\sum_{j=i+1}^n \mathbb{I}_{\sigma(l[y_{i:n}],y_i)>\sigma(l[y_{i:n}],y_j)}})}{\psi_{n-i+1}(\theta)}
    \label{eq:mallows-select-first}\\
    &=\prod_{i=1}^{n-1} q^{(M)}(y_i|l[y_{i:n}],\theta),
    \label{eq:mallows_lkd_for_order_seq}
\end{align}
where $q^{(M)}(y_i|l[y_{i:n}],\theta)$ is the probability $y_i$ is selected next from the remaining choices. 

Conditioning on the partial order $h$, the likelihood is the marginal probability to observe $y$, if $l$ is a noise-free draw from the linear extensions of $h$, so we define
\begin{align}\label{eq:mallows-likelihood-for-PO}
    p^{(M)}(y|h,\theta) &\equiv \sum_{l\in\L[h]} p^{(M)}(y|l,\theta)p^{(P)}(l|h)\\
    \intertext{using \eqref{noise-free} and the symmetry of $d(l,y)$ in its arguments}
    &= \frac{1}{|\L[h]|}\sum_{l\in\L[h]} p^{(M)}(l|y,\theta).\nonumber
    \end{align}
    
    We now give a recursion for evaluating the sum. Since $\L[h]=\cup_{k\in \max(h)}\L_k[h]$,
    \begin{align}
    \sum_{l\in\L[h]} p^{(M)}(l|y,\theta)&=\sum_{k\in \max(h)}\sum_{l\in\L_k[h_{-k}]} p^{(M)}(l|y,\theta),\nonumber\\
    \intertext{and now $l$ starts with $k$, so use Equation~\ref{eq:mallows_lkd_for_order_seq} to split off the first factor}
    &=\sum_{k\in \max(h)} q^{(M)}(k|y_{-k},\theta) \sum_{l'\in\L[h_{-k}]}  p^{(M)}(l'|y_{-k},\theta),
    \label{eq:mallows-likelihood-recursion}
\end{align} 
where $h_{-k}=([n]\setminus\{k\},\succ_h)$ and similarly for $y_{-k}$. Let \[f(y,h,\theta)=\sum_{l\in\L[h} p^{(M)}(l|y,\theta).\] The recursion in Equation~\ref{eq:mallows-likelihood-recursion} is
\[
f(y,h,\theta)=\sum_{k\in \max(h)} q^{(M)}(k|y_{-k},\theta) f(y_{-k},h_{-k},\theta).
\]
Algorithm \ref{alg:mallows} adapts the algorithm of \cite{knuth1974structured} for recursive counting of linear extensions to evaluating this weighted sum. We evaluate the count of linear extensions $|\L[h]|$ in the same pass. Once we are done the likelihood is 
$
p^{(M)}(y|l,\theta)={f(y,h,\theta)}/{|\L[h]|}
$.\\

\begin{algorithm}[H]
    \caption{Evaluating $f(y,h,\theta)=\sum_{l\in\L[h} p(l|y,\theta)$ under the Mallows observation model}\label{alg:mallows}
    \SetKwInOut{Input}{input}
    \SetKwFunction{f}{f}
    \SetKwProg{Fn}{Function}{:}{}
        
    \Input{$h,\theta,y$}
    \Fn{\f{$y$,$h$,$\theta$}}{
        $n\leftarrow$ number of elements in $h$\\
        \If{$D(h)=n$ ($h$ is a total order)}{
          $l\leftarrow$ list ordered as elements of $h$\\
          $g\leftarrow p^{(M)}(l|y,\theta)$ (see Equation~\ref{eq:mallows-select-first})\\
          {{\bfseries return} $(f,\text{count})=(g,1)$}
        }
        \If{$D(h)=1$ ($h$ is an empty order)}{
          {\bfseries return} $(f,\text{count})=(1,n!)$
        }
        Set count $\leftarrow 0$, $f\leftarrow 0$\\
        \ForEach{$k\in\max(h)$}{
          $g_k \leftarrow q^{(M)}(k|y,\theta)$\\
          $(f_k, c_k)\leftarrow \f(y_{-k},h_{-k},\theta)$\\
          $f \leftarrow f+g_k\times f_k$ (Equation~\ref{eq:mallows-likelihood-recursion})\\
          $\text{count} \leftarrow \text{count}+c_k$
        }
        {\bfseries return} $(f,\text{count})$
    }
    ({\it the partial order likelihood $p(y|h,\theta)$ in Equation~\ref{eq:mallows-likelihood-for-PO} is then $f/\text{count}$.})
\end{algorithm}
\vspace*{1ex}
Algorithm \ref{alg:mallows} gives a recursion in which a function $f(y,h,\theta)$ returns the sum $\sum_{l\in\L[h} p^{(M)}(l|y,\theta)$ over orders of $m$ elements, by evaluating the sum over $k$ in the last line of Equation~\ref{eq:mallows-likelihood-recursion} and calling itself to evaluate $f(y_{-k},h_{-k},\theta)=\sum_{l'\in\L[h_{-k}]}  p^{(M)}(l'|y_{-k},\theta)$ on orders of length $n-1$. The recursion stops if $f$ is called with $h$ a total order (one extension, so return \eqref{eq:mallows_lkd_for_order}) or if $h$ is the empty order (then $\L[h]=\P_{[n]}$ so the sum is one as $p^{(M)}(l|y,\theta)$ is normalised over $l\in\P_{[n]}$) so the recursion stops at $n\ge 2$ at lowest. 

Note that this paper uses the Kendall-tau distance as the distance metric for the Mallows model. It is chosen not only for its tractable normalising constant, in addition, \cite{a21c2824-5af7-3a80-b044-fe5b689dcd65} suggests Kendall-tau is the metric of choice considering its interpretability, tractability, invariance, sensitivity and available theory. However, from a modelling perspective, other distance metric choices may be more natural for a given data set. For example, \cite{vitelli2018probabilistic} prefers the footrule distance. Although it is physically well motivated, we didn't explore the footrule distance due to the challenge of its intractable normalising constant in our context. 

In our implementation, we impose a Gamma prior over the dispersion parameter $\theta$. We discuss in more detail in Section \ref{sec:application}.
\section{Bayesian Inference}\label{sec:bayesian-inference}

Given a prior model for partial orders and an observation model for ranking lists, Bayesian inference is straightforward in principle. 
We demonstrate Bayesian inference for the $(PDP,\rho)$-partial order model with queue-jumping error model ($(PDP,\rho)\backslash Q$) and the Mallows error model ($(PDP,\rho)\backslash M$). 
Here we write down the posterior distributions for these two models respectively.

The posterior distribution of the $(PDP,\rho)\backslash Q$ model is 
\begin{equation}\label{B}
    \pi(h^*|\mathbf{y}) \propto \sum_{S\in\Xi_n}\sum_{k=1}^\infty \int_{Z^*[h^*;S]}\int_{p=0}^1 \int_{\rho=0}^1 \pi(Z^*,\rho,k,S,p|\mathbf{y}) d\rho dp dZ^*, 
\end{equation}
where
\begin{equation*}
    \pi(Z^*,\rho,k,S,p|\mathbf{y}) \propto p^{(Q)}(\mathbf{y}|h(Z),p)\pi(Z^*|\rho,k,S)\pi_\rho(\rho)\pi_K(k)\pi_P(p)P_{\eta,[n]}(S).
\end{equation*}

The posterior distribution of the $(PDP,\rho)\backslash M$ model is
\begin{equation}\label{C}
    \pi(h^*|\mathbf{y}) \propto \sum_{S\in\Xi_n}\sum_{k=1}^\infty \int_{Z^*[h^*;S]}\int_{\theta=0}^\infty \int_{\rho=0}^1 \pi(Z^*,\rho,k,S,\theta|\mathbf{y}) d\rho d\theta dZ, 
\end{equation}
where
\begin{equation*}
    \pi(Z,\rho,k,S,\theta|\mathbf{y}) \propto p^{(M)}(\mathbf{y}|\theta,h(Z))\pi_\theta(\theta)\pi(Z^*|\rho,k,S)\pi_\rho(\rho)\pi_K(k)P_{\eta,[n]}(S).
\end{equation*}

We employ Gibbs Metropolis-Hastings for the inference. Varying $K$ changes the dimension of $Z^*$ and so we use reversible jumping to target the conditional for $K$, and the classic Gibbs sampling algorithm of \cite{neal2000markov} for the partition $S$ with a Poisson Dirichlet process as prior and simple random-walk Metropolis-Hasting for most of the remaining the parameters. We assume priors as $k\sim Geo(\eta_k), \rho\sim Beta(1,\eta_\rho)$ and $\theta\sim Gamma(\eta_\theta,1)$. The detailed algorithm is included in \ref{MCMC}. 

\subsection{Asymptotic posterior distributions in the noise free case}\label{sec:asymptotic-post}
\def\htrue{{h^\dagger}}

It is of interest to consider the behavior of these posteriors as $N\to \infty$, the large data limit. Suppose the true partial order is $\htrue$. Since the space of partial orders is discrete, the natural question is whether $\pi(\htrue|y)\to 1$ as $N\to\infty$ so the posterior model is a consistent estimator for the unknown true partial order when the observation model is correct (so ignoring model misspecification).

In the first proposition $O=(o_k)_{k=1}^{2^n-1}$ is the set of all non-empty subsets of $[n]$. We are interested in the setting where we can only make observations on suborders $\htrue[o_i]$ of $\htrue$ for $i\in I$ where $I\subseteq [2^n-1]$
is the set of subset-indices for subsets of actors appearing together in observable groups.
We observe $N$ linear extensions $\mathbf{y}_{1:N}$ in total. We set up the asymptotics so that for $i\in I$ there are $N_i$ lists $\mathbf{y}_i=(y_{i,1},\dots,y_{i,N_i})$ associated with each suborder $\htrue[o_i]$. Here each $y_{i,j}=(y_{i,j,1},\dots,y_{i,j,k_i})\in\P_{o_i}$ is an entire list for each observation $j = 1,\dots,N_i$ of the group $o_i,\ i\in I$. We will set $N_i=0$ for $i\in [2^n-1]\setminus I$ and take the limit as $\min_{i\in I} N_i\to \infty$. The total number of observed lists is $N = \sum_{i=1}^{2^n-1}N_i=\sum_{i\in I}N_i$. 

\begin{proposition}(Consistency for the $(K,n,\rho)\backslash P$ model)\label{prop:PONE-Consistency}
    Let $h^\dagger\in \H_{[n]}$ be given and suppose $y_{i,j}\sim p^{(P)}(\cdot|h^\dagger[o_i])$ jointly independent for all $i\in I$ and $j=1,\dots N_i$.
    Let $\pi(h|\mathbf{y}_{1:N})$ be the posterior of the $(K,n,\rho)\backslash P$ model with $K\ge \lfloor n/2\rfloor$, so that
    $$\pi(h|\mathbf{y}_{1:N}) = \int_{Z:h(Z)=h}\int_{\rho=0}^1 \pi(Z,\rho|\mathbf{y}) d\rho dZ$$
    where
    $$\pi(Z,\rho|\mathbf{y}_{1:N}) \propto p^{(P)}(\mathbf{y}_{1:N}|h(Z))\pi(Z|\rho)\pi_\rho(\rho).$$
    If for each pair of actors $(i,j)\in[n]\times [n]$ with $i\neq j$, there exists $k\in I$ such that $\{i,j\}\subseteq o_k$ then $\pi(\htrue|\mathbf{y}_{1:N})\to 1$ as $\min_{i\in I} N_i\to \infty$ for every $h^\dagger\in\H_{[n]}$.
\end{proposition}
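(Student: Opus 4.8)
The plan is to reduce the problem to a standard posterior-concentration argument on the finite set $\H_{[n]}$. First I would note that under the noise-free model the likelihood $p^{(P)}(\mathbf{y}_{1:N}\mid h(Z))$ depends on $Z$ only through $h=h(Z)$, so the integrals over $Z$ and $\rho$ in the stated posterior factor through the marginal prior mass $\pi_{\mathrm{pr}}(h)=\Pr(H=h)$. Hence
\[
\pi(h\mid\mathbf{y}_{1:N})=\frac{\pi_{\mathrm{pr}}(h)\,p^{(P)}(\mathbf{y}_{1:N}\mid h)}{\sum_{h'\in\H_{[n]}}\pi_{\mathrm{pr}}(h')\,p^{(P)}(\mathbf{y}_{1:N}\mid h')},
\]
where
\[
\pi_{\mathrm{pr}}(h)=\int_{Z:h(Z)=h}\int_0^1\pi(Z\mid\rho)\pi_\rho(\rho)\,d\rho\,dZ,\qquad p^{(P)}(\mathbf{y}_{1:N}\mid h)=\prod_{i\in I}\prod_{j=1}^{N_i}\frac{\mathbbm{1}_{y_{i,j}\in\L[h[o_i]]}}{|\L[h[o_i]]|}.
\]
Because $K\ge\lfloor n/2\rfloor$ every partial order is realisable with positive probability under the prior, as recalled above, so $\pi_{\mathrm{pr}}(\htrue)>0$ and the true order lies in the support. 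Since $\H_{[n]}$ is finite, it suffices to show that for each fixed $h\neq\htrue$ the posterior odds $\frac{\pi_{\mathrm{pr}}(h)}{\pi_{\mathrm{pr}}(\htrue)}\,\frac{p^{(P)}(\mathbf{y}_{1:N}\mid h)}{p^{(P)}(\mathbf{y}_{1:N}\mid\htrue)}$ tend to $0$.

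For the denominator, every observed list is by construction a linear extension of $\htrue[o_i]$, so $p^{(P)}(\mathbf{y}_{1:N}\mid\htrue)=\prod_{i\in I}|\L[\htrue[o_i]]|^{-N_i}>0$ deterministically. I would then classify the competitors $h\neq\htrue$ by whether they are \emph{sub-compatible}, meaning $\L[\htrue[o_i]]\subseteq\L[h[o_i]]$ for every $i\in I$ (equivalently, the relations of $h[o_i]$ are contained in those of $\htrue[o_i]$ on every observed group). The key structural input is that the pairwise-observation hypothesis forces any $h\neq\htrue$ to differ from $\htrue$ on some observed group: if $h[o_i]=\htrue[o_i]$ for all $i\in I$ then, since each pair $\{a,b\}$ lies in some $o_k$ and suborders inherit relations ($h[o_k]_{a,b}=h_{a,b}$), we would get $h=\htrue$, a contradiction.

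If $h$ is sub-compatible, every data list is automatically a valid linear extension, so its likelihood is the deterministic value $\prod_{i\in I}|\L[h[o_i]]|^{-N_i}$ with $|\L[h[o_i]]|\ge|\L[\htrue[o_i]]|$ for all $i$. On a witnessing group $k$ where $h[o_k]\neq\htrue[o_k]$ the relation set of $h[o_k]$ is a proper subset of that of $\htrue[o_k]$, which strictly enlarges the extension set, giving $|\L[h[o_k]]|>|\L[\htrue[o_k]]|$; the odds are then bounded by $\big(|\L[\htrue[o_k]]|/|\L[h[o_k]]|\big)^{N_k}\to0$ deterministically. If $h$ is not sub-compatible there is a group $i$ and a linear extension $l\in\L[\htrue[o_i]]\setminus\L[h[o_i]]$; since the data for group $i$ are i.i.d. uniform on $\L[\htrue[o_i]]$, by a coupon-collector argument $l$ appears among the $N_i$ observations with probability $1-(1-1/|\L[\htrue[o_i]]|)^{N_i}\to1$, and on that event $p^{(P)}(\mathbf{y}_{1:N}\mid h)=0$.

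Combining, for each fixed $h\neq\htrue$ the odds tend to $0$ (deterministically when $h$ is sub-compatible, in probability otherwise), and summing the finitely many terms gives $\big(1-\pi(\htrue\mid\mathbf{y}_{1:N})\big)/\pi(\htrue\mid\mathbf{y}_{1:N})\to0$, i.e.\ $\pi(\htrue\mid\mathbf{y}_{1:N})\to1$ in probability as $\min_{i\in I}N_i\to\infty$; almost-sure convergence follows by Borel--Cantelli since the coupon-collector failure probabilities are summable. I expect the main obstacle to be the clean case split: showing that the identifiability hypothesis guarantees every alternative is eventually eliminated \emph{either} by a strict linear-extension-count gap (sub-compatible alternatives, which are never ruled out by a zero likelihood and so must be beaten on the exponential rate) \emph{or} by the appearance of a forbidding linear extension (all other alternatives).
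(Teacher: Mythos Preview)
Your argument is correct and follows the same skeleton as the paper's proof in Appendix~\ref{C-EF}: reduce to posterior odds on the finite space $\H_{[n]}$, use $K\ge\lfloor n/2\rfloor$ to ensure $\pi_{\mathrm{pr}}(\htrue)>0$, then show the likelihood ratio against $\htrue$ vanishes for each competitor, invoking the pairwise-coverage hypothesis to rule out the case where all suborders agree. The organisation differs slightly. The paper works group-by-group, splitting each $R_i$ into three cases according to whether $A=\L[\htrue[o_i]]$ and $B=\L[h[o_i]]$ satisfy $A\setminus B\neq\emptyset$, $A=B$, or $A\subset B$, and for the last case passes through the multinomial SLLN to obtain $R_i^{1/N_i}\to|A|/|B|<1$ almost surely. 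Your global ``sub-compatible'' dichotomy is a mild simplification: once $\L[\htrue[o_i]]\subseteq\L[h[o_i]]$ for every $i$, every observed list automatically lies in $\L[h[o_i]]$, so the likelihood ratio equals $\prod_i(|\L[\htrue[o_i]]|/|\L[h[o_i]]|)^{N_i}$ \emph{deterministically} and no law of large numbers is needed. Your non-sub-compatible case is the paper's case~1, and your identifiability step (if all suborders agree then $h=\htrue$) is exactly the sufficiency half of the paper's Claim. One small point: for almost-sure convergence in the non-sub-compatible case, the cleanest justification is that a fixed extension $l$ with positive mass appears infinitely often a.s.\ by the second Borel--Cantelli lemma (independent events of constant positive probability), rather than summability of the failure probabilities $(1-p)^{N_i}$, which depends on how $N_i\to\infty$.
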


Proof: see \ref{C-EF}.

In order to recover $\htrue$ with certainty we must be able to observe at least one list $o_k$ informing the relation between each pair $(i,j)$ of actors in $\htrue$ infinitely often. We have an infinite number of observations of a random variable informing every possible edge of the DAG representation of $h$.




These results are not given for the posteriors arising in a prior model for a partial order with ties. In fact ties are not identifiable as they are treated as unordered in the observation model. 
When we fit the $(PDP,\rho)$-partial order model in the noise free observation model, the posterior will concentrate on the \textit{tie class} of the true partial order - the set of partial orders which are equivalent to the true partial order up to ties. The posterior distribution within this tie class of partial orders will just be the prior distribution. There are a number of reasons why the tied model remains useful for us. Firstly, adding ties changes the prior distribution in a fundamental and desirable way,
shifting probability mass onto orders with common relations between groups (such as ``bucket" orders and Vertex Series Parallel partial orders).
Social hierarchies of this sort are common in human society.
We do not restrict the class of partial orders we consider to bucket orders, but we can use ties to give them more weight in the prior.
Secondly, taken with the randomly variable column dimension $K$, the latent variables $Z^*$ in a model for partial orders with ties can potentially fit the data with lower parameter dimension ($C\times K$ with $C \le n$ and potentially $K<n/2$) than the latent variables $Z$ ($n\times n/2$ fixed) in the $(K,n,\rho)$-model for partial orders.  

\section{Application and Model Comparison}\label{sec:application}

We demonstrate our models by applying them to a social hierarchy study. We fit both the $(PDP,\rho)\backslash Q$ and the $(PDP,\rho)\backslash M$ on two different datasets from the `Royal Acta' data (see section \ref{sec:data}). We analyse the inference results, and conduct reconstruction-accuracy test with synthetic data (section \ref{sec:reconstruction}). We then perform model comparison with 1) models with no ties $(K,n,\rho)$; and 2) other ranking methods not based on partial orders - the Plackett-Luce mixture and the Mallows mixture models. 

\subsection{The `Royal Acta' Data}\label{sec:data}

The `Royal Acta' is a database created for `The Charters of William II and Henry I' project by the late Professor Richard Sharpe and Dr Nicholas Karn (\cite{Sharpe14}). It collects royal acts (mainly charters but also writs and other letters) issued in the names of two English kings, William II (reigned 1087 to 1100), and his brother Henry I (reigned 1100 to 1135). Each royal act is identified to a certain time period. Some may be specific to a single year, while some with more uncertainty are assigned to a range of time. Each royal act comes with a witness list - an ordered list of names of individual witnesses. For some examples, see \ref{app:lists}. The order reflects the relative social importance of individual witnesses. For example, an order Archbishops $\succ$ Bishops $\succ$ Earls can be clearly observed. The historians are interested to study the social hierarchy among the bishops specifically. We infer these relations by partial orders, which are natural choices for social hierarchy representation given their transitivity and generality compared to total orders. Power hierarchies were rigidly observed at this time. However it is natural to expect some incomparabilities or ties between bishops.  We extract the sub-lists that only contain bishops (as these are listed in a block together). For more background and details on data processing, please refer to \cite{nicholls122011partial} and \cite{nicholls2023bayesian}. 

The data is temporal. We take `snapshots' for the time periods 1131-1133 and 1100-1103 to study the social hierarchy among bishops in these two time periods. Table \ref{table:data-summery} summarises some key statistics of the datasets. The full witness lists for these two periods are shown in \ref{app:lists}. We assign each bishop in a single time period with a unique label. Assume the ground set of bishops in a certain time period is $[n]$. We assume each witness list to be an observed ranking list $y_i, i=1,\dots,N$ drawn from the `true' social hierarchy partial order $h$ with possible recording errors. The lists $Y=\{y_1,\dots,y_N\}$ are incomplete, in the sense that the membership in list $i$ is $o_i\subset [n]$ in a certain time period. The witness lists are of varied lengths. 

\begin{table}[h!]
\centering
\begin{tabular}{*3c}
\toprule
Time Period &  1131-1133 & 1100-1103\\
\midrule
 Number of Lists & 21 & 13\\
 Number of Actors & 15 & 9\\
 Length of the Longest List & 8 & 8 \\
 Length of the Shortest List & 3 & 2 \\
 \bottomrule
\end{tabular}
\caption{Information on data structure.}
\label{table:data-summery}
\end{table}
 Some contradictions can be observed in the witness lists. For example, in 1131-1133, $7, 11, 12$ and $13$ appear in both lists 2 and 16 (along with others). However in list $2$ the order is $13, 12, 7, 11$ while list 16 has $12$ and $13$ swapped but otherwise the same. Is this just noise, are $12$ and $13$ unordered or are they tied? Firstly, from a prior modelling perspective but not pretending to any historical expertise, bishop 13 (Henry, de Blois, Bishop of Winchester, 1129-1171) and bishop 12 (Gilbert, the Universal, bishop of London) may be incomparable. They come from two rather independent power systems as Winchester at the time was an important and independent administrative centre and seat of power, on a par with London, so may be incomparable (like administrators and academics). It defined a separate hierarchy of tied cities. Secondly, they might be tied, i.e. they are each identified by all other bishops as top powers and possess the same power relation with all other bishops and are ranked with the same importance. Thirdly, the switch in order in the lists might simply be a recording error. Perhaps Henry (13) simply arrived late for witnessing document 16 while his true status was higher than that of Gilbert (12). It should be said that historians do not expect many errors in these data as the deeds involved concerned properties of significant value, and the King or Queen were often themselves signatories. 

\subsection{Reconstruction-Accuracy Tests}\label{sec:reconstruction}

Our list data are incomplete and the number of lists $N$ are not much larger than the number of actors $n$. In order to measure the reliability of the partial order (social hierarchy) reconstructions and to study the $(PDP,\rho)\backslash M$ and $(PDP,\rho)\backslash Q$ models performance on different datasets, we perform reconstruction accurcy tests on both models. In particular, we take the last sampled state of partial order $h^{(T)}$ in section \ref{sec:real-app} and generate synthetic data with the same lengths and list-membership as the real data $$y_i\sim p(\cdot|h^{(T)}[o_i],p,\theta),\ i=1,...,N$$ for both periods 1131-1133 and 1100-1103. If our models are correct then the data contain enough information to reconstruct the truth accurately. We consider the following four synthetic datasets for each time period: 
\begin{itemize}
    \item Simulation 1: error-free, $y_i\sim\mathcal{U}(\L[h^{(T)}[y^{obs}_i]]), \forall i\in [N]$.
    \item Simulation 2: data-list with random error. For $i\in [N]$, we simulate $y_i\sim \mathcal{U}(\L[h^{(T)}[y^{obs}_i]])$ from the noise-free model; we select a pair of actors $a,b\in y_i$ uniform at random, $a\neq b$, and put them in the same order as they appear in the data. If $a\prec b$ in $y_i$ but $a\succ b$ in $y^{obs}_i$ then exchange the positions of $a$ and $b$ in $y_i$ leaving all else unchanged.\footnote{Simulation 2 introduces around 5\% error to the synthetic data. }
    \item Simulation 3: data-list with Mallows error given $\theta^*$, $$y_i\sim p^{(M)}(\cdot|h^{(T)}[o_i],\theta),\forall i\in[N].$$ 
    \item Simulation 4: data-list with queue-jumping error given $p^*$, $$y_i\sim p^{(Q)}(\cdot|h^{(T)}[o_i],p),\forall i\in[N].$$
\end{itemize}
Simulation 2 blends a little of the real data into the synthetic data. It uses a noise process that is neither Mallows nor queue-jumping, so the fitted models are all (mildly) misspecified. Note that the $\theta^*$ and $p^*$ values in Simulation 3 and 4 are chosen so that they produce roughly the same level of noise in the data-lists. The experiments presented in this section choose $\theta^*=2.7$ and $p^*=0.05$.\footnote{With these values $(\theta^*=2.7,p^*=0.05)$ the proportion of lists with errors is $\sim 43\%$ error for the 1131-1133 data structure ($N=21$) and $\sim 24\%$ for the 1100-1103 data structure ($N=13$; lists tend to be shorter so lower error probability, see \ref{app:lists}).} We fit both the $(PDP,\rho)\backslash M$ and $(PDP,\rho)\backslash Q$ models on above simulations. Each MCMC is run for $1e5$ iterations. We omit our convergence analysis on this synthetic data, but ESS values and trace plots showed convergence as good as or better than the convergence and mixing we report for the real data (given below).

We summarise the sampled partial orders using consensus partial orders: $h^{con}(\epsilon)$ includes order relation/edge $(i\succ_{h^{con}(\epsilon)} j)$ if the relation appears more than $\epsilon T$ times in the $T$ MCMC samples after thinning. The true and consensus partial orders for all experiments in this section are given in \ref{app:reconstruction}. We calculate the proportion of the true-positive and false-positive relations for $h^{con}(\epsilon), \epsilon\in [0,1]$ and construct the receiver operating characteristic (ROC) curves in \Fig~\ref{fig:roc1} (for simulation 1 and 2) and \Fig~\ref{fig:roc2} (for simulation 3 and 4) respectively. The ROC curves show the proportion of false-positive and true-positive relations given concensus order threshold $\epsilon$. The true positive rate (TPR) and false positive rate (FPR) increase with decreasing $\epsilon$ from $(0,0)$ at $\epsilon=1$ (the consensus order is empty) to $(1,1)$ at $\epsilon=0$ (complete graph). 
\begin{figure}[h!]
  \centering
  \includegraphics[width=0.85\linewidth]{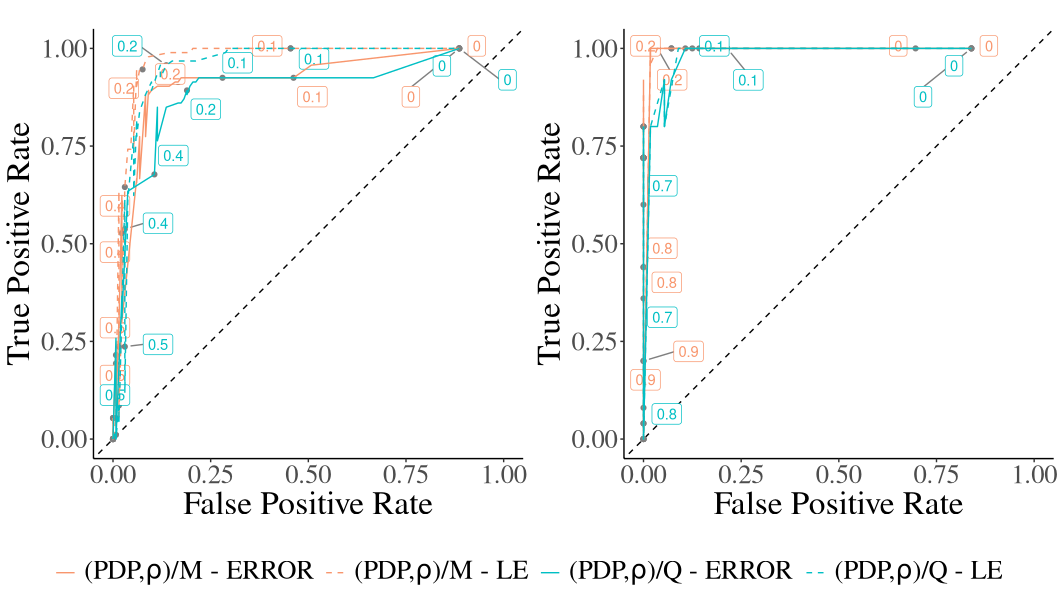}
  \caption{The receiver operating characteristic (ROC) curves for synthetic data - simulation 1 (dashed line) \& 2 (solid line) - using synthetic data with 1131-1133 (left) and 1100-1103 (right) list membership structures, analysed with the $(PDP,\rho)\backslash M$ (orange) and $(PDP,\rho)\backslash Q$ (blue) models. The true/false positive rates are ploted against $\epsilon$ the threshold to construct concensus order $h^{con}(\epsilon), \epsilon\in[0,1]$. }
  \label{fig:roc1}
\end{figure}

\begin{figure}[h!]
  \centering
  \includegraphics[width=0.85\linewidth]{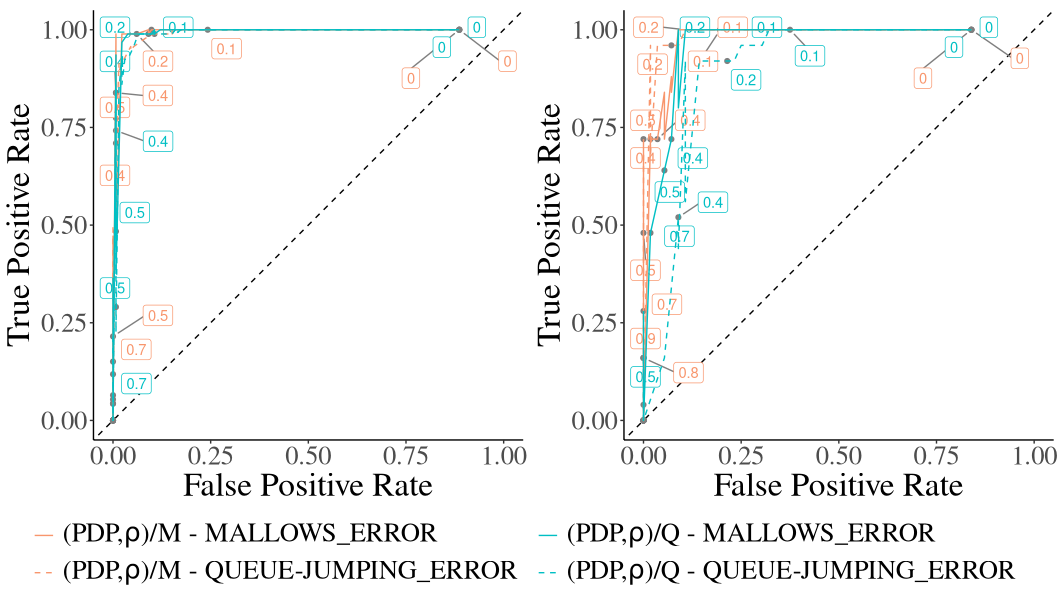}
  \caption{The receiver operating characteristic (ROC) curves for synthetic data - simulation 3 (solid line) \& 4 (dashed line) - using synthetic data with 1131-1133 (left) and 1100-1103 (right) list membership structures, analysed with the $(PDP,\rho)\backslash M$ (orange) and $(PDP,\rho)\backslash Q$ (blue) models. The true/false positive rates are ploted against $\epsilon$ the threshold to construct concensus order $h^{con}(\epsilon), \epsilon\in[0,1]$. }
  \label{fig:roc2}
\end{figure}
For each simulated data set there is $\epsilon$ giving high true-positive and low false-positive reconstructed relation fractions: if our model is accurate then we reconstruct relations well. Both \Fig~\ref{fig:roc1} and \Fig~\ref{fig:roc2} have an elbow closest $(0,1)$ around $\epsilon=0.2$. This value gives a good balance of true- and false-positives for synthetic reconstruction analyses on the $(PDP,\rho)\backslash M$ and $(PDP,\rho)\backslash Q$ models for both time periods. The consensus orders with $\epsilon=0.2$ and the `true' partial orders for simulation are shown in \ref{app:reconstruction}. We use the same threshold when we present consensus orders on real data in section \ref{sec:real-app}. From \Fig~\ref{fig:roc1} and \Fig~\ref{fig:roc2}, both models reconstruct the true order relation well. We observe that the $(PDP,\rho)\backslash M$ model slightly outperforms the $(PDP,\rho)\backslash Q$ model at partial order reconstruction in most scenarios. Under the 1131-1133 structured data, the two models appear to be robust to the Mallows or queue-jumping error type and both reconstruct the truth well.

\subsection{Application on the `Royal Acta' Data}\label{sec:real-app}

We perform Bayesian inference for the $(PDP,\rho)$-partial order model with the queue-jumping\footnote{We choose queue-jumping ``up'' (rather than ``down'') for this specific dataset. For example, we know of cases where a lower status bishop is the nephew of a higher status bishop and is promoted to appear immediately after his uncle in the list. Experiments with the two queue-jumping directions in \cite{nicholls2023bayesian} gave very similar results.} or Mallows observation model. We employ Metropolis-Hasting MCMC for such inference. Each chain is run for $1e5$ iterations where we record every $2n$ steps. We choose a burn-in period of $500\times 2n$ for all chains. See \ref{app:results} for Effective Sample Sizes and \ref{app:trace} for MCMC output traces. These appear to show good MCMC convergence and mixing.

The posterior distributions on partial order depth are shown in \Fig~\ref{fig:depths}. Both the $(PDP,\rho)\backslash M$ and the $(PDP,\rho)\backslash Q$ models conclude similar depths in general. It is clear that the hierarchy is not a total order (depths close to $n$ have low posterior probability in both analyses). The mean posterior depths are around 8.8 for both models in 1131-1133 (with $n=15$ bishops). In 1100-1103 (where $n=9$), we obtain posterior mean depths of 3.9 for $(PDP,\rho)\backslash M$ and 4.8 for $(PDP,\rho)\backslash Q$. The partial orders in 1100-1103 are flatter in general as reflected on the consensus orders in \Fig~\ref{fig:concensus-mallows} and \Fig~\ref{fig:concensus-qj} for the Mallows and the queue-jumping observation models respectively. The consensus orders from different observation models (Mallows or queue-jumping) are similar despite some slight relation difference, showing the result's robustness to different observation models. Bishop 3 (Robert, de Limesey, bishop of Chester), bishop 5 (Robert, Bloet, bishop of Lincoln) and bishop 8 (William, Giffard, bishop of Winchester, 1100-1129) are strongly tied during 1100-1103. During 1131-1133, both models suggest relatively high posterior probability of ties between Bishops 1 (Roger, Bishop of Salisbury), bishop 12 (Gilbert, the Universal, bishop of London) and bishop 13 (Henry, de Blois, Bishop of Winchester, 1129-1171). These are indicated in the `heatmaps' in \Fig~\ref{fig:heatmap-m} and \Fig~\ref{fig:heatmap-qj} which display the probability of different pairs of actors fall in the same cluster.  \\

\begin{minipage}{\linewidth}
    \centering
    \begin{tikzpicture}[thick,scale=0.8, every node/.style={scale=0.6}]
        \node[draw, circle, minimum width=.75cm] (10) at (0, .2) {$10$};
        \node[draw, circle, minimum width=.75cm] (5) at (0, .9) {$5$};
        \node[draw, circle, minimum width=.75cm] (14) at (0,1.6) {$14$};
        \node[draw, circle, minimum width=.75cm] (7) at (0, 2.3) {$7$};      
        \node[draw=blue!60, rectangle, minimum width=7cm,minimum height=1cm] (16) at (1, 3) {};
        \node[draw, circle, minimum width=.75cm] (3) at (-3, 3) {$3$};
        \node[draw, circle, minimum width=.75cm] (1) at (-1, 3) {$1$};
        \node[draw, circle, minimum width=.75cm] (12) at (1, 3) {$12$};
        \node[draw, circle, minimum width=.75cm] (13) at (3, 3) {$13$};
        \node[draw, circle, minimum width=.75cm] (4) at (-3, -.15) {$4$};
        \node[draw, circle, minimum width=.75cm] (8) at (0, -1.2) {$8$};
        \node[draw, circle, minimum width=.75cm] (15) at (0, -0.5) {$15$};
        \node[draw, circle, minimum width=.75cm] (2) at (-1.5, -3.3) {$2$};
        \node[draw, circle, minimum width=.75cm] (9) at (0, -1.9) {$9$};
        \node[draw, circle, minimum width=.75cm] (11) at (0, -2.6) {$11$};
        \node[draw, circle, minimum width=.75cm] (6) at (-1.5, -4) {$6$};
        \draw[-latex] (1) -- (7);
        \draw[-latex] (3) -- (7);
        \draw[-latex] (12) -- (7);
        \draw[-latex] (13) -- (7);
        \draw[-latex] (7) -- (14);
        \draw[-latex] (5) -- (10);
        \draw[-latex] (14) -- (5);
        \draw[-latex] (10) -- (15);
        \draw[-latex] (15) -- (8);
        \draw[-latex] (8) -- (9);
        \draw[-latex] (9) -- (11);
        \draw[-latex] (11) -- (2);
        \draw[-latex] (4) -- (2);
        \draw[-latex] (3) -- (4);
        \draw[-latex] (2) -- (6);

        \node[draw, circle, minimum width=.75cm] (102) at (8, 2.5) {$2$};
        \node[draw, circle, minimum width=.75cm] (101) at (5, .5) {$1$};
        \node[draw, circle, minimum width=.75cm] (103) at (6.5, .5) {$3$};
        \node[draw, circle, minimum width=.75cm] (105) at (8, .5) {$4$};
        \node[draw, circle, minimum width=.75cm] (108) at (9.5, .5) {$5$};
        \node[draw, circle, minimum width=.75cm] (104) at (11, .5) {$8$};
        \node[draw=blue!60, rectangle, minimum width=9cm,minimum height=1cm] (109) at (8, .5) {};
        \node[draw, circle, minimum width=.75cm] (106) at (8, -1.5) {$6$};
        \node[draw, circle, minimum width=.75cm] (109) at (9.5, -3.5) {$9$};
        \node[draw, circle, minimum width=.75cm] (107) at (6.5, -3.5) {$7$};
        \draw[-latex,red] (102) -- (101);
        \draw[-latex,red] (102) -- (103);
        \draw[-latex] (102) -- (105);
        \draw[-latex,red] (102) -- (108);
        \draw[-latex] (102) -- (104);
        \draw[-latex] (101) -- (106);
        \draw[-latex] (103) -- (106);
        \draw[-latex] (105) -- (106);
        \draw[-latex] (108) -- (106);
        \draw[-latex] (104) -- (106);
        \draw[-latex] (106) -- (109);
        \draw[-latex] (106) -- (107);
    \end{tikzpicture}
        \captionof{figure}{The consensus orders for the $(PDP,\rho)\backslash M$ model on 1131-1133 (left) and 1100-1103 (right) Royal Acta (bishop) data. We conclude an edge if such order relation has more than 0.2 posterior probability (inferred from section \ref{sec:reconstruction}). An edge is colored red if it has more than 0.9 posterior probability. The \textit{blue boxes} indicate the tie relations with more than 0.5 posterior probabilities. }\label{fig:concensus-mallows}
\end{minipage}

\begin{minipage}{\linewidth}
    \centering
    \begin{tikzpicture}[thick,scale=.85, every node/.style={scale=0.65}]
        \node[draw, circle, minimum width=.75cm] (9) at (0,-1.5) {$9$};
        \node[draw, circle, minimum width=.75cm] (10) at (0,-.6) {$10$};
        \node[draw, circle, minimum width=.75cm] (5) at (0,.1) {$5$};
        \node[draw, circle, minimum width=.75cm] (15) at (1, -1.05) {$15$};    
        \node[draw, circle, minimum width=.75cm] (7) at (0, 1.5) {$7$};
        \node[draw, circle, minimum width=.75cm] (1) at (-1, 2.9) {$1$};
        \node[draw, circle, minimum width=.75cm] (14) at (0, .8) {$14$};
        \node[draw, circle, minimum width=.75cm] (13) at (-2, 2.2) {$12$};
        \node[draw, circle, minimum width=.75cm] (12) at (0, 2.2) {$13$};
        \node[draw, circle, minimum width=.75cm] (8) at (-1, -1.05) {$8$};
        \node[draw, circle, minimum width=.75cm] (2) at (2, 2.2) {$3$};
        \node[draw, circle, minimum width=.75cm] (6) at (1.7, -3.35) {$6$};
        \node[draw, circle, minimum width=.75cm] (3) at (1.7, -2.65) {$2$};
        \node[draw, circle, minimum width=.75cm] (4) at (3.4, 0.2) {$4$};
        \node[draw, circle, minimum width=.75cm] (11) at (0, -2.2) {$11$};
        \node[draw=blue!60, rectangle, minimum width=3.8cm,minimum height=.8cm] () at (0, -1.05) {};
        \draw[-latex] (1) -- (12);
        \draw[-latex] (1) -- (13);
        \draw[-latex] (2) -- (7);
        \draw[-latex] (13) -- (7);
        \draw[-latex] (7) -- (14);
        \draw[-latex] (12) -- (7);
        \draw[-latex] (5) -- (10);
        \draw[-latex] (11) -- (3);
        \draw[-latex] (10) -- (15);
        \draw[-latex] (10) -- (8);
        \draw[-latex] (14) -- (5);
        \draw[-latex] (8) -- (9);
        \draw[-latex] (9) -- (11);
        \draw[-latex] (3) -- (6);
        \draw[-latex] (4) -- (3);
        \draw[-latex] (15) -- (9);

        \node[draw, circle, minimum width=.75cm] (102) at (9, 2.3) {$2$};
        \node[draw, circle, minimum width=.75cm] (101) at (8, 1.3) {$1$};
        \node[draw, circle, minimum width=.75cm] (103) at (5.6, -.2) {$3$};
        \node[draw, circle, minimum width=.75cm] (105) at (7.2, -.2) {$5$};
        \node[draw, circle, minimum width=.75cm] (108) at (8.8, -.2) {$8$};
        \node[draw, circle, minimum width=.75cm] (104) at (10.4, -.2) {$4$};
        \node[draw=blue!60, rectangle, minimum width=5.5cm,minimum height=1cm] (109) at (7.2, -.2) {};
        \node[draw, circle, minimum width=.75cm] (106) at (8, -1.7) {$6$};
        \node[draw, circle, minimum width=.75cm] (109) at (8.8, -3) {$9$};
        \node[draw, circle, minimum width=.75cm] (107) at (7.05, -3) {$7$};
        \draw[-latex] (102) -- (101);
        \draw[-latex] (101) -- (103);
        \draw[-latex] (101) -- (105);
        \draw[-latex] (101) -- (108);
        \draw[-latex] (102) -- (104);
        \draw[-latex] (103) -- (106);
        \draw[-latex] (105) -- (106);
        \draw[-latex] (108) -- (106);
        \draw[-latex] (104) -- (106);
        \draw[-latex] (106) -- (109);
        \draw[-latex] (106) -- (107);
    \end{tikzpicture}
        \captionof{figure}{The concensus orders for the $(PDP,\rho)\backslash Q$ model on the 1131-1133 (left) and 1100-1103 (right) 'royal acta' (bishop) witness list data. We conclude an edge if such order relation has more than 0.2 posterior probability (inferred from section \ref{sec:reconstruction}). An edge is colored red if it has more than 0.9 posterior probability. The \textit{blue boxes} indicate the tie relations with more than 0.5 posterior probabilities. }\label{fig:concensus-qj}
\end{minipage}

\begin{figure}[h!]
  \centering
  \includegraphics[width=\linewidth]{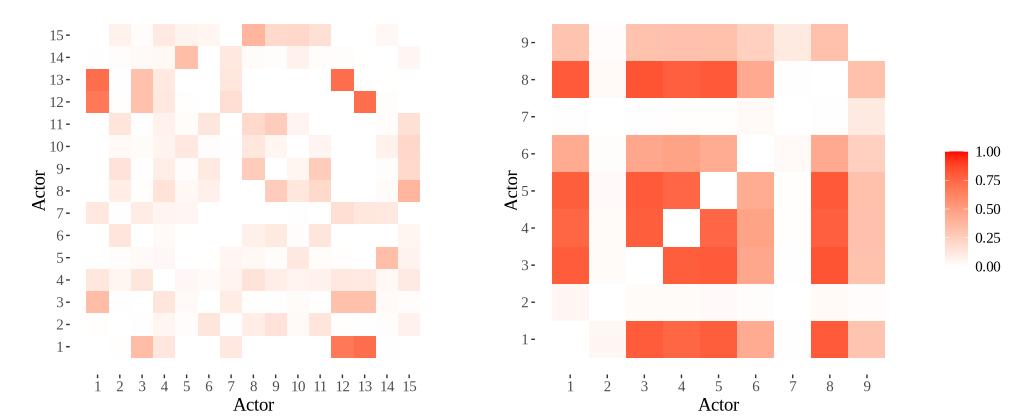}
  \caption{The ‘heatmaps’ for the estimated clusters for the $(PDP,\rho)\backslash M$ model for time periods 1131-1133 (left) and 1100-1103 (right). }
  \label{fig:heatmap-m}
\end{figure} 

\begin{figure}[h!]
  \centering
  \includegraphics[width=\linewidth]{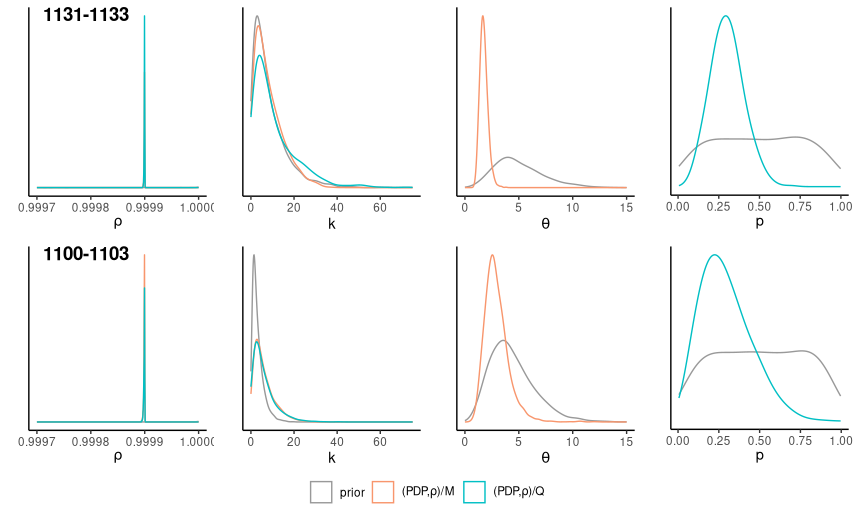}
  \caption{The prior/posterior distributions for key parameters for the $(PDP,\rho)\backslash M$ (pink) and the $(PDP,\rho)\backslash Q$ (blue) models on periods 1131-1133 (upper) and 1100-1103 (lower). The prior distributions are colored in gray. }
  \label{fig:posteriors}
\end{figure} 

We present the prior and posterior distributions on some key parameters, as is shown in \Fig~\ref{fig:posteriors}. \Fig~\ref{fig:depths} shows the prior and posterior distributions of the number of clusters and partial order depths in each model. Both the posterior distributions for $K$ and $\rho$ follow similar behaviour across the two models in different time periods. Prior sensitivity analysis in \ref{app:sensitivity} shows that the posterior of $K$ is less sensitive to the prior choice during 1131-1133 (where there is more data). A prior mean around $n/2$ appears to be the optimal choice among $\Bar{k}\in\{n,n/2,n/4\}$. We therefore present the inference results for prior $\Bar{k}=n/2$ in this section. We notice that the posterior for $\rho$ tends to concentrate on values close to one. This is because $K$ and $\rho$ are only weakly identifiable. For any reasonable geometric prior on $K$, the model can fit well by adjusting the $\rho$ posterior (as $K$ gets larger the hazard for paths to cross increases, so make $\rho$ closer to one for flatter paths and a similar distribution over partial orders). According to our sensitivity analysis to $K$-prior, when the $K$-prior concentrates on a small mean-$K$, the posterior for $\rho$ is supported on  lower values. However, the freedom of choosing both $K$ and $\rho$ provides us more flexible control over partial order depth in the prior. The depth distribution in Figure~\ref{fig:depthp} cannot be realised with a single parameter of choice. 

In the Mallows observation model, the dispersion parameter $\theta$ captures the level of noise in the observed list. The $\theta$ priors are adjusted to allow roughly 10\% of noise in lists as we expect errors were rare in these data. The posterior means for $\theta$ in the $(PDP,\rho)\backslash M$ models are 1.75 and 2.94 in 1131-1133 and 1100-1103 respectively, somewhat lower than the prior mean, so somewhat higher noise than expected a priori. In particular, the witness lists in period 1131-1133 appear to have higher noise compared to 1100-1103. This observation is also supported in the inference from the $(PDP,\rho)\backslash Q$ model. With the queue-jumping observation model, the parameter $p$ indicates the error probability - the probability for a bishop to jump up the queue. The $p$-priors are relatively flat, whereas the posterior distributions for $p$ both indicate an error rate of around 30\% and a relatively concentrated. 

Overall, we obtain consistent inference results from the $(PDP,\rho)\backslash M$ and $(PDP,\rho)\backslash Q$ models in both time periods. We conclude that \textit{Roger, Bishop of Salisbury}, \textit{Henry, de Blois, Bishop of Winchester, 1129-1171} (tied) and \textit{Gilbert, the Universal, bishop of London} are of the highest power hierarchy in 1131-1133 while \textit{Everard, bishop of Norwich} is at the bottom of such order relation. Similarly, in 1100-1103, \textit{Maurice, bishop of London} is of the highest hierarchy while \textit{Ranulf, Flambard, bishop of Durham} and \textit{Roger, Bishop of Salisbury} has the lowest. Numerous factors might contribute to this social power relation, e.g. the area one bishop represent, the age of the bishop, etc. 
We further conclude that there are slight recording errors in the order of witness signature, which can be effectively captured by our model.   

\subsection{Model Comparison}

The $(PDP,\rho)$-partial order model is preferred over the $(K,n,\rho)$-partial order model according to the Bayes Factor. If $M_0$ is the $(K,n,\rho)$-partial order model (with $K$ fixed to $\lceil n/2\rceil$) and $M_1$ is the $(PDP,\rho)$-partial order model, then $M_0$ is nested in $M_1$ (when each actor is in a separate cluster in $M_1$ so $C(S)=n$ (the number of clusters in $S$), $S=(\{1\},\dots,\{n\})$ and the random $K$-value in $M_1$ takes the value $\lceil n/2\rceil$). The Savage-Dickey Density Ratio $B_{10}=\frac{\pi_{M_1}(C=n,K=\lceil n/2 \rceil)}{\pi_{M_1}(C=n,K=\lceil n/2 \rceil|\mathbf{y})}$, which gives the Bayes factor in this case, is easily estimated using samples from the $M_1$ posterior and prior. Values bigger than one are evidence for model $M_1$. As is shown in table \ref{table:bf-tie}, the Bayes factors under different observation models all favor the $(PDP,\rho)$-partial order model for both 1131-1133 and 1100-1103. Clustering and varying $K$ effectively reduce the dimension of $Z$-matrix without compromising the fit. Ties enforce equal relations to other actors, and it seems the data ``likes'' a prior that encourages this sort of structure. The Bayes factors values parsimony so this may also contribute to the weighting.  

\begin{table}[h!]
\centering
\begin{tabular}{r*2c}
\toprule
& \multicolumn{2}{c}{1131-1133} \\
\midrule
 & Mallows & Queue-jumping\\
\midrule
 $B_{10}$ & $4.2$ & $3.2$ \\
 Evidence for $M_1$ & ``substantial'' & ``weak''\\ 
 \midrule
 & \multicolumn{2}{c}{1100-1103}\\
 \midrule
 & Mallows & Queue-jumping\\
 \midrule
 $B_{10}$ & $7.1$ & $33$\\
  Evidence for $M_1$ & ``substantial'' & ''strong''\\
 \bottomrule
\end{tabular}
\caption{Bayes factors $B_{10}$ between the $(PDP,\rho)$-partial order model ($M_1$ \textit{with ties and variable latent matrix dimension}) and the $(K,\lceil n/2 \rceil,\rho)$-partial order model ($M_0$, with $K=\lceil n/2\rceil$ fixed) with either the Mallows or queue-jumping observation models in periods 1131-1133 and 1100-1103. A large Bayes factor indicates preference towards the $(PDP.\rho)$-partial order model. Interpretation follows Jefferys.}
\label{table:bf-tie}
\end{table}

We further compare our models with the popular ranking models - the Plackett-Luce and Mallows Mixture. It would be hard to estimate Bayes factors in this case so we use the expected log pointwise predictive density ($elpd$, \cite{vehtari2017practical}) as our model comparison criterion. This is a predictive loss which can be estimated using the WAIC (\cite{Watanabe2012}). We present $elpd_{waic}$-estimates in table \ref{table:model-comparison}. The partial order models have significantly higher $elpd_{waic}$, and are therefore strongly preferred compared to the total order methods on the 1131-1133 and 1100-1103 Bishop witness list data. Under the $(PDP,\rho)$-partial order model, the Mallows and queue-jumping models perform almost equally well - with the Mallows model being slightly preferred. This is as expected as the Mallows model allows bi-directional errors while the queue-jumping model only allows queue-jumping-up. For the Mallows or Plackett-Luce mixture models, we choose the number of mixture components $M$ that provides the highest $elpd_{waic} (M\in\{1,\dots,10\})$. The Mallows mixture appears to outperform the Plackett-Luce mixture. 

\begin{table}[h!]
\centering
\resizebox{\textwidth}{!}{%
\begin{tabular}{*3c}
\toprule
& \multicolumn{2}{c}{$elpd_{waic}$}\\
\midrule
Models & 1131-1133 & 1100-1103\\
\midrule
$(PDP,\rho)\backslash M$ & -48.2 (9.5) & -23.3 (10.4)\\
$(PDP,\rho)\backslash Q$ & -56.6 (10.4) & -26.0 (11.4) \\
Mallows Mixture (Kendall-tau) & -72.9 (13.0) (M=1) & -33.9 (12.0) (M=3)\\
Plackett-Luce Mixture & -173.2 (17.8) (M=4) & -77.5 (10.4) (M=3) \\
\bottomrule
\end{tabular}}
\caption{Model comparison among $(PDP,\rho)\backslash M$, $(PDP,\rho)\backslash Q$, Plackett-Luce Mixture and Mallows Mixture (with Kendall-tau distance) on time periods 1131-1133 and 1100-1103. We use $elpd_{waic}$ as the model comparison criterion. The results are presented as `estimation (standard error)'. For the Mallows and Plackett-Luce mixture models, we only consider models with 1-10 mixture components. Here we report the best $elpd_{waic}$ with its corresponding number of mixture component $M$. }
\label{table:model-comparison}
\end{table}

\section{Conclusion and Future Work}

We have proposed a new non-parametric statistical model for partial order estimation under the Bayesian framework. The new $(PDP,\rho)$-partial order model improves on the model set out in \cite{nicholls122011partial} by allowing actors to be tied. The more complex model remains projective. The Mallows error model improves on the queue-jumping error model of \cite{nicholls122011partial}. We also estimate, rather than fix, the column dimension of the latent matrix $Z$ introduced by \cite{nicholls122011partial} to model the prior for partial orders. The tied-structure of the new model further reduces the dimension of the latent $Z$-matrix by grouping its rows into equivalence classes. It changes the prior distribution over partial orders to promote more `bucket' order - like structures. This is appealing from a modeling perspective.
Experiments on the `Royal Acta' data showed that the tied model out-performed the nested model without ties and fixed column dimension. 


We also added some theory for the model given in \cite{nicholls122011partial} in the noise free case. We gave necessary and sufficient conditions for the posterior to concentrate on the true partial order. We can ask, if the true model was a partial order model, and we fit say, a Mallows or Plackett-Luce model, what behaviour do we get? Is it possible that the posterior converges to a mixture of rankings rather than a single ranking, and thereby captures some of the ``uncertainty'' which a partial order allows. For the simplest possible example, consider the partial order $h_0$ in figure \ref{POex} with three linear extensions in figure \ref{LEex}. Let $N$ denote the number of ranked lists sampled uniformly at random from the linear extensions of partial order $h_0$. We take $N\rightarrow\infty$. It is not too hard to see that the Mallows model posterior with Hamming or Cayley distance must concentrate on $(1,3,2,4,5)$ in the large data limit (rather than say, a posterior which is uniform on the linear extensions of $h_0$ as we might hope). If we could get a Mallows-mixture to converge to the uniform distribution on the linear extensions of the true partial order it would then be possible to reconstruct the true partial order by intersecting total orders sampled from the posterior. Such an analysis is probably impossible in principle: fitting partial order models is a computational task in $\#P$; fitting a Mallows model with any likelihood which can be evaluated in polynomial time is not in $\#P$. It is unlikely we can solve the harder problem by solving the easier problem. Fundamentally though, the reason we work with partial orders is that we think social hierarchies are well-represented by partial orders, that any total order assumption is too strong, and so if we want to reconstruct partial orders they should be an object in the model.


Future research may consider including covariates in the partial order model, following \cite{nicholls2023bayesian} but adding our non-parametric approach, or exploring a different definition of ties where the tied actors have to appear together in a linear extension. 



\bibliographystyle{elsarticle-harv} 
\bibliography{main}

\appendix
\newpage 

\section{The intersection order is the maximum likelihood estimator}\label{appx-MLE}

The material in the appendix follows the ideas of \cite{beerenwinkel2007conjunctive} forly closely.

\begin{proposition}
Given noise-free data lists $\mathbf{y} = \{y_1, y_2, \dots, y_N\}$ with full lengths, i.e. $y_j\in\P_n \, \forall j\in[N]$, and the uniform observation model on linear extensions $$f(y_j|h) = \frac{\mathbbm{1}_{y_j\in \mathcal{L}[h]}}{|\mathcal{L}[h]|}.$$ The intersection order $h_{int}$ is the maximum likelihood estimator of the true partial order.
\end{proposition}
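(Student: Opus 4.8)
The plan is to recast the maximum-likelihood problem as a discrete minimization. Since every list is full-length, $|\mathcal{L}[h]|$ does not depend on $j$, and the joint likelihood factorises as
\begin{equation*}
f(\mathbf{y}\mid h)=\prod_{j=1}^N \frac{\mathbbm{1}_{y_j\in\mathcal{L}[h]}}{|\mathcal{L}[h]|}=\frac{\prod_{j=1}^N\mathbbm{1}_{y_j\in\mathcal{L}[h]}}{|\mathcal{L}[h]|^{N}}.
\end{equation*}
This vanishes unless every $y_j$ is a linear extension of $h$; call such an $h$ \emph{feasible}. On the feasible set the likelihood equals $|\mathcal{L}[h]|^{-N}$, a strictly decreasing function of $|\mathcal{L}[h]|$. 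Hence maximising $f(\mathbf{y}\mid h)$ over $h\in\H_{[n]}$ is equivalent to minimising $|\mathcal{L}[h]|$ over feasible $h$, and it suffices to show $h_{int}$ attains this minimum.

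Next I would establish two structural facts about $h_{int}$. First, $h_{int}$ is feasible: by its definition it retains only those order relations that hold in every list, so no list can violate a relation of $h_{int}$, giving $y_j\in\mathcal{L}[h_{int}]$ for all $j$. (One should also note in passing that $h_{int}$ is a genuine partial order, being the intersection of the total orders $y_1,\dots,y_N$.) Second, every feasible $h$ satisfies $h\subseteq h_{int}$, in the sense that every order relation of $h$ is also one of $h_{int}$: if $i\prec_h j$ then $i$ precedes $j$ in every linear extension of $h$, hence in every list (as each $y_j\in\mathcal{L}[h]$), so $i\prec_{h_{int}} j$ by the definition of the intersection order.

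I would then invoke monotonicity of the linear-extension count under refinement: if $h\subseteq h'$ then any linear extension of $h'$ respects all relations of $h$ and is therefore a linear extension of $h$, so $\mathcal{L}[h']\subseteq\mathcal{L}[h]$ and $|\mathcal{L}[h']|\le|\mathcal{L}[h]|$. Applying this with $h'=h_{int}$ and using $h\subseteq h_{int}$ for every feasible $h$ yields $|\mathcal{L}[h_{int}]|\le|\mathcal{L}[h]|$, so $h_{int}$ minimises the count and is therefore a maximum-likelihood estimator.

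Finally, to obtain uniqueness (the statement speaks of \emph{the} MLE) I would upgrade the last inequality to a strict one whenever $h\subsetneq h_{int}$. If the containment is strict there is a pair $i\|_h j$ that is ordered, say $i\prec_{h_{int}} j$, in the intersection order. A Szpilrajn-type extension argument produces a linear extension of $h$ placing $j$ before $i$; this lies in $\mathcal{L}[h]$ but not in $\mathcal{L}[h_{int}]$, so $|\mathcal{L}[h]|>|\mathcal{L}[h_{int}]|$ and the minimiser is unique. I expect this uniqueness step to be the only real obstacle: the existence part is immediate from the two containments plus monotonicity, whereas the strict inequality requires the standard but non-trivial fact that any incomparable pair can be ordered either way by some linear extension.
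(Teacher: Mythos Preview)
Your proposal is correct and follows essentially the same route as the paper: factorise the likelihood, observe that maximising it over feasible $h$ amounts to minimising $|\mathcal{L}[h]|$, show $h_{int}$ is feasible and that every feasible $h$ is a subrelation of $h_{int}$, then use monotonicity of the linear-extension count. The only notable difference is that your uniqueness argument is more explicit: the paper simply asserts $|\mathcal{L}[h_{int}]|<|\mathcal{L}[h_r]|$ once it has established $h_r\subsetneq h_{int}$, whereas you justify the strict inequality via a Szpilrajn-type extension of the incomparable pair, which is the cleaner way to close that gap.
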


\begin{proof}
The likelihood function given the uniform observation model is 
\begin{align*}
    L(h; \mathbf{y}) & = \prod_{j=1}^N \frac{\mathbbm{1}_{y_j \in \mathcal{L}[h]}}{|\mathcal{L}[h]|}
     = \frac{1}{|\mathcal{L}[h]|^N}\prod_{j=1}^N \mathbbm{1}_{y_j \in \mathcal{L}[h]}
\end{align*}
The intersection order $h_{int}$ admits $u \succ_{h_{int}} v$ with $u, v \in [n] \text{ and } u\neq v$ if and only if $u$ appears before $v$ in all $\mathbf{y}$. For $j\in[N]$ and $1\leq a<b\leq n$, we have either $y_{j, a}\sim y_{j,b}$ or $y_{j, a}\succ y_{j,b}$ in $h_{int}$. Therefore, $\prod_{j=1}^N \mathbbm{1}_{y_j \in \mathcal{L}[h_{int}]} = 1$.

Let $h_{r} \in\H_{[n]}$ with $h_{r} \neq h_{int}$ be some partial order  such that $y_j \in \mathcal{L}[h_{r}]\, \forall j\in[N]$. If $h_r$ admits one more order relation than $h_{int}$, at least one $y_j$ will be violated. So the condition $y_j \in \mathcal{L}[h_{r}]\, \forall j\in[N]$ wouldn't hold. That is, for all the order relations in $h_{int}$, there exists at least one pair of actors $u,v\in [n], u\neq v$ with $(u \succ_{h_{int}} v)$ such that $u\|_{h_r} v$. This gives $|\mathcal{L}[h_{int}]| < |\mathcal{L}[h_r]|$. Therefore, 
\begin{equation*}
    L(h_{int};Y) > L(h;Y)\; \forall h\in\H_{[n]}, h \neq h_{int}
\end{equation*}
Hence, the intersection order $h_{int}$ is the maximum likelihood estimator of the true partial order given full-length noise-free list data and the uniform observation model on linear extensions. 
\end{proof}

\section{Partial orders on three actors}\label{appx-edge}

The 19 possible edge sets $E_3$ are shown in \Fig~\ref{PO3}.
\begin{multicols}{3}
    \begin{minipage}{0.9\linewidth}
    \centering
    \begin{tikzpicture}[thick,scale=1.07, every node/.style={scale=0.8}]
        \node[draw, circle, minimum width=.1cm] (1) at (-0.5, 0.5) {$1$};
        \node[draw, circle, minimum width=.1cm] (2) at (0.5, 0.5) {$2$};
        \node[draw, circle, minimum width=.1cm] (3) at (0, -0.25) {$3$};
        \coordinate (a1) at (-0.33, 0.435) ;
        \coordinate (a2) at (-0.33, 0.565) ;
        \coordinate (b1) at (0.33, 0.435) ;
        \coordinate (b2) at (0.33, 0.565) ;
        \draw[-latex] (a1) -- (b1);
        \draw[-latex,color=blue] (b2) -- (a2);
    \end{tikzpicture}
    \end{minipage}%
    \begin{minipage}{0.9\linewidth}
    \centering
    \begin{tikzpicture}[thick,scale=1.07, every node/.style={scale=0.8}]
        \node[draw, circle, minimum width=.1cm] (1) at (-0.5, 0.5) {$1$};
        \node[draw, circle, minimum width=.1cm] (2) at (0.5, 0.5) {$2$};
        \node[draw, circle, minimum width=.1cm] (3) at (0, -0.25) {$3$};
        \coordinate (a1) at (-0.30, 0.37) ;
        \coordinate (a2) at (-0.42, 0.35) ;
        \coordinate (b1) at (-0.07, -0.1) ;
        \coordinate (b2) at (-0.2, -0.12) ;
        \draw[-latex] (a1) -- (b1);
        \draw[-latex,color=blue] (b2) -- (a2);
    \end{tikzpicture}
    \end{minipage}%
    \begin{minipage}{0.9\linewidth}
    \centering
    \begin{tikzpicture}[thick,scale=1.07, every node/.style={scale=0.8}]
        \node[draw, circle, minimum width=.1cm] (1) at (-0.5, 0.5) {$1$};
        \node[draw, circle, minimum width=.1cm] (2) at (0.5, 0.5) {$2$};
        \node[draw, circle, minimum width=.1cm] (3) at (0, -0.25) {$3$};
        \coordinate (a1) at (0.30, 0.37) ;
        \coordinate (a2) at (0.42, 0.35) ;
        \coordinate (b1) at (0.07, -0.1) ;
        \coordinate (b2) at (0.2, -0.12) ;
        \draw[-latex] (a1) -- (b1);
        \draw[-latex,color=blue] (b2) -- (a2);
    \end{tikzpicture}
    \end{minipage}
\end{multicols}

\begin{multicols}{3}
    \begin{minipage}{0.9\linewidth}
    \centering
    \begin{tikzpicture}[thick,scale=1.07, every node/.style={scale=0.8}]
        \node[draw, circle, minimum width=.1cm] (1) at (-0.5, 0.5) {$1$};
        \node[draw, circle, minimum width=.1cm] (2) at (0.5, 0.5) {$2$};
        \node[draw, circle, minimum width=.1cm] (3) at (0, -0.25) {$3$};
        \coordinate (a1) at (-0.33, 0.435) ;
        \coordinate (a2) at (-0.33, 0.565) ;
        \coordinate (b1) at (0.33, 0.435) ;
        \coordinate (b2) at (0.33, 0.565) ;
        \coordinate (c1) at (-0.30, 0.37) ;
        \coordinate (c2) at (-0.44, 0.37) ;
        \coordinate (d1) at (-0.07, -0.1) ;
        \coordinate (d2) at (-0.2, -0.12) ;
        \draw[-latex] (a1) -- (b1);
        \draw[-latex,color=blue] (b2) -- (a2);
        \draw[-latex] (c1) -- (d1);
        \draw[-latex,color=blue] (d2) -- (c2);
    \end{tikzpicture}
    \end{minipage}%
    \begin{minipage}{0.9\linewidth}
    \centering
    \begin{tikzpicture}[thick,scale=1.07, every node/.style={scale=0.8}]
        \node[draw, circle, minimum width=.1cm] (1) at (-0.5, 0.5) {$1$};
        \node[draw, circle, minimum width=.1cm] (2) at (0.5, 0.5) {$2$};
        \node[draw, circle, minimum width=.1cm] (3) at (0, -0.25) {$3$};
        \coordinate (a1) at (-0.30, 0.37) ;
        \coordinate (a2) at (-0.42, 0.35) ;
        \coordinate (b1) at (-0.07, -0.1) ;
        \coordinate (b2) at (-0.2, -0.12) ;
        \coordinate (c1) at (0.30, 0.37) ;
        \coordinate (c2) at (0.42, 0.35) ;
        \coordinate (d1) at (0.07, -0.1) ;
        \coordinate (d2) at (0.2, -0.12) ;
        \draw[-latex] (a1) -- (b1);
        \draw[-latex,color=blue] (b2) -- (a2);
        \draw[-latex] (c1) -- (d1);
        \draw[-latex,color=blue] (d2) -- (c2);
    \end{tikzpicture}
    \end{minipage}%
    \begin{minipage}{0.9\linewidth}
    \centering
    \begin{tikzpicture}[thick,scale=1.07, every node/.style={scale=0.8}]
        \node[draw, circle, minimum width=.1cm] (1) at (-0.5, 0.5) {$1$};
        \node[draw, circle, minimum width=.1cm] (2) at (0.5, 0.5) {$2$};
        \node[draw, circle, minimum width=.1cm] (3) at (0, -0.25) {$3$};
        \coordinate (a1) at (-0.33, 0.435) ;
        \coordinate (a2) at (-0.33, 0.565) ;
        \coordinate (b1) at (0.33, 0.435) ;
        \coordinate (b2) at (0.33, 0.565) ;
        \coordinate (c1) at (0.30, 0.37) ;
        \coordinate (c2) at (0.42, 0.35) ;
        \coordinate (d1) at (0.07, -0.1) ;
        \coordinate (d2) at (0.2, -0.12) ;
        \draw[-latex] (b1) -- (a1);
        \draw[-latex,color=blue] (a2) -- (b2);
        \draw[-latex] (c1) -- (d1);
        \draw[-latex,color=blue] (d2) -- (c2);
    \end{tikzpicture}
    \end{minipage}
\end{multicols}

\begin{multicols}{3}
    \begin{minipage}{0.9\linewidth}
    \centering
    \begin{tikzpicture}[thick,scale=1.07, every node/.style={scale=0.8}]
        \node[draw, circle, minimum width=.1cm] (1) at (-0.5, 0.5) {$1$};
        \node[draw, circle, minimum width=.1cm] (2) at (0.5, 0.5) {$2$};
        \node[draw, circle, minimum width=.1cm] (3) at (0, -0.25) {$3$};
        \coordinate (a1) at (-0.33, 0.435) ;
        \coordinate (a2) at (-0.33, 0.565) ;
        \coordinate (b1) at (0.33, 0.435) ;
        \coordinate (b2) at (0.33, 0.565) ;
        \coordinate (c1) at (-0.30, 0.37) ;
        \coordinate (c2) at (-0.42, 0.35) ;
        \coordinate (d1) at (-0.07, -0.1) ;
        \coordinate (d2) at (-0.2, -0.12) ;
        \draw[-latex] (b1) -- (a1);
        \draw[-latex,color=blue] (a2) -- (b2);
        \draw[-latex] (c1) -- (d1);
        \draw[-latex,color=blue] (d2) -- (c2);
    \end{tikzpicture}
    \end{minipage}%
    \begin{minipage}{0.9\linewidth}
    \centering
    \begin{tikzpicture}[thick,scale=1.07, every node/.style={scale=0.8}]
        \node[draw, circle, minimum width=.1cm] (1) at (-0.5, 0.5) {$1$};
        \node[draw, circle, minimum width=.1cm] (2) at (0.5, 0.5) {$2$};
        \node[draw, circle, minimum width=.1cm] (3) at (0, -0.25) {$3$};
        \coordinate (a1) at (-0.30, 0.37) ;
        \coordinate (a2) at (-0.42, 0.35) ;
        \coordinate (b1) at (-0.07, -0.1) ;
        \coordinate (b2) at (-0.2, -0.12) ;
        \coordinate (c1) at (0.30, 0.37) ;
        \coordinate (c2) at (0.42, 0.35) ;
        \coordinate (d1) at (0.07, -0.1) ;
        \coordinate (d2) at (0.2, -0.12) ;
        \draw[-latex] (b1) -- (a1);
        \draw[-latex,color=blue] (a2) -- (b2);
        \draw[-latex] (c1) -- (d1);
        \draw[-latex,color=blue] (d2) -- (c2);
    \end{tikzpicture}
    \end{minipage}%
    \begin{minipage}{0.9\linewidth}
    \centering
    \begin{tikzpicture}[thick,scale=1.07, every node/.style={scale=0.8}]
        \node[draw, circle, minimum width=.1cm] (1) at (-0.5, 0.5) {$1$};
        \node[draw, circle, minimum width=.1cm] (2) at (0.5, 0.5) {$2$};
        \node[draw, circle, minimum width=.1cm] (3) at (0, -0.25) {$3$};
        \coordinate (a1) at (-0.33, 0.435) ;
        \coordinate (a2) at (-0.33, 0.565) ;
        \coordinate (b1) at (0.33, 0.435) ;
        \coordinate (b2) at (0.33, 0.565) ;
        \coordinate (c1) at (0.30, 0.37) ;
        \coordinate (c2) at (0.42, 0.35) ;
        \coordinate (d1) at (0.07, -0.1) ;
        \coordinate (d2) at (0.2, -0.12) ;
        \draw[-latex] (a1) -- (b1);
        \draw[-latex,color=blue] (b2) -- (a2);
        \draw[-latex] (c1) -- (d1);
        \draw[-latex,color=blue] (d2) -- (c2);
    \end{tikzpicture}
    \end{minipage}
\end{multicols}

\begin{minipage}{\linewidth}
\centering
\begin{tikzpicture}[thick,scale=1.07, every node/.style={scale=0.8}]
    \node[draw, circle, minimum width=.1cm] (1) at (-0.75, 0) {$1$};
    \node[draw, circle, minimum width=.1cm] (2) at (0, 0) {$2$};
    \node[draw, circle, minimum width=.1cm] (3) at (0.75, 0) {$3$};
\end{tikzpicture}

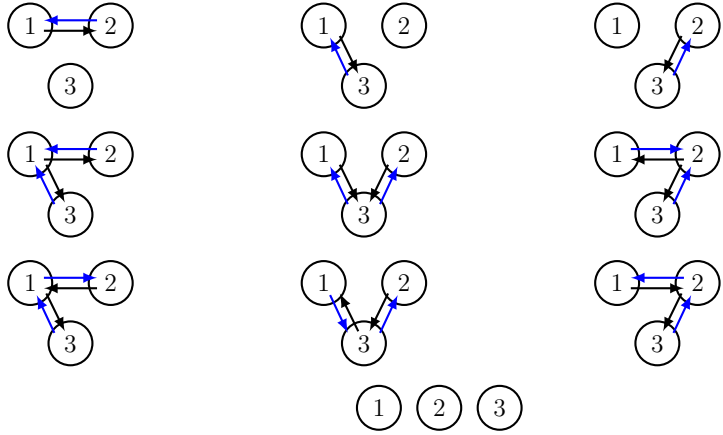
\captionof{figure}{All possible partial orders given 3 actors. The blue and black edges represent different partial orders respectively.}\label{PO3}
\end{minipage}%

\section{Proof of Proposition~\ref{prop:projective-kn}}\label{appx-kr-projectivity}

{\bf Proposition~\ref{prop:projective-kn}.}
{\it The $(K,n)$-partial order model distributions are projective.}

\begin{proof}
For $h\in\H_{[n]}$ let $Z[h]=\{Z'\in R^{n\times K}: h(Z')=h\}$ be the set of $Z$-matrices generating $h$ so that if $\pi_{[n]}(h)=\Pr(H=h)$ under the $(K,n)$-model then $\pi_{[n]}(h)=\Pr(Z\in Z[h])$. Let $o=[n-1]$. 
Let $H_{n-1}\sim \pi_{[n-1]}$ and $H_n\sim \pi_{[n]}$. It is sufficient for marginal consistency to show that $H_n[o]\sim H_{n-1}$, that is $\Pr(H_{n-1}=h)=\Pr(H_n[o]=h)$ for all $h\in \H_{n-1}$. We begin on the RHS. 

Let $Z\in {[0,1]}^{n\times K}$ with $Z_{i,{1\dv K}}\sim \mathcal{U}(0,1)^K,\ i=1,...,n$ so that $H_n\sim h(Z)$.
We have $$\Pr(H_n[o]=h)=\Pr(h(Z)[o]=h).$$
Next we observe that $h(Z)[o]=h(Z[o,{1\dv K}])$ and so $$\Pr(h(Z)[o]=h)=\Pr(h(Z[o,{1\dv K}])=h).$$
This is because relations between any two rows of $Z$ are not effected by the values in any other rows so the suborder $h(Z)[o]$ we get by dropping actors outside $o$
is the same as the suborder we get if we compute it on the rows of $Z$ belonging to the actors in $o$. Now let $Z'\in \mathbb{R}^{(n-1)\times K}$ with $Z'_{i,{1\dv K}}\sim \mathcal{U}(0,1)^K,\ i=1,...,n-1$.
We have $Z'\sim Z[o,{1\dv K}]$ since the rows are all independent. So we may write 
$$\Pr(h(Z[o,{1\dv K}])=h)=\Pr(h(Z')=h)$$ and we are done. 
\end{proof}

\section{Proof of Proposition~\ref{prior-mc}}\label{appx-projectivity}

{\bf Proposition~\ref{prior-mc}.}
{\it For $h^*\in \H^*_n$ let $\pi_{[n]}(h^*)=\Pr(H^*=h^*)$ be the distribution of the random partial order with ties generated by the process in Definition~\ref{def:gen-proc-ties-model} and given in Equation~\ref{eq:marginal-prior-ties}.
The family of distributions $\pi_{[n]}, n\ge 1$ is projective.}

\begin{proof}
Let $o=[n-1]$, let $H^*_n\sim \pi_{[n]}(\cdot|\rho,k)$ defined in Equation~\ref{candprior} and let $H^*_{n-1}\sim \pi_{[n-1]}(\cdot|\rho,k)$.  We wish to show that $H^*_n[o]\sim H^*_{n-1}$. It is again sufficient to show $H^*_n[o]|\rho,k\sim H^*_{n-1}|\rho,k$ for every fixed $\rho$ and $K$; if $\Pr(H^*_n[o]=h^*|\rho,k)=\Pr(H^*_{n-1}=h^*|\rho,k)$ then the marginals $\Pr(H^*_n[o]=h^*)=\Pr(H^*_{n-1}=h^*)$ are equal (just average over $\rho$ and $K$ on both sides).  
In order to verify $H^*_n[o]\sim H^*_{n-1}$ we start with $\Pr(H^*_n[o]=h^*|\rho,k)$. First of all, by the generative model, $H^*_n\sim h(Z(Z^*,S))$ if $S\sim P_{[n],\eta_b,\eta_a}$ and 
\[
Z^*_{c,1\dv K}\sim \mathcal{MVN}(\mathbf{0}, \mathbf{\Sigma}_\rho)
\] 
i.i.d. for $c=1,...,C$. Substituting for $H^*_n$,
\[
\Pr(H^*_n[o]=h^*|\rho,k)=\Pr(h(Z(Z^*,S))[o]=h^*|\rho,k).
\] 
Now 
\[
h(Z(Z^*,S))[o]=h(Z(Z^*,S)[o,1\dv k])
\]
because the relations between actors $1,...,n-1$ determined by $h(\cdot)$ acting on $Z[o,1\dv k]$ are not effected by the presence or values of $Z[n,1\dv k]$ so we just get the suborder.

Now let $S^{-n}=(S_1,...,S_{C^{-n}})$ be the partition $S$ with $n$ removed. Let $Z^{*,-n}$ be the corresponding parameter matrix. If $n$ is in a partition by itself, which must be $S_C=\{n\}$, then that set is deleted and $C^{-n}=C-1$; otherwise $C^{-n}=C$. If $C^{-n}=C$ then $Z^{*,-n}=Z^*$ and otherwise $Z^{*,-n}=Z^*_{1\dv C-1,1\dv k}$, so the last row of $Z^*$ is dropped.

In terms of these quantities, $Z(Z^*,S)[o,1\dv k]=Z(Z^{*,-n},S^{-n})$, since again, if we remove $n$ from the partition $S$ then use $S^{-n}$ with $Z^{*,-n}$ to determine the rows of $Z$ for $1,...,n-1$ in $Z$ we get the same reduced $Z$ as if we determined the entries in $Z$ from $(Z^*,S)$ and then dropped row $n$ from $Z$.
We have then 
\[\Pr(H^*_n[o]=h^*|\rho,k)=\Pr(h(Z(Z^{*,-n},S^{-n}))=h^*|\rho,k).\]

The two parameter Poisson-Dirichlet process generates exchangeable random partitions \cite{pitman2006combinatorial}. This indicates each item is treated independently, as if they are the `last arrival'. 

The two parameter Poisson-Dirichlet process generates exchangeable random partitions \cite{pitman2006combinatorial}. This indicates each item is treated independently, as if they are the `last arrival'. Given the formulation of the two-parameter Poisson-Dirichlet process , we have $S^{-n}\sim P_{[n-1],\eta_b,\eta_a}$. In addition, the latent matrix setup gives $Z^{*,-n}_{c,1\dv k}\sim \mathcal{MVN}(\mathbf{0}, \mathbf{\Sigma}_\rho)$ i.i.d. for $c=1,...,C^{-n}$. It follows that
$h(Z(Z^{*,-n},S^{-n}))$ is a realisation from the generative model at fixed $\rho, k$ on $[n-1]$ and hence $h(Z(Z^{*,-n},S^{-n}))\sim H^*_{n-1}$.
We conclude that $\Pr(H^*_n[o]=h^*|\rho,k)=\Pr(H^*_{n-1}=h^*|\rho,k)$ for every fixed $\rho,k$, and hence, from the preamble $H^*_n[o]\sim H^*_{n-1}$. We have shown marginal consistency for the special case of $o=[n-1]$ but the actor labels are exchangeable,
so we may remove any subset $o\subset [n]$ by removing one entry in $[n]\setminus o$ at a time, sorting it to be the last ``arrival'' in the Poisson-Dirichlet process clustering labels.
\end{proof}

\section{Proof of Proposition \ref{Prop:MC-unif}}\label{appx-MC}

The key here is to show the chain is irreducible. This is well known, for example, \cite{Karzanov91} make use of this fact for a closely related chain.

{\bf Proposition \ref{Prop:MC-unif}.}
{\it Consider a Markov chain $\{X_t\}_{t\ge 1}$, with state space $X_t\in \mathcal{L}[h[o_j]]$ for some fixed $j\in\{1,\dots,N\}$. Suppose that at step $t$ we have $X_t=x$. An entry $i\sim U\{1,...,n_j\}$ is chosen uniformly at random. If $i=n_j$ then we reject and set $X_{t+1}=x$ and otherwise $x'=(x_1,...,x_{i+1},x_i,...,x_{n_j})$.
If $x'\in \mathcal{L}[h[o_j]]$ then we set $X_{t+1}=x'$ and otherwise $X_{t+1}=x$.
The process converges in distribution to the uniform distribution over linear extensions of $h[o_j]$, that is, for $l\in \L[h[o_j]]$,
\begin{equation*}
    P(X_n=l)\overset{t\to\infty}{\to}\frac{1}{|\mathcal{L}[h[o_j]]|}\mathbbm{1}_{l\in \mathcal{L}[h[o_j]]}. 
\end{equation*}
}

\begin{proof}
This finite-state Markov chain $\{X_t\}_{t\ge 1}$ is irreducible. Let $l,l'\in\mathcal{L}[h[o_j]]$ with $l\ne l'$ be a pair of linear exentsions in the target space. Let $a=l_1$ and $b=l'_1$ be the first (or ``top'') entries in the lists and suppose $a\ne b$, so the top entries in the lists do not match. Suppose $l'_i=a$ with $i>1$. We will show that we can swap $a$ up to the top of $l'$. Suppose $c=l'_{i-1}$ with $c=b$ if $i=2$. Now we must be able to swap $a$ and $c$ in $l'$ without violating an order in $h[o_j]$ as $a$ appears below $c$ in $l'$ but above $c$ in $l$ (as $a$ is top in $l$ so must be above $c$) so we have both orders $l=(a,...,c,...)$ and $l'=(b,...c,a,...)$. It follows that we must have $a\|_h c$; they are unordered in $h$. Iterating this procedure we can move from $l'$ to $l$ by a finite sequence of swaps each of which have probability $1/n_j$.

It is aperiodic (since it is irreducible and rejects when $i=n_j$, at least), and hence ergodic, so it has a unique stationary distribution $\pi$ over $\mathcal{L}[h[o_j]]$. For $l,l'\in\mathcal{L}[h[o_j]]$ with $l\ne l'$ let $p_{l,l'}=\Pr(X_{t+1}=l'|X_t=l)$ denote the off diagonal elements of the transition matrix. We have
\[
p_{l,l'}=\left\{\begin{array}{cc}
     1/n_j & \mbox{if $l,l'$ differ by a neighbor-swap} \\
     0 & \mbox{otherwise}.
\end{array}\right.
\]
Since $p$ is symmetric, its columns sum to one, so it admits the uniform distribution as a left eigenvector, $\mathbf{1}^T p=\mathbf{1}^T$ with eigenvalue one, where $\mathbf{1}$ is a vector of $|\mathcal{L}[h[o_j]]|$-ones, so 
\[
\pi_l = \frac{1}{|\mathcal{L}[h[o_j]]|},\quad l\in \mathcal{L}[h[o_j]].
\]
\end{proof}

\section{Model Consistency for the Partial Order Model with the Error-Free Observation Model}\label{C-EF}


Recall from the notation of Section~\ref{sec:asymptotic-post} that $O=(o_i)_{i=1}^{2^n-1}$ is the set of all subsets of the actor indices and $I\subseteq [2^n-1]$ is a set of the indices of sets in $O$ for which we have observations, so we only accumulate lists for the subsets $o_i,\ i\in I$ not for every subset in $O$. For $i\in I$ we have $N_i$ realisations $y_{i,1},...,y_{i,N_k}$ of the noise free observation model $p^{(P)}(\cdot|\htrue[o_i])\propto 1/|\L[\htrue[o_i]]|$.

{\bf Proposition~\ref{prop:PONE-Consistency}.}
{\it Let $h^\dagger\in \H_{[n]}$ be given and suppose $y_{i,j}\sim p^{(P)}(\cdot|h^\dagger[o_i])$ jointly independent for all $i\in I$ and $j=1,\dots N_i$.
    Let $\pi(h|\mathbf{y}_{1:N})$ be the posterior of the $(K,n,\rho)\backslash P$ model with $K\ge \lfloor n/2\rfloor$, so that
    $$\pi(h|\mathbf{y}_{1:N}) = \int_{Z:h(Z)=h}\int_{\rho=0}^1 \pi(Z,\rho|\mathbf{y}) d\rho dZ$$
    where
    $$\pi(Z,\rho|\mathbf{y}_{1:N}) \propto p^{(P)}(\mathbf{y}_{1:N}|h(Z))\pi(Z|\rho)\pi_\rho(\rho).$$
    If for each pair of actors $(i,j)\in[n]\times [n]$ with $i\neq j$, there exists $k\in I$ such that $\{i,j\}\subseteq o_k$ then $\pi(\htrue|\mathbf{y}_{1:N})\to 1$ as $\min_{i\in I} N_i\to \infty$ for every $h^\dagger\in\H_{[n]}$.}

\begin{proof}
Assume $\htrue\in \H_{[n]}$ to be the true partial order on $[n]$ behind the observed linear extensions $\mathbf{y}_{1:N}$.
    We denote $o_i=\{o_{i,1},o_{i,2},\dots,o_{i,k_i}\}\subseteq [n]$ a subset of actors with size $|o_i|=k_i\leq n$ where $i$ can take values from $\{1,\dots, 2^n-1\}$. The suborders of $\htrue$ are $\htrue[o_i]$.

    Assume we observe $N$ linear extensions $\mathbf{y}_{1:N}$. Assume there are $N_i$ lists $(y_{i,1},\dots,y_{i,N_i})$ from each suborder $\htrue[o_i]$ where $y_{i,j}=(y_{i,j,1},\dots,y_{i,j,k_i})\in\P_{o_i}$ is a list, $j = 1,\dots,N_i$. Then the total number of observed lists is $N = \sum_{i=1}^{2^n-1}N_i$. For a general partial order $h\in\H_{[n]}$, and under the error-free observation model, we would have
    \begin{equation}
        y_{i,j}\sim P(y_{i,j}|h[o_i]) = \frac{1}{|\mathcal{L}[h[o_i]]|}\mathbbm{1}_{y_{i,j}\in\mathcal{L}[h[o_i]]}. 
    \end{equation}
    Denote by $I$ the indices of suborders appearing in $\mathbf{y}_{1:N}$,
    \begin{equation}
        Pr(\mathbf{y}_{1:N}|h) = \prod_{i\in I}\prod_{j = 1}^{N_i} P(y_{i,j}|h[o_i]). 
    \end{equation}
    Let $M_{i,r}=\sum_{j=1}^{N_i}\mathbbm{1}_{y_{i,j}=r}$ be the number of times a list $r\in\P_{o_i}$ appears in $\{y_{i,1},\dots,y_{i,N_i}\}$. 
    If we are considering the posterior for $h$ then, since $\mathbf{y}_{1:N}$ would be linear extensions of $h$, we would have $y_{i,j}\in\mathcal{L}[h[o_i]], \forall i\in I,j\in[N_i]$, which gives $M_{i,r}=0$ if $r\notin\mathcal{L}[h[o_i]]$. The $(M_{i,r})_{r\in\mathcal{L}[h[o_i]]}$ values follow the multinomial distribution, such that,
    \begin{equation}\label{M-MN}
        (M_{i,r})_{r\in\mathcal{L}[h[o_i]]} \sim \text{Multinomial}(N_i, \frac{1}{|\mathcal{L}[h[o_i]]|},\dots,\frac{1}{|\mathcal{L}[h[o_i]]|}).
    \end{equation}
    This gives that for any $h$,
    \begin{equation}
        Pr(\mathbf{y}_{1:N}|h) = \prod_{i \in I}\prod_{r\in\mathcal{L}[h[o_i]]} P(r|h[o_i])^{M_{i,r}}. 
    \end{equation} 
    
     To show model consistency, we would like to show
    \begin{equation}\label{eq:EF-app-consistency}
        \lim_{N \to \infty} \frac{\pi(h|\mathbf{y}_{1:N})}{\pi(\htrue|\mathbf{y}_{1:N})} = \lim_{N \to \infty} \frac{Pr(\mathbf{y}_{1:N}|h)\pi(h)}{Pr(\mathbf{y}_{1:N}|\htrue)\pi(\htrue)} =
        \begin{cases}
        0 & \quad \text{when } h \neq \htrue\text{;}\\
        1 & \quad \text{when } h = \htrue\text{.} 
        \end{cases}
    \end{equation}
    
    Our prior $\pi(h)>0, \forall h\in\H_{[n]}$ in equation (\ref{prior-h}) as $K\ge \lfloor n/2\rfloor$ so the set $\{Z\in \mathbb{R}^{n\times K}: h(Z)=h\}$ has positive probability for every $h\in \H_{[n]}$. What remains to be shown is that, if $h\neq \htrue$, $\frac{Pr(\mathbf{y}_{1:N}|h)}{Pr(\mathbf{y}_{1:N}|\htrue)} \to 0$ when $N_i\to\infty$, $\forall i\in I$. Let $h[o_i]$ denote a suborder of $h$ corresponding to data lists $\mathbf{y}_i$. Consider 
    \begin{equation}
        \frac{Pr(\mathbf{y}_{1:N}|h)}{Pr(\mathbf{y}_{1:N}|\htrue)} = \prod_{i \in I}\frac{P(\mathbf{y}_i|h[o_i])}{P(\mathbf{y}_i|\htrue[o_i])} = \prod_{i \in I}R_i.
    \end{equation}
    Since $\htrue$ is the true partial order, we have $y_{k,j}\in\mathcal{L}[\htrue[o_k]]$. This gives 
    \begin{equation}
        R_i = \prod_{r\in\mathcal{L}[\htrue[o_i]]}\left(\frac{P(r|h[o_i])}{P(r|\htrue[o_i])} \right)^{M_{i,r}} = \prod_{r\in\mathcal{L}[\htrue[o_i]]}\left(\frac{|\mathcal{L}[\htrue[o_i]]|}{|\mathcal{L}[h[o_i]]|}\mathbbm{1}_{r\in\mathcal{L}[h[o_i]]} \right)^{M_{i,r}}.
    \end{equation}
    Given equation (\ref{M-MN}), and taking $h=\htrue$, we have $(\frac{M_{i,r}}{N_i})_{r\in\mathcal{L}[\htrue[o_i]]}\overset{a.s.}{\to}\frac{1}{|\mathcal{L}[\htrue[o_i]]|}$ as $N_i\to\infty$. This gives
    \begin{equation}
        R_i^{1/N_i}\overset{a.s.}{\to} \prod_{r\in\mathcal{L}[\htrue[o_i]]}\left(\frac{|\mathcal{L}[\htrue[o_i]]|}{|\mathcal{L}[h[o_i]]|}\mathbbm{1}_{r\in\mathcal{L}[h[o_i]]}\right)^{1/|\mathcal{L}[\htrue[o_i]]|}
    \end{equation}
    as $N_i\to\infty$. 
    For convenience, set $A = \mathcal{L}[\htrue[o_i]]$ and $B = \mathcal{L}[h[o_i]], i \in I$. 
    
    There are three different scenarios for the relation between $A$ and $B$.
    \begin{enumerate}
        \item If $A\setminus B\neq\emptyset$, then $\mathbbm{1}_{r\in B}=0$ for some $r\in A$. We have $R_i^{1/N_i}\overset{a.s.}{\to}0$. This gives
        \begin{equation}
            R_i\overset{a.s.}{\to} 0 \text{ as } N_i\to\infty \text{ when } A\setminus B\neq \emptyset. 
        \end{equation}
        \item If $A = B$, then $\mathcal{L}[\htrue[o_i]] =\mathcal{L}[h[o_i]]$ then $R_i$ is equal to 1 (and two partial orders with the same linear extensions are equal so $\htrue[o_i]=h[o_i]$).  
        \item If $A \subset B$ then $|A|<|B|$. In that case
        \begin{align}
            R_i^{1/N_i}\overset{a.s.}{\to} \prod_{r\in A}\left(\frac{|A|}{|B|}\right)^{1/|A|} = \left(\frac{|A|}{|B|}\right)^{|A|/|A|} = \frac{|A|}{|B|}<1,
        \end{align}
        so $R_i\overset{a.s.}{\to}0$ in this case.
    \end{enumerate}
    From cases 1 and 3 we see that when $\htrue[o_i]\neq h[o_i]$, $R_i\overset{a.s.}{\to}0$ as $N_i\to\infty, i \in I$. Assembling these results,
    \begin{equation}
        \frac{\pi(h|\mathbf{y}_{1:N})}{\pi(\htrue|\mathbf{y}_{1:N})} = \frac{\pi(h)}{\pi(\htrue)}\prod_{i\in I} R_i \overset{N\to\infty}{\to}
        \begin{cases}
        0 & \quad \text{if } h[o_i] \neq \htrue[o_i]\text{ for any } i \in I\text{;}\\
        \frac{\pi(h)}{\pi(\htrue)} & \quad \text{if } h[o_i] = \htrue[o_i]\text{ for all } i \in I\text{.} 
        \end{cases}
    \end{equation}
    What's left to show is that if $h[o_i] = \htrue[o_i]$ for every $i\in I$ then we must have $h = \htrue$. This is where the last condition on $I$ comes in. We want to show
    \begin{equation}\label{NE-pos}
        \frac{\pi(h|\mathbf{y}_{1:N})}{\pi(\htrue|\mathbf{y}_{1:N})}\overset{N\to\infty}{\to}
        \begin{cases}
            0 & \quad \text{when } h \neq \htrue\text{;}\\
            1 & \quad \text{when } h = \htrue\text{,} 
        \end{cases}
    \end{equation}
    for any $h\in\H_{[n]}$ and the true partial order $\htrue\in\H_{[n]}$. This requires the following claim. 
    \begin{claim}
        \textbf{Condition (*)}: $\forall (i,j)\in[n]\times[n]$ with $i\neq j$, $\exists k\in I$ such that $(i,j)\in o_k$. Every pair in $[n]\times[n]$ appears together at least once in the sets for which we have rank information. 
        
        Equation (\ref{NE-pos}) holds for every $h, \htrue\in\H_{[n]}$ where $\htrue$ is the true partial order, if and only if condition (*) holds. 
    \end{claim}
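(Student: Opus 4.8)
The plan is to reduce the biconditional entirely to a combinatorial fact, using machinery already in hand. Immediately before the claim we have established the dichotomy
\[
\frac{\pi(h|\mathbf{y}_{1:N})}{\pi(\htrue|\mathbf{y}_{1:N})}\overset{N\to\infty}{\to}
\begin{cases}
0 & \text{if } h[o_i]\neq\htrue[o_i]\text{ for some } i\in I,\\
\pi(h)/\pi(\htrue) & \text{if } h[o_i]=\htrue[o_i]\text{ for all } i\in I,
\end{cases}
\]
together with the fact that $\pi(h),\pi(\htrue)>0$ for all $h,\htrue\in\H_{[n]}$ (since $K\ge\lfloor n/2\rfloor$). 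Granting this, all limiting and measure-theoretic content is already discharged, and Equation~\ref{NE-pos} holds for a given pair $(h,\htrue)$ precisely when ``agreement of all observed suborders'' coincides with ``equality of the full orders''. I would therefore prove the two implications of the claim separately.

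For the ``if'' direction I would assume Condition~(*) and verify Equation~\ref{NE-pos} for arbitrary $h,\htrue$. When $h=\htrue$ the ratio is identically one, so the limit is one. When $h\neq\htrue$ the orders differ on some pair $(a,b)$ with $a\neq b$; by Condition~(*) there is $k\in I$ with $\{a,b\}\subseteq o_k$. The key observation is that the order relation between $a$ and $b$ in the suborder $h[o_k]$ coincides with their relation in $h$, because suborders inherit all relations of the parent order (Section~\ref{PO}), and likewise for $\htrue$. Hence $h[o_k]$ and $\htrue[o_k]$ disagree on $(a,b)$, so $h[o_k]\neq\htrue[o_k]$, and the first branch of the dichotomy forces the ratio to $0$. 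This establishes Equation~\ref{NE-pos} for every pair.

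For the ``only if'' direction I would argue by contraposition: assuming Condition~(*) fails, I exhibit a single pair $(h,\htrue)$ violating Equation~\ref{NE-pos}. Failure of Condition~(*) yields an uncovered pair $(a,b)$, $a\neq b$, with $\{a,b\}\not\subseteq o_k$ for every $k\in I$. I would take $\htrue$ to be the empty order and let $h$ be the order obtained by adding the single relation $a\succ_h b$; this is vacuously transitively closed and acyclic, hence a legitimate element of $\H_{[n]}$, and $h\neq\htrue$. For every $k\in I$ at most one of $a,b$ lies in $o_k$, so the added relation is dropped on restriction and $h[o_k]=\htrue[o_k]$ (both empty on $o_k$). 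The second branch of the dichotomy then gives limit $\pi(h)/\pi(\htrue)\in(0,\infty)$ rather than $0$, contradicting Equation~\ref{NE-pos} for $h\neq\htrue$. Thus failure of Condition~(*) breaks the claim.

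I expect no genuine obstacle, only a mild check in the ``only if'' construction: one must confirm that the perturbed order $h$ is a valid (transitively closed, acyclic) partial order and that deleting an uncovered pair really collapses its suborders onto those of $\htrue$. Both are immediate, since adding a single covering relation to the empty order creates no implied transitive edges. The ``if'' direction is just suborder inheritance applied pairwise, so the whole argument is essentially bookkeeping on top of the dichotomy already proved.
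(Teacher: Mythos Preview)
Your proposal is correct and follows essentially the same approach as the paper: for the ``if'' direction you use suborder inheritance to show that a discrepancy between $h$ and $\htrue$ on some pair forces a discrepancy on an observed suborder, and for the ``only if'' direction you take $\htrue$ empty and perturb by a single relation on the uncovered pair. The paper phrases the ``if'' direction contrapositively (agreement on all suborders implies $h=\htrue$) and in the ``only if'' direction mentions all three possible perturbations $h^{i\succ j},h^{i\prec j},h^{i\|j}$ rather than just one, but the substance is identical.
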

    \begin{proof}
        If condition (*) holds, and $h[o_k] = \htrue[o_k]$ for every $k\in I$ then for every $(i,j)\in[n]\times[n]$, we have $\{i,j\}\subseteq o_k$ for some $k$. It follows that $h[o_k][\{i,j\}] = \htrue[o_k][\{i,j\}]$ so
        $h[\{i,j\}]=\htrue[\{i,j\}]\in\{i\succ j, i\prec j, i\|j\}$ (the suborder $h[o_k][\{i,j\}]$ of a suborder $h[o_k]$ is just the suborder $h[\{i,j\}]$ of the original partial order $h$). So $h=\htrue$.
        
        If condition (*) does not hold for some pair $(i,j)$, we give a counter-example constructed so that transitivity doesn't inform the missing relation. Suppose there is no $k\in I$ such that $\{i,j\}\subseteq o_k$ for some $\{i,j\}\subseteq [n]$. Take $h$ and $\htrue$ to be empty partial orders and let $h^{i\succ j}$, $h^{i\prec j}$ and $h^{i\|j}$ be three different partial orders obtained by adding order relations  $i\succ j$, $i\prec j$ and $i\|j$ to $h$. Now $h^{i\succ j}$, $h^{i\prec j}$ and $h^{i\|j}$ agree with $\htrue$ on all $o_k, k\in I$ so the posterior ratio converges to the prior ratio. Therefore, if condition (*) is not satisfied,  equation (\ref{NE-pos}) does not hold for every $\htrue$. 
    \end{proof}

    The fact that a condition as strong as Condition (*) is needed is surprising as we might hope transitivity would save us. Whilst this would be the case for any individual partial order, we need it to hold simultaneously for all partial orders, as $\htrue$ is unknown. 
\end{proof}

\section{Prior Properties: Proportion of Vertex-Series-Parallel Orders and Bucket Orders}\label{app:vspbo}

The proportion of vertex-series-parallel orders (VSPs) and bucket orders (BOs) under prior are shown in Table~\ref{tab:propn_vsp_bo}. 

\begin{table}[h!]
    \centering
    \begin{tabular}{c c c c c}
        \toprule
         & \multicolumn{2}{c}{Tied Prior} & \multicolumn{2}{c}{Non-Tied Prior}\\
         Number of Actors & \%VSP & \%BO & \%VSP & \%BO\\
         \midrule
         5  & 98.436\% & 79.187\% & 90.548\% & 64.787\% \\
         10 & 83.910\% & 50.833\% & 49.931\% & 31.803\%\\
         15 & 64.682\% & 37.063\% & 34.776\% & 21.909\%\\
         \bottomrule
    \end{tabular}
    \caption{The probability a random partial order under the tied prior is a VSP or BO. }
    \label{tab:propn_vsp_bo}
\end{table}

\section{An Alternative Tied Scenario}\label{appx:tie2}

Ties may be taken to impose an extra constraint in the observation model:  tied actors `must appear together'. Essentially, tied actors enter a list as if they were a single composite actor. In this setup tied actors are treated as a single node in a linear extension with an ``internal'' permutation over tied nodes taken uniformly at random. In \Fig~\ref{POTex}, instead of regarding $2\|_{h_0^*} 3$ and $2\|_{h_0^*} 4$, one may treat actors $(3,4)$ as one node. There will then only be $4$ linear extensions as are shown in \Fig~\ref{letienoisefree}. This construction does not seem relevant for our application, but may be of interest for further work.\\

\begin{minipage}{0.94\linewidth}
    \centering
    \begin{tikzpicture}[thick,scale=0.8, every node/.style={scale=0.8}]
        \node[draw, circle, minimum width=.1cm] (1) at (0, 2) {$1$};
        \node[draw, circle, minimum width=.1cm] (2) at (0, 1) {$2$};
        \node[draw=blue!60, rectangle, minimum width=1cm,minimum height=2cm] (6) at (0, -0.5) {};
        \node[draw, circle, minimum width=.1cm] (3) at (0, 0) {$3$};
        \node[draw, circle, minimum width=.1cm] (4) at (0, -1) {$4$};
        \node[draw, circle, minimum width=.1cm] (5) at (0, -2) {$5$};
        \draw[-latex] (1) -- (2);
        \draw[-latex] (2) -- (3);
        \draw[-latex] (3) -- (4);
        \draw[-latex] (4) -- (5);
    \end{tikzpicture}
    \begin{tikzpicture}[thick,scale=0.8, every node/.style={scale=0.8}]
        \node[draw, circle, minimum width=.1cm] (1) at (0, 2) {$1$};
        \node[draw, circle, minimum width=.1cm] (2) at (0, 1) {$2$};
        \node[draw=blue!60, rectangle, minimum width=1cm,minimum height=2cm] (6) at (0, -0.5) {};
        \node[draw, circle, minimum width=.1cm] (3) at (0, 0) {$4$};
        \node[draw, circle, minimum width=.1cm] (4) at (0, -1) {$3$};
        \node[draw, circle, minimum width=.1cm] (5) at (0, -2) {$5$};
        \draw[-latex] (1) -- (2);
        \draw[-latex] (2) -- (3);
        \draw[-latex] (3) -- (4);
        \draw[-latex] (4) -- (5);
    \end{tikzpicture}
    \begin{tikzpicture}[thick,scale=0.8, every node/.style={scale=0.8}]
        \node[draw, circle, minimum width=.1cm] (1) at (0, 2) {$1$};
        \node[draw, circle, minimum width=.1cm] (2) at (0, 1) {$3$};
        \node[draw, circle, minimum width=.1cm] (3) at (0, 0) {$4$};
        \node[draw=blue!60, rectangle, minimum width=1cm,minimum height=2cm] (6) at (0, 0.5) {};
        \node[draw, circle, minimum width=.1cm] (4) at (0, -1) {$2$};
        \node[draw, circle, minimum width=.1cm] (5) at (0, -2) {$5$};
        \draw[-latex] (1) -- (2);
        \draw[-latex] (2) -- (3);
        \draw[-latex] (3) -- (4);
        \draw[-latex] (4) -- (5);
    \end{tikzpicture}
    \begin{tikzpicture}[thick,scale=0.8, every node/.style={scale=0.8}]
        \node[draw, circle, minimum width=.1cm] (1) at (0, 2) {$1$};
        \node[draw=blue!60, rectangle, minimum width=1cm,minimum height=2cm] (6) at (0, 0.5) {};
        \node[draw, circle, minimum width=.1cm] (2) at (0, 1) {$4$};
        \node[draw, circle, minimum width=.1cm] (3) at (0, 0) {$3$};
        \node[draw, circle, minimum width=.1cm] (4) at (0, -1) {$2$};
        \node[draw, circle, minimum width=.1cm] (5) at (0, -2) {$5$};
        \draw[-latex] (1) -- (2);
        \draw[-latex] (2) -- (3);
        \draw[-latex] (3) -- (4);
        \draw[-latex] (4) -- (5);
    \end{tikzpicture}
    \captionof{figure}{All four linear extensions for the tied partial order in \Fig~\ref{POTex}.} \label{letienoisefree}                                                       
    \end{minipage}
    
\section{The Plackett-Luce Model}\label{app:PL}

This appendix complements a similar presentation introducing the Mallows noise model in Section~\ref{sec:mallows}.

The Plackett-Luce model, presented independently in \cite{plackett1975analysis} and \cite{luce1959possible}, is a widely used model for rank data, straightforwardly fitting rank-subset data.
The Plackett-Luce model enjoys extensive popularity because of its interpretability and tractability. It is projective so applies straightforwardly to either a complete ranking of all actors, a partial ranking of a subset of actors or the partial ranking of top actors \citep{bradley1952rank}. The literature extends the Plackett-Luce model to different inference schemes. For example, \cite{guiver2009bayesian} proposed an efficient Bayesian inference scheme for the Plackett-Luce model by applying expectation propagation and \cite{caron2012efficient} gives an efficient MCMC scheme targeting a class of Placket-Luce posterior distributions.

Let $\alpha=(\alpha_1,\dots,\alpha_n)$ be a set of actor weights. The Plackett-Luce model defines the probability for ranking $y\in\mathcal{P}_n$ as
\begin{equation*}
    p(y|\alpha) = \prod_{k=1}^n \frac{e^{\alpha_k}}{\sum_{m=k}^n e^{\alpha_m}}.
\end{equation*}
We can modify this usual setup for our purposes, as we can have {\em separate} actor weights for {\em each} list and integrate over these. This will seem counter-intuitive but is needed to preserve the structure of the underlying noise-free model. For $j=1,...,N$, let $\alpha_j=(\alpha_{j,1}, ..., \alpha_{{j,n}}) \in \mathbb{R}^n$ denote the (logit-scale) weights, so that $\alpha_{j,k}$ is the weight for actor $k=1,...,n$ in list $j$ (if they are present). The actor-ordered weights for actors $y_{j,1}, \dots, y_{j,n_j}$ in data-list $j=1,...,N$ are then $\alpha_{j,y_j} = \{\alpha_{j,y_{j,1}}, ..., \alpha_{j,y_{j,n_j}}\}$. The Plackett-Luce model assigns probability mass function
\begin{equation}\label{Plackett-Luce}
    p^{(PL)}(y_j|\mathbf{\alpha}_j) = \prod_{k = 1}^{n_j} \frac{e^{\alpha_{j,y_{j,k}}}}{\sum_{m = k}^{n_j} e^{\alpha_{j,y_{j,m}}}}.
\end{equation}

The Plackett-Luce model is marginally consistent. We can set the prior for $\alpha$ to be $\pi(\alpha|h,\sigma)=\prod_{j=1}^N \pi(\alpha_j|h,\sigma)$ where 
\begin{equation}\label{eq:pl-prior}
    \alpha_j|h,\sigma \propto \mathcal{MVN}(\alpha_j; 0_K,\sigma^2 \mathbbm{I}_n)\mathbbm{1}_{\{R(\alpha_j)\in\mathcal{L}[h]\}}.
\end{equation}
If we denote the rank vector for $\alpha_j$ as $R(\alpha_j)$ where $R(\alpha_j)_i = \sum_{k=1}^n \mathbbm{1}_{\{\alpha_{j,k}\leq \alpha_{j,i}\}}$, $R(\alpha_j)\sim \mathcal{U}(\mathcal{L}[h])$ under this prior distribution. In addition, we show that the normalising constant in equation \ref{eq:pl-prior} is $|\mathcal{L}[h]|/ n!$. There has been literature that considers the Plackett-Luce model on tied ranking data \citep{pl20} and tied actors \citep{henderson2022modelling,10.1093/imaman/dpaa027}.

Though theoretically appealing, the setup requires $\alpha_j \in \mathbb{R}^n$ for $j=1,\dots,N$, which is a large space to explore. We implemented sudo-marginal method to effectively integrate over $\alpha$, where the approximated list of linear extensions are generated via the Bubley-Dyer algorithm \citep{bubley1999faster}. However, the $(PDP,\rho)\backslash M$ model (as discussed in section \ref{sec:mallows}) still shows a large computational advantage in comparison and seems to show no disadvantages as a model for the data. In addition, \cite{liu2019model} shows that the Mallow's model is preferred over Plackett-Luce model. The Plackett-Luce model produces a less certain concensus estimation (higher MSE) in their potato ranking example. A similar result is obtained in our model comparison between the Plackett-Luce mixture and Mallows mixture models in table \ref{table:model-comparison}. We therefore choose the Mallow's noise observation model instead of Plackett-Luce in this paper, but we include the Plackett-Luce Mixture for model comparison in section \ref{sec:application}. 
\pagebreak

\section{MCMC Algorithm}\label{MCMC}

This section summarises the MCMC algorithm used to sample posterior distributions to the parameters of interest. We implemented the partial order with ties model with two noisy observation models - the Mallow's observation model (M) and the queue-jumping model (Q). 

\begin{algorithm}[H]
\caption{MCMC for the Partial Order with Ties Model}
    \SetKwInOut{Input}{input}
    \SetKwFunction{noconflict}{noconflict}
    \SetKwFunction{potie}{potie}
    \SetKwFunction{ncol}{ncol}
    \SetKwFunction{condP}{condP}
    \SetKwProg{Fn}{Function}{:}{}

    \Input{$Z, \rho, S, \theta, p$}
    
    \Fn{\potie{$Z$, $S$}}{
        h =  $\mathbf{0}_{n\times n}$;\\ 
        K = \ncol($Z$);\\
        \ForEach{$i,j\in[n]\times[n], i\neq j$}{
            \If{$Z_{i,m} > Z_{j,m}\, \forall m\in [K]$}{
                $h_{i,j} = 1$; 
            }
        }
        \ForEach{$c\in[|S|]$}{
            $h_{i,j}=0 \,\forall i,j\in S_c$; 
        }
    \KwRet $h$
    }
    
    \hrulefill\emph{Update for $K$ \& $Z$}\hrulefill
    
    Sample $K^*\in\{K-1, K+1\}$ with $\rho_{K,K+1}=\rho_{K,K-1}=0.5$\;
    
    \If{$K^*>0$}{
    \eIf{$K^*=K+1$}{
        \ForEach{$c \in \{1,\dots,|S|\}$} 
            {Set $Z^*_{\cdot,1:K}=Z$ \;
            
            Sample $Z^*_{c,K^*}$ such that $Z^*_{c,\cdot}\sim MVN(\mathbf{0}_{K^*},\Sigma_{\rho})$}
        }{Set $Z^* = Z_{\cdot,1:K^*}$}

        $h^* \leftarrow$ \potie{$Z^*$, $S$}\;
        
        \eIf{model=M}{
            $\phi_K = \frac{\pi(K^*)p^{(Q)}(\mathbf{y}|h(Z^*),\theta)}{\pi(K)p^{(M)}(\mathbf{y}|h(Z),\theta)}$
            }{
            $\phi_K = \frac{\pi(K^*)p^{(Q)}(\mathbf{y}|h(Z^*),p)}{\pi(K)p^{(M)}(\mathbf{y}|h(Z^*),p)}$
        }
        Sample $u_K \sim \mathcal{U}(0,1)$\;
        
        \If{$\phi_K > u_K$}{
            $Z \leftarrow Z^*$; $K \leftarrow K^*$; $h \leftarrow h^*$\;
        }
    }

\end{algorithm}

\newpage

\begin{algorithm}[H]
\caption{MCMC for the Partial Order with Ties Model (part 2)}
    \SetKwInOut{Input}{input}
    \SetKwFunction{noconflict}{noconflict}
    \SetKwFunction{potie}{potie}
    \SetKwFunction{condPL}{condPL}
    \SetKwFunction{condP}{condP}
    \SetKwProg{Fn}{Function}{:}{}

        \hrulefill\emph{Update for $S$ \& $Z$}\hrulefill
    
    Sample $j \in [n]$\;
    
    Set $Z^* \leftarrow Z$;
    
    Simulate $Z^*_{|S^{-j}|+1,\cdot}\sim MVN(\mathbf{0}_K,\Sigma_{\rho})$ $\forall i\in \{1,\dots,N\}$ \;
    
    \eIf{model=M}{
        
        Sample a new cluster $c$ for $j\in\{1,2,\dots,|S^{-j}|+2\}$ with probability proportional to 
        {\scriptsize
        \begin{align*}
            & ((n_1^{-j}-\phi_\alpha)p^{(M)}(\mathbf{y}|h(Z_{S^{-j}\cup \{j\in S_1^{-j}\}}),\theta), \dots, (n^{-j}_{|S^{-j}|}-\phi_\alpha)p^{(M)}(\mathbf{y}|h(Z_{S^{-j}\cup \{j\in S_{|S_{-j}|}^{-j}\}}),\theta),\\ 
            & (\eta_b + \eta_a |S^{-j}|) p^{(M)}(\mathbf{y}|h(Z^*_{S^{-j}\cup \{j\in S_{|S^{-j}|+1}^{-j}\}}),\theta), (\eta_b + \eta_a |S^{-j}| p^{(M)}(\mathbf{y}|h(Z),\theta));
        \end{align*}
        }
    }{
    
        Sample a new cluster $c$ for $j\in\{1,2,\dots,|S^{-j}|+2\}$ with probability proportional to
        {\scriptsize
        \begin{align*}
            & ((n_1^{-j}-\phi_\alpha)p^{(Q)}(\mathbf{y}|h(Z_{S^{-j}\cup \{j\in S_1^{-j}\}},p), \dots, (n^{-j}_{|S^{-j}|}-\phi_\alpha)p^{(Q)}(\mathbf{y}|h(Z_{S^{-j}\cup \{j\in S_{|S^{-j}|}^{-j}\}}),p),\\
            & (\eta_b + \eta_a |S^{-j}|) p^{(Q)}(\mathbf{y}|h(Z^*_{S^{-j}\cup \{j\in S_{|S^{-j}|+1}^{-j}\}}),p), (\eta_b + \eta_a |S^{-j}|) p^{(Q)}(\mathbf{y}|h(Z),p))
        \end{align*}
        }
    }
    \If{$c=|S^{-j}|+1$}{$S \leftarrow S^{-j}\cup \{j\in S_{|S^{-j}|}^{-j}\};Z\leftarrow Z^*; \alpha \leftarrow \alpha^*$}
    \If {$c \in \{1,\dots,|S^{-j}|\}$}{$S \leftarrow S^{-j}\cup \{j\in S_c^{-j}\};Z\leftarrow Z_{S^{-j}\cup \{j\in S_c^{-j}\}}$}
    
    \hrulefill    \emph{Update for $Z$}\hrulefill
    
    Set $Z^* \leftarrow Z$;
    
    Sample $r \in \{1, \dots, |S|\}$, $c \in \{1, \dots, K\}$ and simulate $Z^* _{r,c}$ such that $Z^*_{r,\cdot}\sim MVN(Z_{r,\cdot},\Sigma_\rho)$;
    
    $h^* \leftarrow$\potie{$Z^*,S$};
    
    \eIf{model = M}{
        $\phi_Z = \frac{\pi(Z^*|\rho,K,S)p^{(M)}(\mathbf{y}|h(Z^*),\theta)}{\pi(Z|\rho,K,S)p^{(M)}(\mathbf{y}|h(Z),\theta)}$
    }{
        $\phi_Z = \frac{p^{(Q)}(\mathbf{y}|h(Z^*),p)\pi(Z^*|\rho,K,S)}{p^{(Q)}(\mathbf{y}|h(Z),p)\pi(Z|\rho,K,S)}$
    }
    Sample $u_Z \sim \mathcal{U}(0, 1)$\;
    \If{$\phi_Z > u_Z$}{$Z\leftarrow Z^*; h \leftarrow h^*$.}
    
\end{algorithm}
    
\newpage

\begin{algorithm}[H]
\caption{MCMC for the Partial Order with Ties Model (part 2)}
    \SetKwInOut{Input}{input}
    \SetKwFunction{noconflict}{noconflict}
    \SetKwFunction{potie}{potie}
    \SetKwFunction{condPL}{condPL}
    \SetKwFunction{condP}{condP}
    \SetKwProg{Fn}{Function}{:}{}

    \hrulefill    \emph{Update for $\rho$}\hrulefill
    
    Sample $\delta_\rho \sim \mathcal{U}(w_\rho, \frac{1}{w_\rho})$;\Comment{$w_\rho \in (0,1)$ is a constant}
    
    Set $\rho^* = 1-\delta_\rho(1-\rho)$ then 
    $\phi_\rho = \frac{\pi(\rho^*)\pi(Z|\rho^*,K,S)}{\pi(\rho)\pi(Z|\rho,K,S)\delta_\rho};$
    
    Sample $u_\rho \sim \mathcal{U}(0,1)$;
    
    \If{$u_\rho < \phi_\rho \; \&\; \rho^* < 1$}{$\rho \leftarrow \rho^*$.}
    
    \hrulefill    \emph{Update for $p$}\hrulefill
    
    \If{model=Q}{

    Let $r = log(\frac{p}{1-p})$ and sample $r^* \sim \mathcal{N}(r,1)$;

    Set $p^* \leftarrow \frac{1}{1+e^{-r^*}}$
    
        $\phi_p = \frac{\pi(p^*)p^{(Q)}(\mathbf{y}|h(Z),p^*)}{\pi(p)p^{(Q)}(\mathbf{y}|h(Z),p)};$
        
        Sample $u_p \sim \mathcal{U}(0,1)$; 
        
        \If{$u_p < \phi_p$}{$p \leftarrow p^*$.}
        }
    
    \hrulefill    \emph{Update for $\theta$}\hrulefill
    
    \If{model = M}{
    Sample $\theta^* \sim \mathcal{N}(\theta, 0.5)$
    
    \If{$\theta^*>0$}{
        $\phi_\theta = \frac{p^{(M)}(\mathbf{y}|h(Z),\theta^*)\pi_\theta(\theta^*)}{p^{(M)}(\mathbf{y}|h(Z),\theta)\pi_\theta(\theta)}$; 
        
        Sample $u_\theta \sim \mathcal{U}(0,1)$; 
        
        \If{$u_\theta < \phi_\theta$}{$\theta \leftarrow \theta^*$.}
        }
    }\end{algorithm}

\newpage

\section{Application - the `Royal Acta' (Bishops) Data }

\subsection{The Data Lists}\label{app:lists}

Witness lists between 1131 and 1133: 
\begin{multicols}{2}
    \begin{minipage}{\linewidth}
        [1] 3 4 8

        [2] 1 13 12 14  7 11  6

        [3] 1 13  7 14 10

        [4] 14 11

        [5] 1 13  7

        [6] 1 13  7  8 15  9

        [7] 3 8 4 2

        [8] 13  1  5

        [9] 13  1  5

        [10] 13  7  5

        [11] 13  1  5

        [12] 1 13  5

        [13] 1  7 13 14  5 10 15  8

        [14] 15  8

        [15] 12  7  9

        [16] 12 13  7 10  9 11  6

        [17] 3 4 2

        [18] 4 3

        [19] 7 13 12  1

        [20] 1 13  7  5 14  9

        [21] 6 1 7
    \end{minipage}%
    \begin{minipage}{\linewidth}
        1: \textit{Roger, Bishop of Salisbury}\\
        2: \textit{Richard, Bishop of Bayeux}\\
        3: \textit{John, Bishop of Lisieux}\\
        4: \textit{Ouen, Bishop of Evreux}\\
        5: \textit{Bernard, Bishop of St David's}\\
        6: \textit{Everard, bishop of Norwich}\\
        7: \textit{Alexander, Bishop of Lincoln}\\
        8: \textit{John, Bishop of Sees}\\
        9: \textit{Seffrid, Bishop of Chichester}\\
        10: \textit{John, Bishop of Rochester}\\
        11: \textit{Simon, Bishop of Worcester}\\
        12: \textit{Gilbert, the Universal, bishop of London}\\
        13: \textit{Henry, de Blois, Bishop of Winchester, 1129-1171}\\
        14: \textit{Robert, de Bethune, Bishop of Hereford}\\
        15: \textit{Algar, Bishop of Coutances}\\
    \end{minipage}
\end{multicols}

Witness lists between 1100 and 1103: 

\begin{multicols}{2}
    \begin{minipage}{\linewidth}
        [1] 2 5 7

        [2] 2 8 5 3 1 6 4 7

        [3] 2 5 6

        [4] 2 9

        [5] 5 8

        [6] 5 9

        [7] 1 2 3 5 4 6 7

        [8] 5 8

        [9] 1 5

        [10] 8 7

        [11] 8 7

        [12] 2 4 1 3 8

        [13] 2 1
    \end{minipage}%
    \begin{minipage}{\linewidth}
        1: \textit{Gundulf, bishop of Rochester}\\
        2: \textit{Maurice, bishop of London}\\
        3: \textit{Robert, de Limesey, bishop of Chester}\\
        4: \textit{Ralph, bishop of Chichester}\\
        5: \textit{Robert, Bloet, bishop of Lincoln}\\
        6: \textit{Samson, bishop of Worcester}\\
        7: \textit{Ranulf, Flambard, bishop of Durham}\\
        8: \textit{William, Giffard, bishop of Winchester, 1100-1129}\\
        9: \textit{Roger, Bishop of Salisbury}\\
    \end{minipage}
\end{multicols}

\subsection{Additional Results}\label{app:results}

Four experiments are conducted on the `Royal Acta' (Bishop) 1131-1133 and 1100-1103 data lists for both the $(PDP,\rho)\backslash M$ and $(PDP,\rho)\backslash Q$ models. Each MCMC chain was run for $1e5$ iterations. We record every $2n$ steps and chose a burn-in period of $500(\times 2n)$ for all chains. The list of experiments and effective sample sizes to each key parameter is shown in table \ref{tab:ess}. 

\begin{table}[h!]
    \centering
    \begin{tabular}{c c c c c c}
        \toprule
        \multicolumn{2}{c}{} & \multicolumn{4}{c}{Effective Sample Sizes (ESSs)} \\
        \midrule
        Time Period  & Model & $K$ & $\rho$ & $\theta$ & $p$ \\
        \midrule
        \multirow{2}{4em}{1131-1133} & $(PDP,\rho)\backslash M$ & 240.60 & 136.32 & 1318.60 & - \\
        & $(PDP,\rho)\backslash Q$ & 104.32 & 174.88 & - & 873.53 \\
        \midrule
        \multirow{2}{4em}{1100-1103} & $(PDP,\rho)\backslash M$ & 605.91 & 340.02 & 899.20 & - \\
        & $(PDP,\rho)\backslash Q$ & 444.18 & 358.95 & - & 1917.23 \\ 
        \bottomrule
    \end{tabular}
    \caption{The effective sample sizes (ESSs) to the key parameters for each experiment. }
    \label{tab:ess}
\end{table}

Section \ref{app:pos} and section \ref{app:trace} present the posterior distributions and trace plots for these key parameters respectively. 

\subsubsection{Posterior Distributions}\label{app:pos}

This section displays some additional result on prior and posterior distributions. Fig.~\ref{fig:depths} gives both the prior and posterior distributions on the number of clusters (on actors) and partial order depths. 

\begin{figure}[h!]
  \centering
  \includegraphics[width=0.65\linewidth]{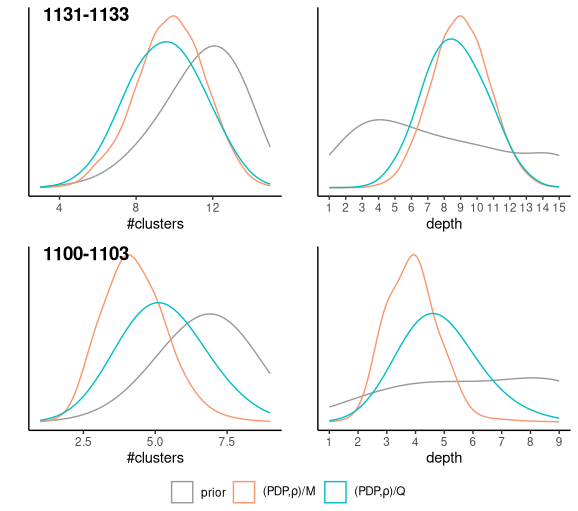}
  \caption{The prior/posterior distributions for the number of clusters (left) and partial order depths (right) for the $(PDP,\rho)\backslash M$ (pink) and $(PDP,\rho)\backslash Q$ (blue) model on periods 1131-1133 (upper) and 1100-1103 (lower). The prior distributions are colored in gray. }
  \label{fig:depths}
\end{figure} 

We further present cluster samples from the MCMC for the $(PDP,\rho)\backslash Q$ model in the heatmap in fig \ref{fig:heatmap-qj}. The heatmap for $(PDP,\rho)\backslash M$ is in figure \ref{fig:heatmap-m}.

\begin{figure}[h!]
  \centering
  \includegraphics[width=0.7\linewidth]{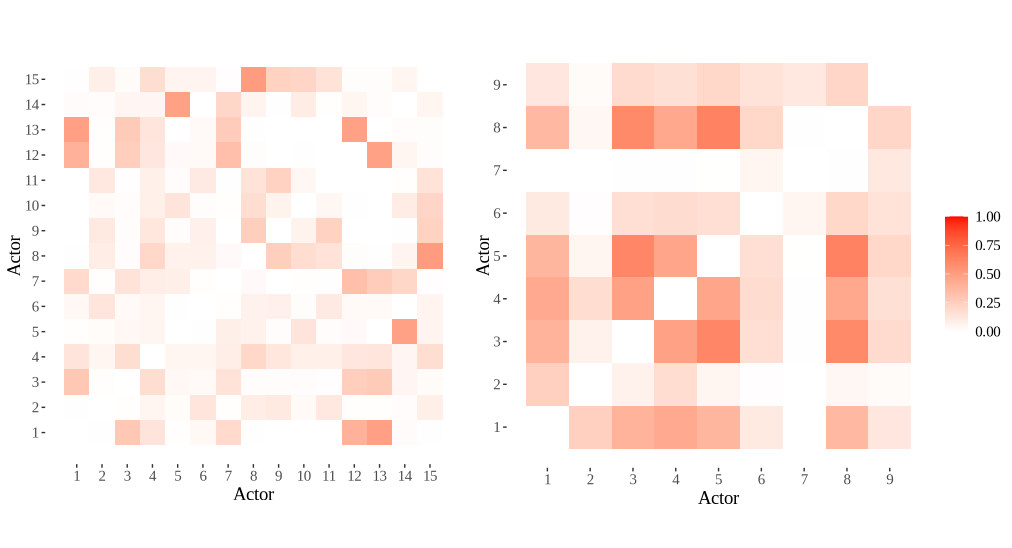}
  \caption{The ‘heatmaps’ for the estimated clusters for the $(PDP,\rho)\backslash Q$ model for time periods 1131-1133 (left) and 1100-1103 (right). }
  \label{fig:heatmap-qj}
\end{figure} 

\pagebreak
\subsubsection{Key Parameter Trace Plots}\label{app:trace}

The trace plots to the key parameters are shown in figure \ref{fig:trace-m} for the $(PDP,\rho)\backslash M$ model and figure \ref{fig:trace-qj} for the $(PDP,\rho)\backslash Q$ model. All trace plots display great mixing and convergence. 

\begin{figure}[h!]
\centering
\begin{minipage}{.4\textwidth}
  \centering
  \includegraphics[width=\linewidth]{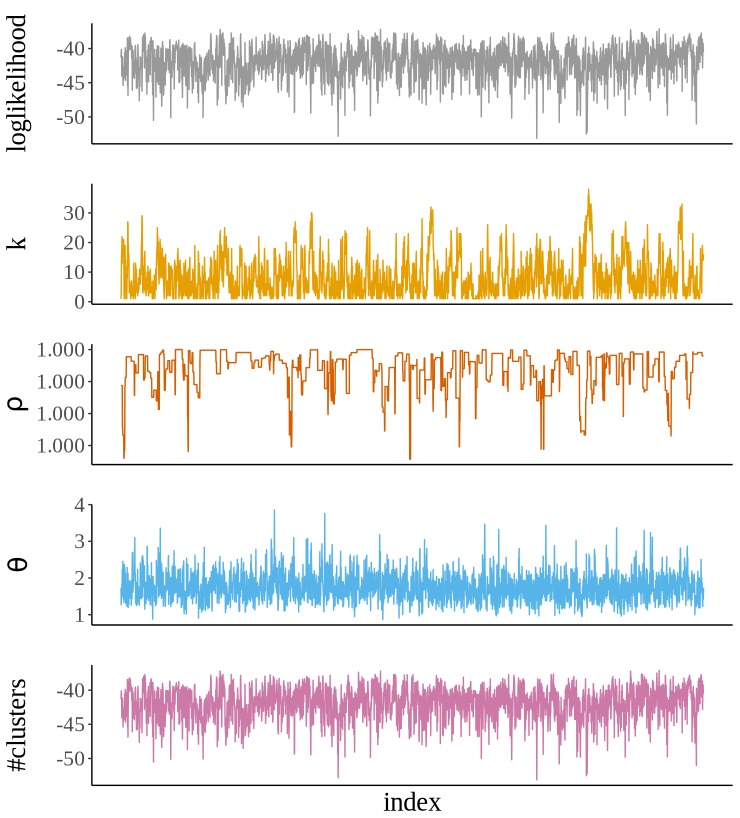}
\end{minipage}%
\begin{minipage}{.4\textwidth}
  \centering
  \includegraphics[width=\linewidth]{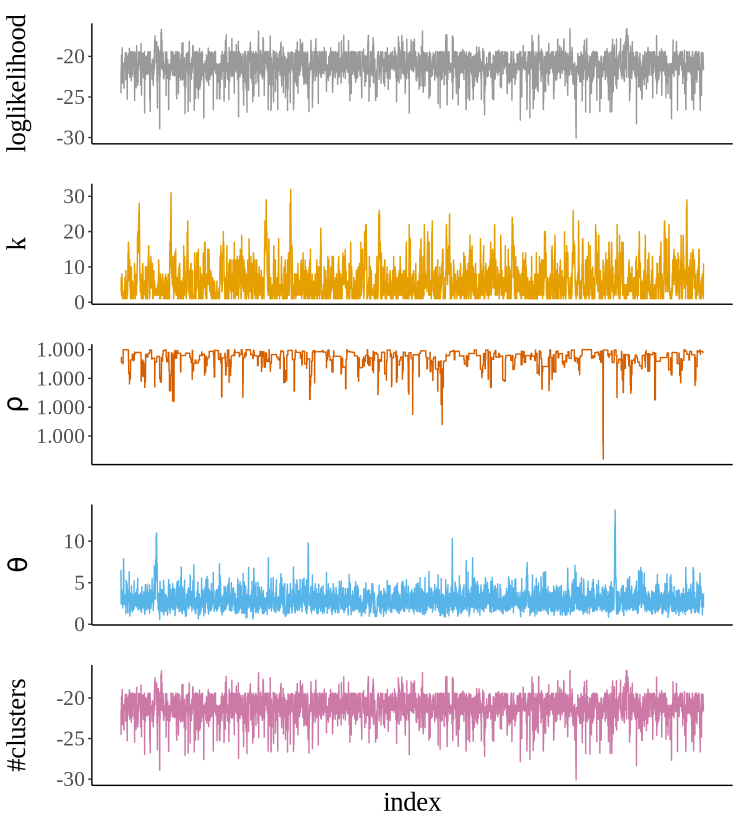}
\end{minipage}
\captionof{figure}{The trace plots for the $(PDP,\rho)\backslash M$ model on the 1131-1133 (left) and 1100-1103 (right) `royal acta' (bishop) witness list data. We present the trace plot for the log-likelihood (grey), the number of clusters (pink) and key parameters (latent matrix column dimension parameter $K$ - orange, the depth control parameter $\rho$ - red and the dispersion parameter for the Mallow's observation model $\theta$ - blue). The burn-in periods are removed from the trace plots. }\label{fig:trace-m}
\end{figure}

\begin{figure}[h!]
\centering
\begin{minipage}{.4\textwidth}
  \centering
  \includegraphics[width=\linewidth]{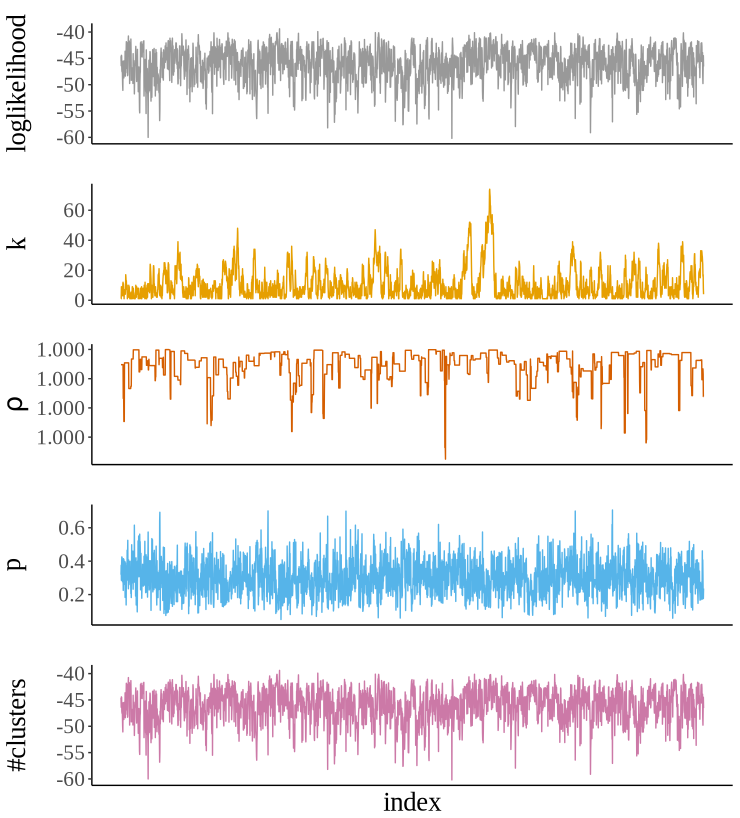}
\end{minipage}%
\begin{minipage}{.4\textwidth}
  \centering
  \includegraphics[width=\linewidth]{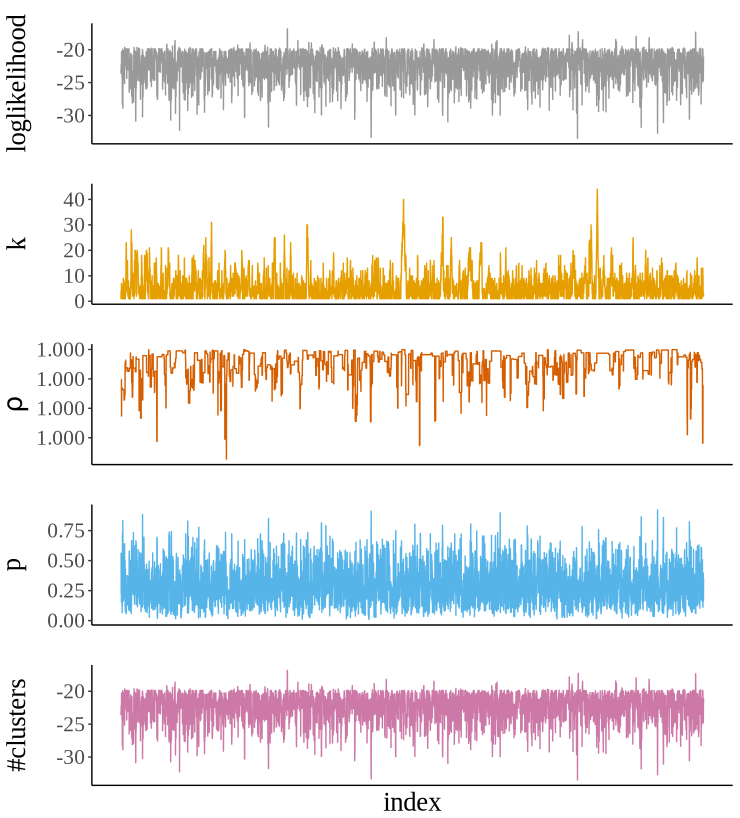}
\end{minipage}
\captionof{figure}{The trace plots for the $(PDP,\rho)\backslash Q$ model on the 1131-1133 (left) and 1100-1103 (right) `royal acta' (bishop) witness list data. We present the trace plot for the log-likelihood (grey), the number of clusters (pink) and key parameters (latent matrix column dimension parameter $K$ - orange, the depth control parameter $\rho$ - red and the error probability in the queue-jumping observation observation model $p$ - blue).}\label{fig:trace-qj}
\end{figure}

\pagebreak

\subsection{Sensitivity Analysis on $K$ Prior}\label{app:sensitivity}

This section investigates the models' ($(PDP,\rho)\backslash M$ and $(PDP,\rho)\backslash Q$) sensitivity to the priors on the latent matrix column dimension parameter $K$. According to \cite{watt2015inference}, $k=n/2$ is sufficient to recover any unknown partial order. We explicit a geometric prior distribution on $K$ as $k\sim Geo(p_k)$, where $p_k$ is a pre-determined hyperparameter - such that the means of $K$ are in $\{n,n/2,n/4\}$. The priors on other parameters are adjusted accordingly to obtain a relatively flat depth distribution. Table \ref{tab:sensitivity} records the prior distributions on $K$ for different sensitivity analysis on the real data. 
\begin{table}[h]
    \centering
    \resizebox{\textwidth}{!}{%
    \begin{tabular}{c c c c c}
    \toprule
        Time Period & Model & $K$-Prior & PDP-Prior & $\rho$-Prior \\
        \midrule
        \multirow{6}{4em}{1131-1133 ($n=15$)} & $(PDP,\rho)\backslash M$ & $Geo(0.06) (\bar{k}=n)$ & $PDP(\eta_b=3, \eta_a = 0.7)$ & $Beta(1,1/6)$\\
        & $(PDP,\rho)\backslash Q$ & $Geo(0.06) (\bar{k}=n)$ & $PDP(\eta_b=3, \eta_a = 0.7)$ & $Beta(1,1/6)$ \\
        & $(PDP,\rho)\backslash M$ & $Geo(0.12) (\bar{k}=n/2)$ & $PDP(\eta_b=6, \eta_a = 0.6)$ & $Beta(1,1/4)$\\
        & $(PDP,\rho)\backslash Q$ & $Geo(0.12) (\bar{k}=n/2)$ & $PDP(\eta_b=6, \eta_a = 0.6)$ & $Beta(1,1/4)$ \\
        & $(PDP,\rho)\backslash M$ & $Geo(0.21) (\bar{k}=n/4)$ & $PDP(\eta_b=12, \eta_a = 0.55)$ & $Beta(1,1/4)$\\
        & $(PDP,\rho)\backslash Q$ & $Geo(0.21) (\bar{k}=n/4)$ & $PDP(\eta_b=12, \eta_a = 0.55)$ & $Beta(1,1/4)$ \\
        \midrule
        \multirow{6}{4em}{1100-1103 ($n=9$)} & $(PDP,\rho)\backslash M$ & $Geo(0.1) (\bar{k}=n)$ & $PDP(\eta_b=5, \eta_a = 0.4)$ & $Beta(1,1/4)$\\
        & $(PDP,\rho)\backslash Q$ & $Geo(0.1) (\bar{k}=n)$ & $PDP(\eta_b=5, \eta_a = 0.4)$ & $Beta(1,1/4)$\\
        & $(PDP,\rho)\backslash M$ & $Geo(0.18) (\bar{k}=n/2)$ & $PDP(\eta_b=5, \eta_a = 0.4)$ & $Beta(1,1/3)$\\
        & $(PDP,\rho)\backslash Q$ & $Geo(0.18) (\bar{k}=n/2)$ & $PDP(\eta_b=5, \eta_a = 0.4)$ & $Beta(1,1/3)$\\
        & $(PDP,\rho)\backslash M$ & $Geo(0.31) (\bar{k}=n/4)$ & $PDP(\eta_b=5, \eta_a = 0.4)$ & $Beta(1,1/3)$\\
        & $(PDP,\rho)\backslash Q$ & $Geo(0.31) (\bar{k}=n/4)$ & $PDP(\eta_b=5, \eta_a = 0.4)$ & $Beta(1,1/3)$\\
        \bottomrule
    \end{tabular}}
    \caption{Sensitivity analysis on $K$-priors. }
    \label{tab:sensitivity}
\end{table}

We display the prior and posterior distributions for $K$ in \Fig~\ref{fig:k-sensitivity-m} and \Fig~\ref{fig:k-sensitivity-qj} for $(PDP,\rho)\backslash M$ and $(PDP,\rho)\backslash Q$ models respectively. When $\Bar{k}\geq n/2$, the posterior distributions are close to the prior distributions in both periods - indicating their high sensitivity to the prior choice. However, the posterior distributions on $K$ are updated to higher $K$ values for both the $(PDP,\rho)\backslash M$ and $(PDP,\rho)\backslash Q$ models in period 1131-1133. No similar behaviour is observed in period 1100-1103, which is partially a result of data limitation. 

Given the result by \cite{watt2015inference}, a $k\geq n/2$ is sufficient for the latent matrix model to recover any unknown partial order. Having a large mass on the $k\geq n/2$ values in posterior will give the model enough freedom to explore the space of partial order without much constraints. We therefore conclude a choice of $K$-prior with $\Bar{K}=n/2$ is the most optimal and present the corresponding result in section \ref{sec:real-app}. 

\begin{figure}[h!]
  \centering
  \includegraphics[width=0.9\linewidth]{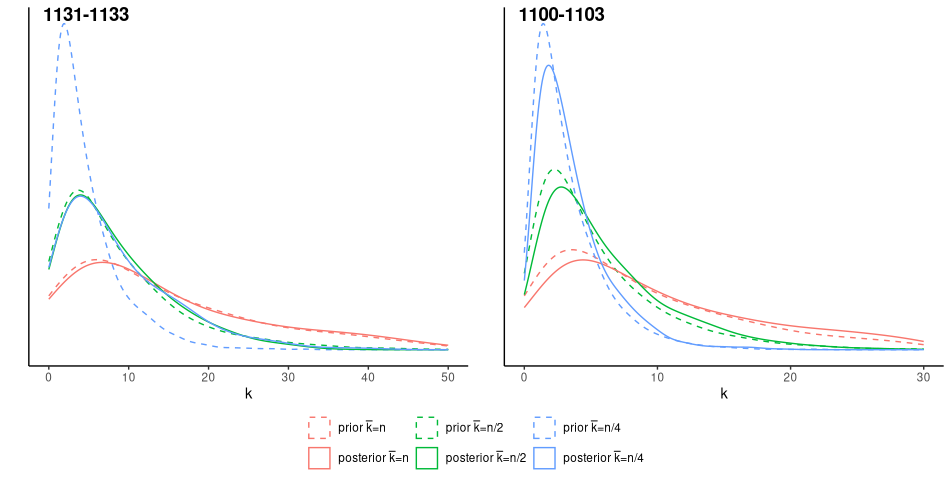}
  \caption{The prior (dashed) and posterior (solid) distributions for $K$ from the $(PDP,\rho)\backslash M$ model for sensitivity analysis on $K$-prior. We explicit the $K$ priors to have means of $n$ (red), $n/2$ (green) or $n/4$ (blue) (details in table \ref{tab:sensitivity}). The result for time period 1131-1133 is displayed on left, and 1100-1103 on right.}
  \label{fig:k-sensitivity-m}
\end{figure} 

\begin{figure}[h!]
  \centering
  \includegraphics[width=0.9\linewidth]{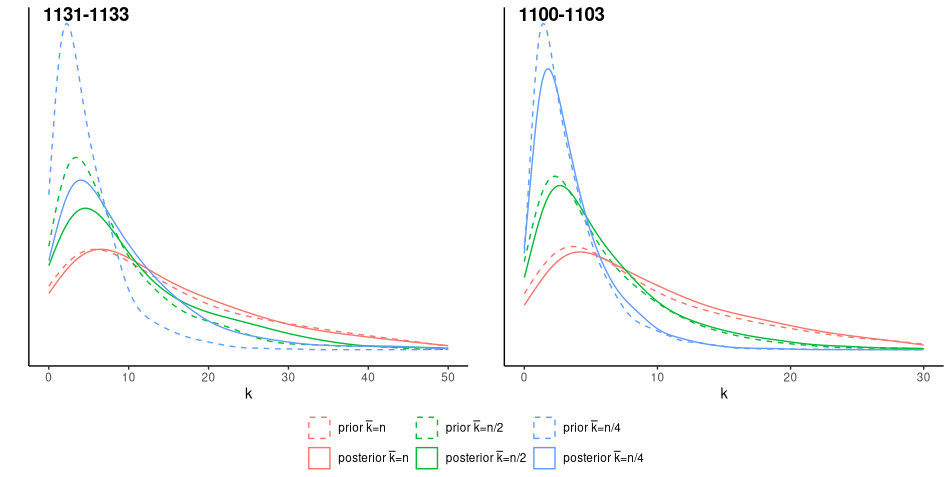}
  \caption{The prior (dashed) and posterior (solid) distributions for $K$ from the $(PDP,\rho)\backslash Q$ model for sensitivity analysis on $K$-prior. We explicit the $K$ priors to have means of $n$ (red), $n/2$ (green) or $n/4$ (blue) (details in table \ref{tab:sensitivity}). The result for time period 1131-1133 is displayed on left, and 1100-1103 on right.}
  \label{fig:k-sensitivity-qj}
\end{figure} 

\pagebreak

\subsection{Synthetic Reconstruction Test}\label{app:reconstruction}

The synthetic reconstruction test detailed in section \ref{sec:reconstruction} is designed to test the model's ability to reconstruct the `true' partial order given data-lists under different observation models. We consider four scenarios for each time period. 

\begin{itemize}
    \item Simulation 1: error-free, $y_i\sim\mathcal{U}(\L[h^{(T)}[y^{obs}_i]]), \forall i\in [N]$.
    \item Simulation 2: data-list with random error. For $i\in [N]$, we simulate $y_i\sim \mathcal{U}(\L[h^{(T)}[y^{obs}_i]])$ from the noise-free model; we select a pair of actors $a,b\in y_i$ uniform at random, $a\neq b$, and put them in the same order as they appear in the data. If $a\prec b$ in $y_i$ but $a\succ b$ in $y^{obs}_i$ then exchange the positions of $a$ and $b$ in $y_i$ leaving all else unchanged.
    \item Simulation 3: data-list with Mallow's error given $\theta^*$, $$y_i\sim p^{(M)}(\cdot|h^{(T)}[o_i],\theta),\forall i\in[N].$$ 
    \item Simulation 4: data-list with queue-jumping error given $p^*$, $$y_i\sim p^{(Q)}(\cdot|h^{(T)}[o_i],p),\forall i\in[N].$$
\end{itemize}

The simulations are generated based on the true partial order in Figure~\ref{fig:concensus-syn-mallows-3133} for period 1131-1133 and the true partial order in Figure~\ref{fig:concensus-syn-mallows-0003} for period 1100-1103. For the 1131-1133 structured data, we show the concensus orders $h^{con}(\epsilon=0.2)$ from the $(PDP,\rho)\backslash M$ model in \Fig~\ref{fig:concensus-syn-3133-m} and the $(PDP,\rho)\backslash Q$ model in \Fig~\ref{fig:concensus-syn-3133-q}. For the 1100-1103 structured data, we show the concensus orders $h^{con}(\epsilon=0.2)$ from the $(PDP,\rho)\backslash M$ model in \Fig~\ref{fig:concensus-syn-0003-m} and the $(PDP,\rho)\backslash Q$ model in \Fig~\ref{fig:concensus-syn-0003-q}. \\

\begin{minipage}{\linewidth}
    \centering
    \begin{tikzpicture}[thick,scale=1, every node/.style={scale=0.6}]
        \node[draw, circle, minimum width=.75cm] (102) at (8,-.2) {$2$};
        \node[draw, circle, minimum width=.75cm] (104) at (7.3,1) {$4$};
        \node[draw, circle, minimum width=.75cm] (103) at (8.7,.4) {$3$};
        \node[draw, circle, minimum width=.75cm] (101) at (6, 1) {$1$}; 
        \node[draw, circle, minimum width=.75cm] (112) at (8.7, 1) {$12$}; 
        \node[draw, circle, minimum width=.75cm] (113) at (10, 1) {$13$}; 
        \node[draw, circle, minimum width=.75cm] (105) at (8, -1) {$5$}; 
        \node[draw, circle, minimum width=.75cm] (107) at (7, -.6) {$7$};
        \node[draw, circle, minimum width=.75cm] (114) at (9, -.6) {$14$}; 
        \node[draw, circle, minimum width=.75cm] (110) at (8, -1.8) {$10$}; 
        \node[draw, circle, minimum width=.75cm] (108) at (7, -2.6) {$8$}; 
        \node[draw, circle, minimum width=.75cm] (115) at (9, -2.6) {$15$}; 
        \node[draw, circle, minimum width=.75cm] (109) at (7, -3.4) {$9$}; 
        \node[draw, circle, minimum width=.75cm] (111) at (9, -3.4) {$11$}; 
        \node[draw, circle, minimum width=.75cm] (106) at (8, -4.2) {$6$}; 

        \draw[-latex] (101) -- (102);
        \draw[-latex] (104) -- (102);
        \draw[-latex] (113) -- (102);
        \draw[-latex] (112) -- (103);
        \draw[-latex] (103) -- (102);
        \draw[-latex] (102) -- (107);
        \draw[-latex] (102) -- (114);
        \draw[-latex] (107) -- (105);
        \draw[-latex] (114) -- (105);
        \draw[-latex] (105) -- (110);
        \draw[-latex] (110) -- (108);
        \draw[-latex] (110) -- (115);
        \draw[-latex] (108) -- (111);
        \draw[-latex] (108) -- (109);
        \draw[-latex] (115) -- (111);
        \draw[-latex] (115) -- (109);
        \draw[-latex] (109) -- (106);
        \draw[-latex] (111) -- (106); 
    \end{tikzpicture}
        \captionof{figure}{The true partial order for list simulation (1131-1133 structured data). }\label{fig:concensus-syn-mallows-3133}
\end{minipage}

{
\fbox{\resizebox{\dimexpr\linewidth-2\fboxrule-2\fboxsep}{!}{
\begin{minipage}[t]{.2\linewidth}
\centering
\begin{tikzpicture}[thick,scale=.6,every node/.style={scale=.4}]
    \node[draw,circle,minimum width=.75cm] (13) at (0,1.6) {$13$};
    \node[draw,circle,minimum width=.75cm] (3) at (-2,1.6) {$3$};
    \node[draw,circle,minimum width=.75cm] (7) at (0,0) {$7$};
    \node[draw,circle,minimum width=.75cm] (1) at (0,0.8) {$1$};
    \node[draw,circle,minimum width=.75cm] (12) at (2,0.8) {$12$};
    \node[draw,circle,minimum width=.75cm] (2) at (-2,0.8) {$2$};
    \node[draw,circle,minimum width=.75cm] (14) at (2,0) {$14$};
    \node[draw,circle,minimum width=.75cm] (5) at (0,-0.8) {$5$};
    \node[draw,circle,minimum width=.75cm] (10) at (0,-1.6) {$10$};
    \node[draw,circle,minimum width=.75cm] (8) at (0,-2.4) {$8$};
    \node[draw,circle,minimum width=.75cm] (15) at (-1,-3) {$15$};
    \node[draw,circle,minimum width=.75cm] (11) at (1,-3) {$11$};
    \node[draw,circle,minimum width=.75cm] (4) at (1,2.2) {$4$};
    \node[draw,circle,minimum width=.75cm] (9) at (-1,-3.8) {$9$};
    \node[draw,circle,minimum width=.75cm] (6) at (-1,-4.6) {$6$};

    \draw[-latex] (4) -- (3);
    \draw[-latex] (4) -- (13);
    \draw[-latex] (4) -- (12);
    \draw[-latex] (3) -- (2);
    \draw[-latex] (13) -- (1);
    \draw[-latex] (12) -- (14);
    \draw[-latex] (3) -- (1);
    \draw[-latex] (1) -- (14);
    \draw[-latex] (2) -- (14);
    \draw[-latex] (2) -- (7);
    \draw[-latex] (1) -- (7);
    \draw[-latex] (12) -- (7);
    \draw[-latex] (7) -- (5);
    \draw[-latex] (14) -- (5);
    \draw[-latex] (5) -- (10);
    \draw[-latex] (10) -- (8);
    \draw[-latex] (8) -- (15);
    \draw[-latex] (8) -- (11); 
    \draw[-latex] (15) -- (9); 
    \draw[-latex] (9) -- (6); 
\end{tikzpicture}
{\tiny $(PDP,\rho)\backslash M-\text{LE}$}
\end{minipage}%
\hspace{.5cm}
\begin{minipage}[t]{.2\linewidth}
\centering
\begin{tikzpicture}[thick,scale=.6,every node/.style={scale=.4}]
    \node[draw,circle,minimum width=.75cm] (13) at (0,1.6) {$13$};
    \node[draw,circle,minimum width=.75cm] (12) at (0,2.4) {$12$};
    \node[draw,circle,minimum width=.75cm] (1) at (0,0) {$1$};
    \node[draw,circle,minimum width=.75cm] (4) at (0,0.8) {$4$};
    \node[draw,circle,minimum width=.75cm] (3) at (2,0) {$3$};
    \node[draw,circle,minimum width=.75cm] (15) at (2,-3.6) {$15$};
    \node[draw,circle,minimum width=.75cm] (14) at (2,-0.8) {$14$};
    \node[draw,circle,minimum width=.75cm] (7) at (0,-0.8) {$7$};
    \node[draw,circle,minimum width=.75cm] (5) at (0,-1.6) {$5$};
    \node[draw,circle,minimum width=.75cm] (10) at (0,-2.4) {$10$};
    \node[draw,circle,minimum width=.75cm] (8) at (0,-3.2) {$8$};
    \node[draw,circle,minimum width=.75cm] (11) at (0,-4) {$11$};
    \node[draw,circle,minimum width=.75cm] (9) at (3,-5) {$9$};
    \node[draw,circle,minimum width=.75cm] (2) at (2,-4.4) {$2$};
    \node[draw,circle,minimum width=.75cm] (6) at (1,-5) {$6$};

    \draw[-latex] (12) -- (13);
    \draw[-latex] (13) -- (4);
    \draw[-latex] (4) -- (1);
    \draw[-latex] (1) -- (7);
    \draw[-latex] (7) -- (5);
    \draw[-latex] (5) -- (10);
    \draw[-latex] (8) -- (11);
    \draw[-latex] (10) -- (8);
    \draw[-latex] (8) -- (15);
    \draw[-latex] (15) -- (2);
    \draw[-latex] (2) -- (9);
    \draw[-latex] (2) -- (6);
    \draw[-latex] (4) -- (3);
    \draw[-latex] (1) -- (14);
    \draw[-latex] (3) -- (7);
    \draw[-latex] (14) -- (5);
    \draw[-latex] (3) -- (14);
\end{tikzpicture}
{\tiny $(PDP,\rho)\backslash M-\text{ERROR}$}
\end{minipage}%
\begin{minipage}[t]{.2\linewidth}
\centering
\begin{tikzpicture}[thick,scale=.6,every node/.style={scale=.4}]
    \node[draw,circle,minimum width=.75cm] (13) at (2,1.1) {$13$};
    \node[draw,circle,minimum width=.75cm] (12) at (1,1.1) {$12$};
    \node[draw,circle,minimum width=.75cm] (1) at (-2,1.1) {$1$};
    \node[draw,circle,minimum width=.75cm] (4) at (0,1.1) {$4$};
    \node[draw,circle,minimum width=.75cm] (3) at (-1,1.1) {$3$};
    \node[draw,circle,minimum width=.75cm] (15) at (1,-2.2) {$15$};
    \node[draw,circle,minimum width=.75cm] (14) at (1.5,-1.1) {$14$};
    \node[draw,circle,minimum width=.75cm] (7) at (-0.5,-1.1) {$7$};
    \node[draw,circle,minimum width=.75cm] (5) at (-1.5,-1.1) {$5$};
    \node[draw,circle,minimum width=.75cm] (10) at (.5,-1.1) {$10$};
    \node[draw,circle,minimum width=.75cm] (8) at (0,-2.2) {$8$};
    \node[draw,circle,minimum width=.75cm] (11) at (.5,-4.4) {$11$};
    \node[draw,circle,minimum width=.75cm] (9) at (.5,-3.3) {$9$};
    \node[draw,circle,minimum width=.75cm] (2) at (0,0) {$2$};
    \node[draw,circle,minimum width=.75cm] (6) at (.5,-5.5) {$6$};

    \draw[-latex] (1) -- (2);
    \draw[-latex] (3) -- (2);
    \draw[-latex] (4) -- (2);
    \draw[-latex] (12) -- (2);
    \draw[-latex] (13) -- (2);
    \draw[-latex] (2) -- (5);
    \draw[-latex] (2) -- (7);
    \draw[-latex] (2) -- (10);
    \draw[-latex] (2) -- (14);
    \draw[-latex] (10) -- (8);
    \draw[-latex] (10) -- (15);
    \draw[-latex] (15) -- (9);
    \draw[-latex] (8) -- (9);
    \draw[-latex] (9) -- (11);
    \draw[-latex] (11) -- (6);
\end{tikzpicture}
{\tiny $(PDP,\rho)\backslash M-\text{MALLOW'S ERROR}$}
\end{minipage}%
\hspace{.5cm}
\begin{minipage}[t]{.24\linewidth}
\centering
\begin{tikzpicture}[thick,scale=.6,every node/.style={scale=.4}]
    \node[draw,circle,minimum width=.75cm] (13) at (1,2.2) {$13$};
    \node[draw,circle,minimum width=.75cm] (12) at (1,1.1) {$12$};
    \node[draw,circle,minimum width=.75cm] (1) at (-1,2.2) {$1$};
    \node[draw,circle,minimum width=.75cm] (4) at (0,2.2) {$4$};
    \node[draw,circle,minimum width=.75cm] (3) at (0,1.1) {$3$};
    \node[draw,circle,minimum width=.75cm] (15) at (.5,-1.1) {$15$};
    \node[draw,circle,minimum width=.75cm] (14) at (2.5,-1.1) {$14$};
    \node[draw,circle,minimum width=.75cm] (7) at (-1.5,-1.1) {$7$};
    \node[draw,circle,minimum width=.75cm] (5) at (-2.5,-1.1) {$5$};
    \node[draw,circle,minimum width=.75cm] (10) at (1.5,-1.1) {$10$};
    \node[draw,circle,minimum width=.75cm] (8) at (-.5,-1.1) {$8$};
    \node[draw,circle,minimum width=.75cm] (11) at (0,-3.3) {$11$};
    \node[draw,circle,minimum width=.75cm] (9) at (0,-2.2) {$9$};
    \node[draw,circle,minimum width=.75cm] (2) at (0,0) {$2$};
    \node[draw,circle,minimum width=.75cm] (6) at (0,-4.4) {$6$};

    \draw[-latex] (1) -- (2);
    \draw[-latex] (3) -- (2);
    \draw[-latex] (4) -- (3);
    \draw[-latex] (12) -- (2);
    \draw[-latex] (13) -- (12);
    \draw[-latex] (2) -- (5);
    \draw[-latex] (2) -- (7);
    \draw[-latex] (2) -- (10);
    \draw[-latex] (2) -- (14);
    \draw[-latex] (2) -- (8);
    \draw[-latex] (2) -- (15);
    \draw[-latex] (15) -- (9);
    \draw[-latex] (8) -- (9);
    \draw[-latex] (9) -- (11);
    \draw[-latex] (11) -- (6);
\end{tikzpicture}
{\tiny $(PDP,\rho)\backslash M-\text{QJ ERROR}$}
\end{minipage}%
}}
\captionof{figure}{The concensus orders for the $(PDP,\rho)\backslash M$ model on the synthetic data with 1131-1133 structure (simulation 1-4 from left to right). We conclude an edge if such order relation has more than 0.2 posterior probability (inferred from section \ref{sec:reconstruction}). An edge is colored red if it has more than 0.9 posterior probability. }\label{fig:concensus-syn-3133-m}
}
\vspace*{0.2in}
{
\fbox{\resizebox{\dimexpr\linewidth-2\fboxrule-2\fboxsep}{!}{
\begin{minipage}[t]{.2\linewidth}
\centering
\begin{tikzpicture}[thick,scale=.6,every node/.style={scale=.4}]
    \node[draw,circle,minimum width=.75cm] (13) at (1.5,1.4) {$13$};
    \node[draw,circle,minimum width=.75cm] (3) at (.5,1.4) {$3$};
    \node[draw,circle,minimum width=.75cm] (7) at (-.5,.6) {$7$};
    \node[draw,circle,minimum width=.75cm] (1) at (-.5,1.4) {$1$};
    \node[draw,circle,minimum width=.75cm] (12) at (-1.5,1.4) {$12$};
    \node[draw,circle,minimum width=.75cm] (2) at (0,-0.8) {$2$};
    \node[draw,circle,minimum width=.75cm] (14) at (.5,.6) {$14$};
    \node[draw,circle,minimum width=.75cm] (5) at (0,0) {$5$};
    \node[draw,circle,minimum width=.75cm] (10) at (0,-1.6) {$10$};
    \node[draw,circle,minimum width=.75cm] (8) at (0,-2.4) {$8$};
    \node[draw,circle,minimum width=.75cm] (15) at (-1,-3) {$15$};
    \node[draw,circle,minimum width=.75cm] (11) at (1,-3) {$11$};
    \node[draw,circle,minimum width=.75cm] (4) at (0,2) {$4$};
    \node[draw,circle,minimum width=.75cm] (9) at (-1,-3.8) {$9$};
    \node[draw,circle,minimum width=.75cm] (6) at (-1,-4.6) {$6$};

    \draw[-latex] (4) -- (3);
    \draw[-latex] (4) -- (13);
    \draw[-latex] (4) -- (12);
    \draw[-latex] (3) -- (14);
    \draw[-latex] (4) -- (1);
    \draw[-latex] (1) -- (14);
    \draw[-latex] (5) -- (2);
    \draw[-latex] (1) -- (14);
    \draw[-latex] (2) -- (10);
    \draw[-latex] (3) -- (7);
    \draw[-latex] (1) -- (7);
    \draw[-latex] (7) -- (5);
    \draw[-latex] (14) -- (5);
    \draw[-latex] (10) -- (8);
    \draw[-latex] (8) -- (15);
    \draw[-latex] (8) -- (11); 
    \draw[-latex] (15) -- (9); 
    \draw[-latex] (9) -- (6); 
\end{tikzpicture}
{\tiny $(PDP,\rho)\backslash Q-\text{LE}$}
\end{minipage}%
\begin{minipage}[t]{.2\linewidth}
\centering
\begin{tikzpicture}[thick,scale=.6,every node/.style={scale=.4}]
    \node[draw,circle,minimum width=.75cm] (13) at (0,1.6) {$13$};
    \node[draw,circle,minimum width=.75cm] (12) at (0,2.4) {$12$};
    \node[draw,circle,minimum width=.75cm] (4) at (0,0) {$4$};
    \node[draw,circle,minimum width=.75cm] (1) at (0,0.8) {$1$};
    \node[draw,circle,minimum width=.75cm] (14) at (1,-1.2) {$14$};
    \node[draw,circle,minimum width=.75cm] (15) at (0,-3.2) {$15$};
    \node[draw,circle,minimum width=.75cm] (5) at (-1,-1.2) {$5$};
    \node[draw,circle,minimum width=.75cm] (3) at (0,-0.8) {$3$};
    \node[draw,circle,minimum width=.75cm] (7) at (0,-1.6) {$7$};
    \node[draw,circle,minimum width=.75cm] (10) at (-1,-2) {$10$};
    \node[draw,circle,minimum width=.75cm] (8) at (-2,-3.2) {$8$};
    \node[draw,circle,minimum width=.75cm] (11) at (-1,-2.8) {$11$};
    \node[draw,circle,minimum width=.75cm] (9) at (-1,-3.6) {$9$};
    \node[draw,circle,minimum width=.75cm] (2) at (-1,-4.4) {$2$};
    \node[draw,circle,minimum width=.75cm] (6) at (-1,-5.2) {$6$};

    \draw[-latex] (12) -- (13);
    \draw[-latex] (13) -- (1);
    \draw[-latex] (1) -- (4);
    \draw[-latex] (4) -- (3);
    \draw[-latex] (3) -- (7);
    \draw[-latex] (3) -- (14);
    \draw[-latex] (3) -- (5);
    \draw[-latex] (5) -- (10);
    \draw[-latex] (10) -- (11);
    \draw[-latex] (11) -- (8);
    \draw[-latex] (11) -- (15);
    \draw[-latex] (8) -- (9);
    \draw[-latex] (15) -- (9);
    \draw[-latex] (9) -- (2);
    \draw[-latex] (2) -- (6);
    \draw[-latex] (3) -- (14);
\end{tikzpicture}
{\tiny $(PDP,\rho)\backslash Q-\text{ERROR}$}
\end{minipage}%
\begin{minipage}[t]{.2\linewidth}
\centering
\begin{tikzpicture}[thick,scale=.6,every node/.style={scale=.4}]
    \node[draw,circle,minimum width=.75cm] (13) at (2,1) {$13$};
    \node[draw,circle,minimum width=.75cm] (12) at (1,1) {$12$};
    \node[draw,circle,minimum width=.75cm] (1) at (-2,1) {$1$};
    \node[draw,circle,minimum width=.75cm] (4) at (0,1) {$4$};
    \node[draw,circle,minimum width=.75cm] (3) at (-1,1) {$3$};
    \node[draw,circle,minimum width=.75cm] (15) at (-.5,-3) {$15$};
    \node[draw,circle,minimum width=.75cm] (14) at (1,-1) {$14$};
    \node[draw,circle,minimum width=.75cm] (7) at (0,-1) {$7$};
    \node[draw,circle,minimum width=.75cm] (5) at (-1,-1) {$5$};
    \node[draw,circle,minimum width=.75cm] (10) at (-1,-2) {$10$};
    \node[draw,circle,minimum width=.75cm] (8) at (-1.5,-3) {$8$};
    \node[draw,circle,minimum width=.75cm] (11) at (-1,-5) {$11$};
    \node[draw,circle,minimum width=.75cm] (9) at (-1,-4) {$9$};
    \node[draw,circle,minimum width=.75cm] (2) at (0,0) {$2$};
    \node[draw,circle,minimum width=.75cm] (6) at (-1,-6) {$6$};

    \draw[-latex] (1) -- (2);
    \draw[-latex] (3) -- (2);
    \draw[-latex] (4) -- (2);
    \draw[-latex] (12) -- (2);
    \draw[-latex] (13) -- (2);
    \draw[-latex] (2) -- (5);
    \draw[-latex] (2) -- (7);
    \draw[-latex] (5) -- (10);
    \draw[-latex] (2) -- (14);
    \draw[-latex] (10) -- (8);
    \draw[-latex] (10) -- (15);
    \draw[-latex] (15) -- (9);
    \draw[-latex] (8) -- (9);
    \draw[-latex] (9) -- (11);
    \draw[-latex] (11) -- (6);
\end{tikzpicture}
{\tiny $(PDP,\rho)\backslash Q-\text{MALLOW'S ERROR}$}
\end{minipage}%
\begin{minipage}[t]{.24\linewidth}
\centering
\begin{tikzpicture}[thick,scale=.6,every node/.style={scale=.4}]
    \node[draw,circle,minimum width=.75cm] (13) at (1,2) {$13$};
    \node[draw,circle,minimum width=.75cm] (12) at (1,1) {$12$};
    \node[draw,circle,minimum width=.75cm] (1) at (2,1) {$1$};
    \node[draw,circle,minimum width=.75cm] (4) at (0,2) {$4$};
    \node[draw,circle,minimum width=.75cm] (3) at (-1,1) {$3$};
    \node[draw,circle,minimum width=.75cm] (15) at (2,-2) {$15$};
    \node[draw,circle,minimum width=.75cm] (14) at (0,-1) {$14$};
    \node[draw,circle,minimum width=.75cm] (7) at (-1.5,-1) {$7$};
    \node[draw,circle,minimum width=.75cm] (5) at (3,-1) {$5$};
    \node[draw,circle,minimum width=.75cm] (10) at (1.5,-1) {$10$};
    \node[draw,circle,minimum width=.75cm] (8) at (1,-2) {$8$};
    \node[draw,circle,minimum width=.75cm] (11) at (1.5,-4) {$11$};
    \node[draw,circle,minimum width=.75cm] (9) at (1.5,-3) {$9$};
    \node[draw,circle,minimum width=.75cm] (2) at (0,0) {$2$};
    \node[draw,circle,minimum width=.75cm] (6) at (1.5,-5) {$6$};

    \draw[-latex] (1) -- (2);
    \draw[-latex] (3) -- (2);
    \draw[-latex] (4) -- (3);
    \draw[-latex] (4) -- (12);
    \draw[-latex] (12) -- (2);
    \draw[-latex] (13) -- (12);
    \draw[-latex] (13) -- (1);
    \draw[-latex] (2) -- (5);
    \draw[-latex] (2) -- (7);
    \draw[-latex] (2) -- (10);
    \draw[-latex] (2) -- (14);
    \draw[-latex] (10) -- (8);
    \draw[-latex] (10) -- (15);
    \draw[-latex] (15) -- (9);
    \draw[-latex] (8) -- (9);
    \draw[-latex] (9) -- (11);
    \draw[-latex] (11) -- (6);
\end{tikzpicture}
{\tiny $(PDP,\rho)\backslash Q-\text{QJ ERROR}$}
\end{minipage}%
}}
\captionof{figure}{The concensus orders for the $(PDP,\rho)\backslash Q$ model on the synthetic data with 1131-1133 structure (simulation 1-4 from left to right). We conclude an edge if such order relation has more than 0.2 posterior probability (inferred from section \ref{sec:reconstruction}). An edge is colored red if it has more than 0.9 posterior probability. }\label{fig:concensus-syn-3133-q}
}

\begin{minipage}{\linewidth}
    \centering
    \begin{tikzpicture}[thick,scale=1, every node/.style={scale=0.8}]

        \node[draw, circle, minimum width=.75cm] (102) at (8,1) {$2$};
        \node[draw, circle, minimum width=.75cm] (104) at (8,0) {$4$};
        \node[draw, circle, minimum width=.75cm] (103) at (7,0) {$3$};
        \node[draw, circle, minimum width=.75cm] (101) at (6, 0) {$1$}; 
        \node[draw, circle, minimum width=.75cm] (105) at (9, 0) {$5$}; 
        \node[draw, circle, minimum width=.75cm] (107) at (7.5,-2.5) {$7$};
        \node[draw, circle, minimum width=.75cm] (108) at (10, 0) {$8$};  
        \node[draw, circle, minimum width=.75cm] (109) at (8.5, -1) {$9$}; 
        \node[draw, circle, minimum width=.75cm] (106) at (7.5, -1.7) {$6$}; 

        \draw[-latex] (102) -- (101);
        \draw[-latex] (102) -- (103);
        \draw[-latex] (102) -- (104);
        \draw[-latex] (102) -- (105);
        \draw[-latex] (102) -- (108);
        \draw[-latex] (103) -- (109);
        \draw[-latex] (104) -- (109);
        \draw[-latex] (105) -- (109);
        \draw[-latex] (108) -- (109);
        \draw[-latex] (101) -- (106);
        \draw[-latex] (109) -- (106);
        \draw[-latex] (106) -- (107);

    \end{tikzpicture}
        \captionof{figure}{The true partial order for list simulation (1131-1133 structured data). }\label{fig:concensus-syn-mallows-0003}
        \vspace*{0.2in}
\end{minipage}

{
\fbox{\resizebox{\dimexpr\linewidth-2\fboxrule-2\fboxsep}{!}{
\begin{minipage}[t]{.2\linewidth}
\centering
\begin{tikzpicture}[thick,scale=.6,every node/.style={scale=.4}]
        \node[draw, circle, minimum width=.75cm] (102) at (8,1) {$2$};
        \node[draw, circle, minimum width=.75cm] (104) at (8,0) {$4$};
        \node[draw, circle, minimum width=.75cm] (103) at (7,0) {$3$};
        \node[draw, circle, minimum width=.75cm] (101) at (6, 0) {$1$}; 
        \node[draw, circle, minimum width=.75cm] (105) at (9, 0) {$5$}; 
        \node[draw, circle, minimum width=.75cm] (107) at (7.5,-2) {$7$};
        \node[draw, circle, minimum width=.75cm] (108) at (10, 0) {$8$};  
        \node[draw, circle, minimum width=.75cm] (109) at (8.5, -2) {$9$}; 
        \node[draw, circle, minimum width=.75cm] (106) at (8, -1) {$6$}; 

        \draw[-latex,red] (102) -- (101);
        \draw[-latex,red] (102) -- (103);
        \draw[-latex,red] (102) -- (104);
        \draw[-latex,red] (102) -- (105);
        \draw[-latex,red] (102) -- (108);
        \draw[-latex] (103) -- (106);
        \draw[-latex] (104) -- (106);
        \draw[-latex] (105) -- (106);
        \draw[-latex] (108) -- (106);
        \draw[-latex] (101) -- (106);
        \draw[-latex] (106) -- (109);
        \draw[-latex] (106) -- (107);
\end{tikzpicture}
{\tiny $(PDP,\rho)\backslash M-\text{LE}$}
\end{minipage}%
\hspace{.5cm}
\begin{minipage}[t]{.2\linewidth}
\centering
\begin{tikzpicture}[thick,scale=.6,every node/.style={scale=.4}]
        \node[draw, circle, minimum width=.75cm] (102) at (8,1) {$2$};
        \node[draw, circle, minimum width=.75cm] (104) at (8,0) {$4$};
        \node[draw, circle, minimum width=.75cm] (103) at (7,0) {$3$};
        \node[draw, circle, minimum width=.75cm] (101) at (6, 0) {$1$}; 
        \node[draw, circle, minimum width=.75cm] (105) at (9, 0) {$5$}; 
        \node[draw, circle, minimum width=.75cm] (107) at (7.5,-2) {$7$};
        \node[draw, circle, minimum width=.75cm] (108) at (10, 0) {$8$};  
        \node[draw, circle, minimum width=.75cm] (109) at (8.5, -2) {$9$}; 
        \node[draw, circle, minimum width=.75cm] (106) at (8, -1) {$6$}; 

        \draw[-latex,red] (102) -- (101);
        \draw[-latex,red] (102) -- (103);
        \draw[-latex,red] (102) -- (104);
        \draw[-latex,red] (102) -- (105);
        \draw[-latex,red] (102) -- (108);
        \draw[-latex] (103) -- (106);
        \draw[-latex] (104) -- (106);
        \draw[-latex] (105) -- (106);
        \draw[-latex] (108) -- (106);
        \draw[-latex] (101) -- (106);
        \draw[-latex] (106) -- (109);
        \draw[-latex] (106) -- (107);
\end{tikzpicture}
{\tiny $(PDP,\rho)\backslash M-\text{ERROR}$}
\end{minipage}%
\hspace{.5cm}
\begin{minipage}[t]{.2\linewidth}
\centering
\begin{tikzpicture}[thick,scale=.6,every node/.style={scale=.4}]
        \node[draw, circle, minimum width=.75cm] (102) at (8.5,1.6) {$2$};
        \node[draw, circle, minimum width=.75cm] (104) at (8,.8) {$4$};
        \node[draw, circle, minimum width=.75cm] (103) at (7,.8) {$3$};
        \node[draw, circle, minimum width=.75cm] (101) at (8.5, 0) {$1$}; 
        \node[draw, circle, minimum width=.75cm] (105) at (9, .8) {$5$}; 
        \node[draw, circle, minimum width=.75cm] (107) at (8,-1.6) {$7$};
        \node[draw, circle, minimum width=.75cm] (108) at (10, .8) {$8$};  
        \node[draw, circle, minimum width=.75cm] (109) at (9, -1.6) {$9$}; 
        \node[draw, circle, minimum width=.75cm] (106) at (8.5, -.8) {$6$}; 

        \draw[-latex,red] (102) -- (103);
        \draw[-latex,red] (102) -- (104);
        \draw[-latex,red] (102) -- (105);
        \draw[-latex,red] (102) -- (108);
        \draw[-latex] (103) -- (101);
        \draw[-latex] (104) -- (101);
        \draw[-latex] (105) -- (101);
        \draw[-latex] (108) -- (101);
        \draw[-latex] (101) -- (106);
        \draw[-latex] (106) -- (109);
        \draw[-latex] (106) -- (107);
\end{tikzpicture}
{\tiny $(PDP,\rho)\backslash M-\text{MALLOW'S ERROR}$}
\end{minipage}%
\begin{minipage}[t]{.24\linewidth}
\centering
\begin{tikzpicture}[thick,scale=.6,every node/.style={scale=.4}]
        \node[draw, circle, minimum width=.75cm] (102) at (0,1.6) {$2$};
        \node[draw, circle, minimum width=.75cm] (104) at (1,0) {$4$};
        \node[draw, circle, minimum width=.75cm] (103) at (0,0) {$3$};
        \node[draw, circle, minimum width=.75cm] (101) at (-1, .8) {$1$}; 
        \node[draw, circle, minimum width=.75cm] (105) at (2, 0) {$5$}; 
        \node[draw, circle, minimum width=.75cm] (107) at (1,-1.6) {$7$};
        \node[draw, circle, minimum width=.75cm] (108) at (1, .8) {$8$};  
        \node[draw, circle, minimum width=.75cm] (109) at (.5, -.8) {$9$}; 
        \node[draw, circle, minimum width=.75cm] (106) at (0, -1.6) {$6$}; 

        \draw[-latex] (102) -- (101);
        \draw[-latex] (102) -- (108);
        \draw[-latex] (108) -- (103);
        \draw[-latex] (108) -- (104);
        \draw[-latex] (108) -- (105);
        \draw[-latex] (105) -- (109);
        \draw[-latex] (104) -- (109);
        \draw[-latex] (103) -- (109);
        \draw[-latex] (101) -- (109);
        \draw[-latex] (109) -- (106);
        \draw[-latex] (109) -- (107);
\end{tikzpicture}
{\tiny $(PDP,\rho)\backslash M-\text{QJ ERROR}$}
\end{minipage}%
}}
\captionof{figure}{The concensus orders for the $(PDP,\rho)\backslash M$ model on the synthetic data with 1100-1103 structure (simulation 1-4 from left to right). We conclude an edge if such order relation has more than 0.2 posterior probability (inferred from section \ref{sec:reconstruction}). An edge is colored red if it has more than 0.9 posterior probability. }\label{fig:concensus-syn-0003-m}
}
\vspace*{0.2in}
{
\fbox{\resizebox{\dimexpr\linewidth-2\fboxrule-2\fboxsep}{!}{
\begin{minipage}[t]{.2\linewidth}
\centering
\begin{tikzpicture}[thick,scale=.6,every node/.style={scale=.4}]
        \node[draw, circle, minimum width=.75cm] (2) at (1,1.4) {$2$};
        \node[draw, circle, minimum width=.75cm] (3) at (1,0.6) {$3$};
        \node[draw, circle, minimum width=.75cm] (8) at (0,0) {$8$};
        \node[draw, circle, minimum width=.75cm] (1) at (-2, 0) {$1$}; 
        \node[draw, circle, minimum width=.75cm] (5) at (0,-.8) {$5$}; 
        \node[draw, circle, minimum width=.75cm] (6) at (0,-1.6) {$6$};
        \node[draw, circle, minimum width=.75cm] (4) at (2, -.8) {$4$};  
        \node[draw, circle, minimum width=.75cm] (7) at (-1, -2.2) {$7$}; 
        \node[draw, circle, minimum width=.75cm] (9) at (1, -2.2) {$9$}; 

        \draw[-latex] (2) -- (3);
        \draw[-latex] (3) -- (4);
        \draw[-latex] (3) -- (1);
        \draw[-latex] (3) -- (8);
        \draw[-latex] (3) -- (4);
        \draw[-latex] (8) -- (5);
        \draw[-latex] (5) -- (6);
        \draw[-latex] (1) -- (6);
        \draw[-latex] (6) -- (7);
        \draw[-latex] (6) -- (9);
        \draw[-latex] (4) -- (6);
\end{tikzpicture}
{\tiny $(PDP,\rho)\backslash Q-\text{LE}$}
\end{minipage}%
\hspace{.5cm}
\begin{minipage}[t]{.2\linewidth}
\centering
\begin{tikzpicture}[thick,scale=.6,every node/.style={scale=.4}]
        \node[draw, circle, minimum width=.75cm] (2) at (1,2) {$2$};
        \node[draw, circle, minimum width=.75cm] (3) at (1,1) {$3$};
        \node[draw, circle, minimum width=.75cm] (8) at (2,0) {$8$};
        \node[draw, circle, minimum width=.75cm] (1) at (-1, 0) {$1$}; 
        \node[draw, circle, minimum width=.75cm] (5) at (1,0) {$5$}; 
        \node[draw, circle, minimum width=.75cm] (6) at (.5,-1) {$6$};
        \node[draw, circle, minimum width=.75cm] (4) at (0, 0) {$4$};  
        \node[draw, circle, minimum width=.75cm] (7) at (0, -2) {$7$}; 
        \node[draw, circle, minimum width=.75cm] (9) at (1, -2) {$9$}; 

        \draw[-latex,red] (2) -- (3);
        \draw[-latex] (3) -- (4);
        \draw[-latex] (3) -- (1);
        \draw[-latex] (3) -- (8);
        \draw[-latex] (3) -- (5);
        \draw[-latex] (3) -- (4);
        \draw[-latex] (8) -- (6);
        \draw[-latex] (5) -- (6);
        \draw[-latex] (1) -- (6);
        \draw[-latex] (6) -- (7);
        \draw[-latex] (6) -- (9);
        \draw[-latex] (4) -- (6);
\end{tikzpicture}
{\tiny $(PDP,\rho)\backslash Q-\text{ERROR}$}
\end{minipage}%
\hspace{.5cm}
\begin{minipage}[t]{.2\linewidth}
\centering
\begin{tikzpicture}[thick,scale=.6,every node/.style={scale=.4}]
        \node[draw, circle, minimum width=.75cm] (2) at (0,1.4) {$2$};
        \node[draw, circle, minimum width=.75cm] (3) at (-1.5,.6) {$3$};
        \node[draw, circle, minimum width=.75cm] (8) at (1.5,0.2) {$8$};
        \node[draw, circle, minimum width=.75cm] (1) at (0, -1) {$1$}; 
        \node[draw, circle, minimum width=.75cm] (5) at (0,.2) {$5$}; 
        \node[draw, circle, minimum width=.75cm] (6) at (-1.5,-2) {$6$};
        \node[draw, circle, minimum width=.75cm] (4) at (-1.5,-.2) {$4$};  
        \node[draw, circle, minimum width=.75cm] (7) at (0, -2) {$7$}; 
        \node[draw, circle, minimum width=.75cm] (9) at (1.5,-2) {$9$}; 

        \draw[-latex] (2) -- (5);
        \draw[-latex] (2) -- (3);
        \draw[-latex] (5) -- (1);
        \draw[-latex] (2) -- (8);
        \draw[-latex] (3) -- (4);
        \draw[-latex] (8) -- (1);
        \draw[-latex] (1) -- (7);
        \draw[-latex] (1) -- (6);
        \draw[-latex] (1) -- (9);
        \draw[-latex] (4) -- (1);
\end{tikzpicture}
{\tiny $(PDP,\rho)\backslash Q-\text{MALLOW'S ERROR}$}
\end{minipage}%
\begin{minipage}[t]{.24\linewidth}
\centering
\begin{tikzpicture}[thick,scale=.6,every node/.style={scale=.4}]
        \node[draw, circle, minimum width=.75cm] (102) at (0,1.2) {$2$};
        \node[draw, circle, minimum width=.75cm] (104) at (0,-0.6) {$4$};
        \node[draw, circle, minimum width=.75cm] (103) at (0,-1.2) {$3$};
        \node[draw, circle, minimum width=.75cm] (101) at (0,.6) {$1$}; 
        \node[draw, circle, minimum width=.75cm] (105) at (0, 0) {$5$}; 
        \node[draw, circle, minimum width=.75cm] (107) at (1.5,-2.2) {$7$};
        \node[draw, circle, minimum width=.75cm] (108) at (0,1.8) {$8$};  
        \node[draw, circle, minimum width=.75cm] (109) at (0, -1.8) {$9$}; 
        \node[draw, circle, minimum width=.75cm] (106) at (-1.5, -2.2) {$6$}; 

        \draw[-latex] (102) -- (101);
        \draw[-latex] (108) -- (102);
        \draw[-latex] (101) -- (105);
        \draw[-latex] (105) -- (104);
        \draw[-latex] (104) -- (103);
        \draw[-latex] (103) -- (109);
        \draw[-latex] (109) -- (106);
        \draw[-latex] (109) -- (107);
\end{tikzpicture}
{\tiny $(PDP,\rho)\backslash M-\text{QJ ERROR}$}
\end{minipage}%
}}
\captionof{figure}{The concensus orders for the $(PDP,\rho)\backslash Q$ model on the synthetic data with 1100-1103 structure (simulation 1-4 from left to right). We conclude an edge if such order relation has more than 0.2 posterior probability (inferred from section \ref{sec:reconstruction}). An edge is colored red if it has more than 0.9 posterior probability. }\label{fig:concensus-syn-0003-q}
}

\end{document}